\newtheorem{thm}{Theorem}[section]
\newtheorem{prop}[thm]{Proposition}
\newtheorem{cor}[thm]{Corollary}
\newtheorem{lem}[thm]{Lemma}
\newtheorem{guess}{Conjecture}[section]
\newtheorem{defn}[thm]{Definition}
\newtheorem{rem}[thm]{Remark}
\numberwithin{equation}{section}
\def\bG{{\mathbb G}}
\def\bL{{\mathbb L}}
\def\bP{{\mathbb P}}
\def\bT{{\mathbb T}}
\def\bU{{\mathbb U}}
\def\bZ{{\mathbb Z}}
\def\A{{\mathbb A}}
\def\C{{\mathbb C}}
\renewcommand{\P}{{\mathbb P}}
\def\Q{{\mathbb Q}}
\def\Z{{\mathbb Z}}
\def\R{{\mathbb R}}
\def\fT{{\mathfrak T}}
\def\cD{{\mathcal D}}
\def\cF{{\mathcal F}}
\def\cH{{\mathcal H}}
\def\cI{{\mathcal I}}
\def\cL{{\mathcal L}}
\def\cM{{\mathcal M}}
\def\cP{{\mathcal P}}
\def\cR{{\mathcal R}}
\def\cT{{\mathcal T}}
\def\cV{{\mathcal V}}
\def\wihat{\widehat}
\newcommand{\fm}{{\mathfrak m}}
\title{Feynman motives and deletion-contraction relations}
\author{Paolo Aluffi}
\author{Matilde Marcolli}
\address{Department of Mathematics \\ 
Florida State University \\
Tallahassee, FL 32306, USA}
\email{aluffi\@@math.fsu.edu}
\address{Department of Mathematics  \\
California Institute of Technology \\ 
Pasadena, CA 91125, USA}
\email{matilde\@@caltech.edu}
\begin{document}

\begin{abstract}
We prove a deletion-contraction formula for motivic Feynman rules given by
the classes of the affine graph hypersurface complement in the Grothendieck 
ring of varieties. 
We derive explicit recursions and generating series for these motivic 
Feynman rules under the operation of multiplying edges in a
graph and we compare it with similar formulae  for the Tutte
polynomial of graphs, both being specializations of the same universal
recursive relation. We obtain similar recursions for graphs that are
chains of polygons and for graphs obtained by replacing an edge by
a chain of triangles. We show that the deletion-contraction relation
can be lifted to the level of the category of mixed motives in the form
of a distinguished triangle, similarly to what happens in categorifications 
of graph invariants.
\end{abstract}

\maketitle

\section{Introduction}

Recently, a series of results (\cite{BroKr}, \cite{BeBro}, 
\cite{CM}, \cite{BEK}) began to reveal the existence of a 
surprising connection between the world of perturbative 
expansions and renormalization procedures in quantum field 
theory and the theory of motives and periods of algebraic 
varieties. This lead to a growing interest in investigating
algebro-geometric and motivic aspects of quantum field theory, 
see \cite{AluMa1}, \cite{AluMa2}, \cite{AluMa3},
\cite{Blo}, \cite{Brown}, \cite{CMbook}, \cite{Doryn}, \cite{Mar1},
\cite{Mar2}, \cite{Mar3}, for some recent developments.
Some of the main questions in the field revolve around the
motivic nature of projective hypersurfaces associated to
Feynman graphs. It is known by a general result of \cite{BeBro}
that these hypersurfaces generate the Grothendieck ring of
varieties, which means that, for sufficiently complicated graphs, 
they can become arbitrarily complex as motives. However, one
would like to identify explicit conditions on the Feynman graphs that 
ensure that the numbers obtained by evalating the contribution
of the corresponding Feynman integral can be described in
algebro geometric terms as periods of a sufficiently simple form,
that is, periods of mixed Tate motives. The reason to expect that this
will be the case for significant classes of Feynman graphs lies
in extensive databases of calculations of such integrals (see
\cite{BroKr}) which reveal the pervasive appearance of multiple zeta
values. 

In \cite{AluMa1}, \cite{AluMa2}, \cite{AluMa3} 
we approached the question of understanding the motivic properties of
the hypersurfaces of Feynman graphs from the point of view of
singularity theory. In fact, the graph hypersurfaces are typically
highly singular, with singularity locus of low codimension. This
has the effect that their motivic nature can often be simpler than
what one would encounter in dealing with smooth hypersurfaces. This
makes it possible to control the motivic complexity in terms of
invariants that can measure effectively how singular the hypersurfaces
are. To this purpose, in \cite{AluMa2} we looked at algebro-geometric
objects that behave like Feynman rules in quantum field theory, in the
sense that they have the right type of multiplicative behavior over disjoint
unions of graphs and the right type of decomposition relating
one-particle-irreducible (1PI) graphs and more general connected
graphs. The simplest example of such algebro-geometric Feynman rules 
is the affine hypersurface complement associated to a Feynman graph.
This behaves like the expectation value of a quantum field theory
whose edge propagator is the Tate motive $\Q(1)$. Another 
algebro-geometric Feynman rule we constructed in \cite{AluMa2} is
based on characteristic classes of singular varieties, assembled
in the form of a polynomial $C_\Gamma(T)\in \Z[T]$ associated to
a Feynman graph $\Gamma$. 

In this paper, we investigate the dependence of these
algebro-geometric Feynman rules on the underlying combinatorics of the
graphs. Our approach is based on deletion--contraction relations,
that is, formulae relating the invariant of a graph to that of
the graphs obtained by either deleting or contracting an edge.
The results we present in this paper will, in particular, answer
a question on deletion--contraction relations for Feynman 
rules asked by Michael Falk to the first author during the
Jaca conference, which motivated us to consider this problem.

It is well known that certain polynomial invariants of graphs, such as
the Tutte polynomial and various invariants obtained from it by
specializations, satisfy deletion--contraction relations. These are
akin to the skein relations for knot and link invariants, and make it
possible to compute inductively the invariant for arbitrary graphs,
by progressively reducing it to simpler graphs with fewer edges.

We first show, in \S \ref{FeyPolySec}, that the Tutte polynomial and its
specializations, the Tutte--Grothendieck invariants, define abstract
Feynman rules in the sense of \cite{AluMa2}. We observe that this
suggests possible modifications of these invariants based on applying
a Connes--Kreimer style renormalization in terms of Birkhoff
factorization. This leads to modified invariants which may be worthy
of consideration, although they lie beyond the purpose of this paper.

Having seen how the usual deletion--contraction relations of 
polynomial invariants of graphs fit in the language of Feynman rules,
we consider in \S \ref{HypersurfSec} our main object of interest, 
which is those abstract Feynman rules that are of algebro-geometric
and motivic nature, that is, that are defined in terms of the
affine graph hypersurface complement and its class in the
Grothendieck ring of varieties, or its refinement introduced in
\cite{AluMa2}, the ring of immersed conical varieties. We begin
by showing that the polynomial invariant $C_\Gamma(T)$ we
constructed in \cite{AluMa2} in terms of characteristic classes
is not a specialization of the Tutte polynomial, hence it is
likely to be a genuinely new type of graph polynomial which 
may behave in a more refined way in terms of deletion and
contraction. To obtain an explicit deletion--contraction
relation, we consider the universal motivic Feynman rule 
defined by the class $\bU(\Gamma)=[\A^n\smallsetminus \widehat X_\Gamma]$
in the Grothendieck ring of varieties of the complement of
the affine graph hypersurface of a Feynman graph. Our first main
result of the paper is Theorem \ref{delcon} where we show that
$\bU(\Gamma)$ satisfies a deletion--contraction relation of the form
$$ \bU(\Gamma) = \bL \cdot [\A^{n-1} \smallsetminus (\widehat
X_{\Gamma\smallsetminus e} \cap \widehat X_{\Gamma/e})] -
\bU(\Gamma\smallsetminus e), $$
with $\bL$ the Lefschetz motive. In \S \ref{MilnorSec} we
reinterpret this result in terms of linear systems and Milnor fibers. 
This deletion--contraction formula pinpoints rather precisely the
geometric mechanism by which non-mixed Tate motives will start to
appear when the complexity of the graph grows sufficiently. In fact, 
it is the motivic nature of the intersection of the hypersurfaces
$\widehat X_{\Gamma\smallsetminus e} \cap \widehat X_{\Gamma/e}$ that
becomes difficult to control, even when the motives of the two
hypersurfaces separately are known to be mixed Tate.

We then investigate, in \S \ref{OpSec}, certain simple operations 
on graphs, under which one can control explicitly the effect on 
the motivic Feynman rule $\bU(\Gamma)$ using the deletion--contraction
relation. The first such example is the operation that replaces an
edge in a given graph by $m$ parallel copies of the same edge. The
effect on graph hypersurfaces and their classes $\bU(\Gamma)$ of
iterations of this operation can be packaged in the form of a
generating series and a recursion, which is proved using the
deletion--contraction relation. The main feature that makes it
possible to control the whole recursive procedure in this case
is a cancellation that eliminates the class involving the intersection
of the hypersurfaces and expresses the result for arbitrary iterations
as a function of just the classes $\bU(\Gamma)$,
$\bU(\Gamma\smallsetminus e)$ and $\bU(\Gamma/e)$. Our second main
result in the paper is Theorem \ref{paraledges}, which identifies
the recursion formula and the generating function for the motivic
Feynman rules under multiplication of edges in a graph.

As a comparison, we also compute explicitly in \S \ref{multTutteSec} 
the recusion formula satisfied by the Tutte polynomial for this same 
family of operations on graphs given by multiplying edges.

Another example of operations on graphs for which the resulting
$\bU(\Gamma)$ can be controlled in terms of the deletion--contraction
relations is obtained in \S \ref{chainsSec} by looking at graphs 
that are chains of polygons. This case is further reduces in \S
\ref{lemonSec} to the case of ``lemon graphs'' given by chains of
triangles, for which an explicit recursion formula is computed.
This then gives in \S \ref{lemonadeSec} 
a similar recursion formula for the graphs obtained
from a given graph by replacing a chosen edge by a lemon graph.

We then show in \S \ref{UnivSec} that the recursion relations and
generating functions for the motivic Feynman rule and for the Tutte
polynomial under multiplication of edges in a graph are in fact 
closely related. We show that they are both specializations, for
different choice of initial conditions, of the same universal
recursion relation. We formulate a conjecture for a recursion
relation for the polynomial invariant $C_\Gamma(T)$, based on
numerical evidence collected by \cite{Stryker}. It again 
consists of a specialization of the same universal recursion relation,
for yet another choice of initial conditions. 

In the last section we show that one can think of the motive of the
hypersurface complement in the Voevodsky triangulated category of
mixed motives as a {\em categorification} of the invariant
$\bU(\Gamma)$, thinking of motives as a universal cohomology theory and
of classes in the Grothendieck ring as a universal Euler
characteristic. This categorification has properties similar to the
well known categorifications of the Jones polynomial via Khovanov
homology \cite{Khovanov} and of the chromatic polynomial and the Tutte 
polynomial  \cite{HeGuRo}, \cite{JaHeRo} via versions of graph
cohomology. In fact, in all of these cases the deletion--contraction
relations are expressed in the categorification in the form of a long
exact cohomology sequence. We show that the same happens at the
motivic level, in the form of a distinguished triangle in the
triangulated category of mixed motives.

\section{Abstract Feynman rules and polynomial invariants}\label{FeyPolySec}

We recall briefly how the Feynman rules of a perturbative
scalar field theory are defined, as a motivation for a
more general notion of abstract Feynman rule, which we
then describe. The reader who does not wish to see the
physical motivation can skip directly to the algebraic
definition of abstract Feynman rule given in Definition
\ref{abstrFeyRule}, and use that as the starting point.
Since the main results of this paper concern certain
abstract Feynman rules of combinatorial, algebro-geometric,
and motivic nature, the quantum field theoretic notions
we recall here serve only as background and motivation.

\subsection{Feynman rules in perturbative quantum field theory}\label{QFTsec}

In perturbative quantum field theory, the evaluation of
functional integrals computing expectation values of
physical observables is obtained by expanding the
integral in a perturbative series, whose terms are
labeled by graphs, the Feynman graphs of the theory,
whose valences at vertices are determined by the
Lagrangian of the given physical theory. The number
of loops of the Feynman graphs determines how far
one is going into the perturbative series in order
to evaluate radiative corrections to the expectation
value. The contribution of individual graphs to the
perturbative series is determined by the {\em Feynman rules}
of the given quantum field theory. In the case of a
scalar theory, these can be summarized as follows.

A graph $\Gamma$ is a
Feynman graph of the theory if all vertices have 
valence equal to the degree of one of the monomials
in the Lagrangian of the theory. Feynman graphs have
internal edges, which are thought of as matching pairs
of half edges connecting two of the vertices of the graph, 
and external edges, which are unmatched half edges 
connected to a single vertex. A graph is
1-particle-irreducible (1PI) if it 
cannot be disconnected by removal of a single (internal) edge.

We consider a scalar quantum field theory specified
by a Lagrangian of the form
\begin{equation}\label{Lagr}
\cL(\phi) = \int \left( \frac{1}{2} (\partial_\mu \phi)^2 
+ \frac{m^2}{2}\phi^2 + \cP(\phi) \right) dv ,
\end{equation}
where we use Euclidean signature in the metric on the underlying spacetime $\R^D$ 
and the interaction term is a polynomial of the form
\begin{equation}\label{Pphi}
 \cP(\phi) = \sum_k \frac{\lambda_k}{k!} \phi^k .
\end{equation}
In the following we will treat the dimension $D$ of the underlying
spacetime as a variable parameter.

To a connected Feynman graph of a given scalar quantum
field theory one assigns a function $V(\Gamma,p_1,\ldots,p_N)$
of the external momenta in the following way.

Each internal edge $e\in E_{int}(\Gamma)$ contributes a momentum 
variable $k_e\in \R^D$ and the function of the external momenta is
obtained by integrating a certain density function over the momentum
variables of the internal edges,
\begin{equation}\label{VGammapi}
V(\Gamma,p_1,\ldots,p_N)= \int \cI_\Gamma(p_1,\ldots,p_N,k_1,\ldots, k_n) \frac{d^D k_1}{(2\pi)^D}
\cdots \frac{d^D k_n}{(2\pi)^D},
\end{equation}
for $n=\# E_{int}(\Gamma)$. We write $k=(k_e)$ for the collection of
all the momentum variables assigned to the internal edges.

The term $\cI_\Gamma(p_1,\ldots,p_N,k_1,\ldots, k_n)$ is constructed
according to the following procedure.
Each vertex $v\in V(\Gamma)$ contributes a factor of $\lambda_v (2\pi)^D$, where
$\lambda_v$ is the coupling constant of the monomial 
in the interaction term \eqref{Pphi} in the Lagrangian of order equal
to the valence of $v$. One also imposes  
a conservation law on the momenta that flow through a 
vertex, 
\begin{equation}\label{deltav}
\delta_v(k) := \delta(\sum_{s(e)=v} k_e-\sum_{t(e)=v} k_e),
\end{equation}
written after chosing an orientation of the edges of the graph, so
that $s(e)$ and $t(e)$ are the source and target of an edge $e$. When
a vertex is attached to both internal and external edges, the
conservation law \eqref{deltav} at that vertex will be of an analogous
form $\delta_v(k,p)$, involving both the
$k$ variables of the momenta along internal edges and the $p$ variables
of the external momenta. We will
see later that the dependence on the choice of the orientation disappears
in the final form of the Feynman integral.

Each internal edge $e\in E_{int}(\Gamma)$ contributes an inverse propagator, that is, 
a term of the form $q_e^{-1}$, where $q_e$ is a quadratic form, which in the
case of a scalar field in the Euclidean signature is given by 
\begin{equation}\label{propagatorEu}
q_e(k_e)=k_e^2 + m^2.
\end{equation}

Each external edge $e\in E_{ext}(\Gamma)$ contributes an inverse 
propagator $q_e(p_e)^{-1}$,
with $q_e(p_e)=p_e^2 + m^2$. The external momenta are assigned so that they satisfy
the conservation law $\sum_e p_e=0$, when summed over the oriented external edges.

The integrand $\cI_\Gamma(p_1,\ldots,p_N,k_1,\ldots, k_n)$ is then a product
\begin{equation}\label{Iprodrules}
\prod_{v\in V(\Gamma)} \lambda_v (2\pi)^D \, \, \delta_v(k,p) \,\,
\prod_{e\in E_{int}(\Gamma)} q_e(k_e)^{-1}  
\prod_{e\in E_{ext}(\Gamma)} q_e(p_e)^{-1}.
\end{equation}

The Feynman rules defined in this way satisfy two main properties, which
follow easily from the construction described above (see \cite{LeBellac},
\cite{Mar3}).

The Feynman rules are multiplicative over disjoint unions of
graphs (hence one can reduce to considering only connected graphs):
\begin{equation}\label{VGammaUnion}
V(\Gamma,p_1,\ldots,p_{N_1},p_1',\ldots,p_{N_2}')=V(\Gamma_1, p_1,\ldots, p_{N_1}) 
V(\Gamma_2, p_1',\ldots,p_{N_2}'),
\end{equation}
for a disjoint union $\Gamma = \Gamma_1 \cup \Gamma_2$, of two Feynman 
graphs $\Gamma_1$ and $\Gamma_2$, with external momenta
$p=(p_1,\ldots,p_{N_1})$ and $p'=(p_1',\ldots,p_{N_2}')$, respectively.

Any connected graph $\Gamma$ can be obtained from a finite tree $T$
by replacing vertices $v$ of $T$ with 1PI graphs $\Gamma_v$ with number
of external edges equal to the valence of the vertex $v$. Then the
Feynman rules satisfy
\begin{equation}\label{FeyNot1PI}
V(\Gamma,p)= \prod_{v\in V(T), e\in E_{int}(T), v\in\partial(e)} 
V(\Gamma_v,p_v)\,\, q_e(p_v)^{-1}\,\, 
\delta((p_v)_e-(p_{v'})_e).
\end{equation}
The delta function in this expression matches the external momenta
of the 1PI graphs inserted at vertices sharing a common edge.

Up to a factor containing the inverse propagators of the external
edges and the coupling constants of the vertices, we write
$$ V(\Gamma,p_1,\ldots,p_N)= C\varepsilon(p_1,\ldots,p_N)\, 
U(\Gamma,p_1,\ldots,p_N),$$
with $C=\prod_{v\in V(\Gamma)} \lambda_v (2\pi)^D$ and
$\varepsilon(p_1,\ldots,p_N) =\prod_{e\in E_{ext}(\Gamma)}
q_e(p_e)^{-1}$ and the remaining term is
\begin{equation}\label{Uint}
U(\Gamma,p_1,\ldots,p_N) = \frac{\delta(\sum_{i=1}^n \epsilon_{v,i} k_i 
+ \sum_{j=1}^N \epsilon_{v,j} p_j)}{ q_1(k_1)\cdots q_n(k_n) } ,
\end{equation}
where we have written the delta functions $\delta_v(k,p)$ of
\eqref{deltav} equivalently in terms of the edge-vertex 
incidence matrix of the graph, 
$\epsilon_{v,i}=\pm 1$ when $v=t(e)$ or $v=s(e)$ and
$\epsilon_{v,i}=0$ otherwise. The Feynman integrals \eqref{Uint}
still satisfy the two properties \eqref{VGammaUnion} and
\eqref{FeyNot1PI}. 

Notice that the property \eqref{FeyNot1PI} expressing the Feynman
rule for connected graphs in terms of Feynman rules for 1PI graphs
has a simpler form in the case where either all external momenta
are set equal to zero and the theory is massive ($m\neq 0$), or
all external momenta are equal. In such
cases \eqref{FeyNot1PI} reduces to a product
$$ U(\Gamma,p)=U(L)^{\# E_{int}(T)} \prod_{v\in V(T)} U(\Gamma_v,p_v), $$
with $U(L)$ the inverse propagator assigned to a single edge.

\subsection{Abstract Feynman rules}\label{AbsFRsec}

In \cite{AluMa2} we abstracted the two properties of Feynman rules recalled
above and used them to define a class of {\em algebro-geometric} Feynman rules.

More precisely, we defined an abstract Feynman rule in the following
way.

\begin{defn}\label{abstrFeyRule}
An abstract Feynman rule is a map
from the set of (isomorphism classes of) 
finite graphs to a commutative ring $\cR$,
with the property that it is multiplicative over disjoint unions of
graphs,
\begin{equation}\label{Umultipl}
U(\Gamma_1 \cup \Gamma_2) = U(\Gamma_1) U(\Gamma_2),
\end{equation}
and such that, for a connected graph $\Gamma = \cup_{v\in V(T)} \Gamma_v$
obtained by inserting 1PI graphs $\Gamma_v$ at the vertices of a tree $T$,
it satisfies
\begin{equation}\label{U1PIT}
U(\Gamma) = U(L)^{\# E_{int}(T)} \prod_{v\in V(T)} U(\Gamma_v), 
\end{equation} 
where $U(L)$ is the inverse propagator, that is, the value assigned 
to the graph consiting of a single edge.
\end{defn}

The multiplicative property with respect to disjoint
unions of graphs, together with the second property
which implies that 1PI graphs are sufficient to
determine completely the Feynmal rule, means that
an abstract Feynman rule with values in $\cR$ can be
reformulated as a ring homomorphism from a
Hopf algebra $\cH$ of Feynman graphs to $\cR$. In this general
setting, since we are not choosing a particular Lagrangian
of the theory, the Hopf algebra is not the usual Connes--Kreimer
Hopf algebra \cite{CoKr}, which depends on the Lagrangian of a particular
theory but the larger Hopf algebra referred to in \cite{BK}
and \cite{Kr} as the ``core Hopf algebra''. As an algebra
(or a ring) this is a polynomial algebra generated by all
1PI graphs and the coproduct is of the form
\begin{equation}\label{coprod}
\Delta(\Gamma) =\Gamma \otimes 1 + 1 \otimes \Gamma + 
\sum_{\gamma \subset \Gamma} \gamma \otimes \Gamma/\gamma,
\end{equation}
where the sum is over subgraphs whose connected components
are 1PI. The quotient $\Gamma/\gamma$ is obtained by shrinking
each component of $\gamma$ to a single vertex. The Hopf algebra
is graded by loop number (or by number of internal edges) and
the antipode is defined inductively by
$$ S(\Gamma)=-\Gamma - \sum S(\gamma)\, \Gamma/\gamma. $$ 
Notice that the multiplicative property of Feynman rules
only relates to the algebra, not the coalgebra, structure
of the Hopf algebra of Feynman graphs. Where the coproduct
and the antipode enter essentially is in the renormalization
of abstract Feynman rules, which can again be formulated
purely algebraically in terms of a Rota--Baxter structure 
of weight $-1$ on the target ring $\cR$ (see \cite{CoKr2}, \cite{EFGK}).

\medskip

We now show that the notion of abstract Feynman rule is very
natural. In fact, a broad range of classical combinatorial invariants 
of graphs define abstract Feynman rules. 

\subsection{Tutte--Grothendieck polynomials as a Feynman rules}\label{TutteSec}

For a finite graph $\Gamma$, one denotes by $\Gamma \smallsetminus e$
the graph obtained by deleting an edge $e \in E(\Gamma)$ and by
$\Gamma/e$ the graph obtained by contracting an edge $e\in E(\Gamma)$
to a vertex. They are called, respectively, the deletion and
contraction of $\Gamma$ at $e$. A class of invariants of graphs that
behave well with respect to the operations of deletion and contraction
are known as the Tutte--Grothendieck invariants \cite{Bollobas},
\cite{Bryla}, \cite{Crapo}. The terminology comes from the Tutte
polynomial, which is the prototype example of such invariants,
and from the formulation in terms of Grothendieck rings of certain
categories, as in \cite{Bryla}.

Tutte--Grothendieck invariants of graphs are defined as functions
$F(\Gamma)$ from the set of (isomorphism classes of) finite graphs to a
polynomial ring $\C[\alpha,\beta,\gamma,x,y]$ which satisfy the
following properties.
\begin{itemize}
\item $F(\Gamma)=\gamma^{\# V(\Gamma)}$ if the set of edges is 
empty, $E(\Gamma)=\emptyset$.
\item $F(\Gamma)=x F(\Gamma \smallsetminus e)$ if the edge $e \in E(\Gamma)$ is a
bridge.
\item $F(\Gamma) = y F(\Gamma/e)$ if 
$e \in E(\Gamma)$ is a looping edge.
\item For $e\in E(\Gamma)$ not a bridge nor a looping edge,
\begin{equation}\label{delcontr}
F(\Gamma)=\alpha F(\Gamma/e) + \beta F(\Gamma \smallsetminus e).
\end{equation}
\end{itemize}

Recall that an edge is a bridge (or isthmus) if the removal
of $e$ disconnects the graph $\Gamma$. A looping edge is an
edge that starts and ends at the same vertex. The relation
\eqref{delcontr} is the deletion--contraction relation. By
repeatedly applying it until one falls into one of the
other case, this makes it possible to completely determine
the value of a Tutte--Grothendieck invariant for all graphs.

Tutte--Grothendieck invariants are specializations of the
Tutte polynomial. The latter is defined by the properties
that 
\begin{equation}\label{Tutte1}
\cT_\Gamma (x,y)= x^i y^j,
\end{equation}
if the graph $\Gamma$ consists of $i$ bridges, $j$ looping edges
and no other edges, and by the deletion--contraction relation
\begin{equation}\label{Tutte2}
\cT_{\Gamma}(x,y)= \cT_{\Gamma\smallsetminus e}(x,y) + \cT_{\Gamma/e}(x,y).
\end{equation}
Clearly the relation \eqref{Tutte2} together with \eqref{Tutte1}
determine the Tutte polynomial for all graphs. The closed
formula is given by the ``sum over states'' formula
\begin{equation}\label{Tuttepoly}
\cT_{\Gamma}(x,y)= \sum_{\gamma \subset \Gamma} (x-1)^{\#
V(\Gamma)-b_0(\Gamma)-(\# V(\gamma)-b_0(\gamma))}
(y-1)^{\#E(\gamma)-\# V(\gamma) +b_0(\gamma)},
\end{equation}
where the sum is over subgraphs $\gamma \subset \Gamma$ with
vertex set $V(\gamma)=V(\Gamma)$ and edge set 
$E(\gamma)\subset E(\Gamma)$. This can be written equivalently as
$$ \cT_{\Gamma}(x,y)= \sum_{\gamma \subset \Gamma} (x-1)^{b_0(\gamma)-b_0(\Gamma)}
(y-1)^{b_1(\gamma)}. $$
An equivalent way to express the recursive relations computing
the Tutte polynomials is the following:
\begin{itemize}
\item If $e\in E(\Gamma)$ is neither a looping edge nor a bridge
the deletion--contraction relation \eqref{Tutte2} holds.
\item If $e\in E(\Gamma)$ is a looping edge then
$$ \cT_\Gamma(x,y)= y \cT_{\Gamma/e}(x,y). $$
\item If $e\in E(\Gamma)$ is a bridge then
$$ \cT_\Gamma(x,y)= x \cT_{\Gamma \smallsetminus e}(x,y) $$
\item If $\Gamma$ has no edges then $T_\Gamma(x,y)=1$.
\end{itemize}

A Tutte--Grothendieck invariant satisfying \eqref{delcontr} is then
obtained from the Tutte polynomial by specialization
\begin{equation}\label{Fspecialize}
 F(\Gamma)= \gamma^{b_0(\Gamma)} \alpha^{\# V(\Gamma)-b_0(\Gamma)} 
\beta^{b_1(\Gamma)} \, \cT_\Gamma
(\frac{\gamma x}{\alpha}, \frac{y}{\beta}). 
\end{equation}

Among the invariants that can be obtained as 
specializations of the Tutte polynomial are the chromatic polynomial of graphs
and the Jones polynomial of links, viewed as an invariant of
an associated planar graph, \cite{Thist}, \cite{Jaeger}. 

The chromatic polynomial $P(\Gamma,\lambda)$ is a specialization of the
Tutte polynomial through
\begin{equation}\label{chromatic}
P(\Gamma,\lambda)= (-1)^{\# V(\Gamma) - b_0(\Gamma)} \lambda^{b_0(\Gamma)}
\cT_\Gamma(1-\lambda,0). 
\end{equation}
In the case of an alternating link $L$, the Jones polynomial is a
specialization of the associated (positive) checkerboard graph
$\Gamma_+$ by
$$ J(L,t)= (-1)^w t^{(\# V(\Gamma_-)-\# V(\Gamma_+)+3w)/4} \,\,
\cT_{\Gamma_+}(-t,-1/t), $$
with $w$ the writhe (algebraic crossing number) and $\Gamma_\pm$ the
positive and negative checkerboard graphs associated to $L$,
\cite{Bollobas}. 

\begin{prop}\label{TutteFeynman}
The Tutte polynomial invariant defines an abstract Feynman rule
with values in the polynomial ring $\C[x,y]$, by assigning
\begin{equation}\label{UTutte}
U(\Gamma) = \cT_\Gamma(x,y), \ \ \ \text{ with inverse propagator } \ \
\  U(L)=x. 
\end{equation}
Similarly, any Tutte--Grothendieck invariant determines
an abstract Feynman rule with values in $\C[\alpha,\beta,\gamma,x,y]$
by assigning $U(\Gamma)=F(\Gamma)$ with inverse propagator $U(L)=x$.
\end{prop}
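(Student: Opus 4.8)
The plan is to verify the two defining properties of an abstract Feynman rule (Definition \ref{abstrFeyRule}) directly from the defining properties of the Tutte polynomial, and then deduce the statement for arbitrary Tutte--Grothendieck invariants via the specialization formula \eqref{Fspecialize}.

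First I would check multiplicativity \eqref{Umultipl}. For a disjoint union $\Gamma = \Gamma_1 \cup \Gamma_2$, one may compute $\cT_\Gamma$ either from the closed formula or by induction on the number of edges. The cleanest route is the ``sum over states'' formula: a spanning subgraph $\gamma \subset \Gamma$ is the same as a pair $(\gamma_1,\gamma_2)$ of spanning subgraphs of $\Gamma_1$ and $\Gamma_2$, and since $b_0$ and $b_1$ are additive over disjoint unions, the exponents $b_0(\gamma)-b_0(\Gamma)$ and $b_1(\gamma)$ split as sums, so the summand factors and $\cT_\Gamma(x,y) = \cT_{\Gamma_1}(x,y)\,\cT_{\Gamma_2}(x,y)$. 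This gives \eqref{Umultipl} with $U(\Gamma) = \cT_\Gamma(x,y)$.

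Next I would verify the 1PI decomposition property \eqref{U1PIT}. The key observation is that each of the basic operations building up a connected graph from a tree of 1PI pieces is a bridge-type operation: if $\Gamma$ is obtained by joining two graphs at a vertex, or more generally by connecting 1PI blocks along the edges of a tree $T$, then each internal edge $e$ of $T$ is a bridge of $\Gamma$, and contracting it does not change $\cT$ while deleting it splits off a factor. Using the rule $\cT_\Gamma(x,y) = x\,\cT_{\Gamma\smallsetminus e}(x,y)$ for a bridge $e$ repeatedly --- once for each of the $\# E_{int}(T)$ tree-edges --- together with the multiplicativity just established (after the bridges are removed the graph becomes the disjoint union of the blocks $\Gamma_v$), one obtains exactly $\cT_\Gamma(x,y) = x^{\# E_{int}(T)} \prod_{v} \cT_{\Gamma_v}(x,y)$. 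Since a single edge $L$ is a bridge with no other edges, \eqref{Tutte1} gives $\cT_L(x,y) = x$, so $U(L) = x$ and this is precisely \eqref{U1PIT}. The small technical point to be careful about is that ``inserting 1PI graphs at vertices of a tree'' should be unwound so that one sees the relevant edges really are bridges; this is the step most likely to require a careful bookkeeping argument, but it is essentially the standard fact that the Tutte polynomial is multiplicative over blocks (2-connected components), with a factor of $x$ for each bridge.

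Finally, for a general Tutte--Grothendieck invariant $F$, I would invoke the specialization formula \eqref{Fspecialize}: $F(\Gamma) = \gamma^{b_0(\Gamma)}\alpha^{\# V(\Gamma)-b_0(\Gamma)}\beta^{b_1(\Gamma)}\,\cT_\Gamma(\gamma x/\alpha,\, y/\beta)$. The prefactors $\gamma^{b_0}$, $\alpha^{\# V - b_0}$, $\beta^{b_1}$ are each multiplicative over disjoint unions (again because $b_0$, $b_1$, $\# V$ are additive), and $\cT_\Gamma$ is multiplicative by the above, so $F$ is multiplicative, giving \eqref{Umultipl}. For \eqref{U1PIT}, note that in passing from $\Gamma$ to the disjoint union $\sqcup_v \Gamma_v$ by deleting the $\# E_{int}(T)$ bridges, $b_1$ is unchanged, $b_0$ increases by $\# E_{int}(T)$ (a tree on $\# V(T)$ vertices has $\# V(T)-1$ edges, and there are $\#E_{int}(T)=\#V(T)-1$ of them, so $b_0$ goes from $1$ to $\# V(T)$), and $\# V$ is unchanged; tracking these exponents through \eqref{Fspecialize} and using the value $F(L) = \gamma^{b_0(L)}\alpha^{\#V(L)-b_0(L)}\beta^{b_1(L)}\cT_L(\gamma x/\alpha, y/\beta)$ for the single edge, a direct bookkeeping shows $F(\Gamma) = F(L)^{\#E_{int}(T)}\prod_v F(\Gamma_v)$, with $U(L) = F(L)$. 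In fact it is cleaner to verify \eqref{U1PIT} for $F$ directly from the defining properties of a Tutte--Grothendieck invariant --- a bridge contributes a factor $x$ under deletion, and an edgeless connected component contributes $\gamma$ --- rather than through the Tutte polynomial; I would present whichever is shorter. The only genuine obstacle is the combinatorial identification of the tree-edges as bridges in the 1PI decomposition; everything else is a routine propagation of multiplicative exponents.
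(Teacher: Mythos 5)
Your argument is correct and follows essentially the same route as the paper: multiplicativity from the sum-over-states formula \eqref{Tuttepoly}, the 1PI property \eqref{U1PIT} from the bridge rule applied to the $\#E_{int}(T)$ tree edges followed by multiplicativity on the resulting disjoint union, and the Tutte--Grothendieck case via the specialization formula \eqref{Fspecialize} or directly from the defining axioms. One small correction to your bookkeeping in the last step: deleting the bridges produces the factor $x^{\#E_{int}(T)}$, not $F(L)^{\#E_{int}(T)}$ --- tracking the exponents in \eqref{Fspecialize} gives $F(L)=\gamma^2 x$, so the inverse propagator must be taken to be the ring element $x$, as in the proposition's statement, rather than the value of $F$ on a single edge.
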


\proof It suffices to check that the Tutte polynomial is
multiplicative over disjoint unions of graphs and that,
under the decomposition of connected graphs into a tree 
with 1PI graphs inserted at the vertices, it satisfies the
property \eqref{U1PIT}. The multiplicative property is
clear from the closed expression \eqref{Tuttepoly}, since
for $\Gamma = \Gamma_1 \cup \Gamma_2$ we can identify subgraphs
$\gamma \subset \Gamma$ with $V(\gamma)=V(\Gamma)$ and
$E(\gamma)\subset E(\Gamma)$ with all possible pairs of
subgraphs $(\gamma_1,\gamma_2)$ with $V(\gamma_i)=V(\Gamma_i)$
and $E(\gamma_i)\subset E(\Gamma_i)$, with $b_0(\gamma)=
b_0(\gamma_1)+b_0(\gamma_2)$, $\# V(\Gamma)=\# V(\Gamma_1) + \#
V(\Gamma_2)$, and $\# E(\gamma)=\# E (\gamma_1)+\# E(\gamma_2)$.
Thus, we get
$$ \begin{array}{rl}
\cT_\Gamma(x,y)= & \sum_{\gamma=(\gamma_1,\gamma_2)} 
(x-1)^{b_0(\gamma_1)+b_0(\gamma_2)-b_0(\Gamma)}
(y-1)^{b_1(\gamma_1)+b_1(\gamma_2)} \\[2mm] = &  \cT_{\Gamma_1}(x,y) \,
\cT_{\Gamma_2}(x,y). \end{array} $$
The second property for connected and 1PI graphs follows 
from the fact that, when writing a connected graph in the
form $\Gamma = \cup_{v\in V(T)} \Gamma_v$, with $\Gamma_v$ 1PI
graphs inserted at the vertices of the tree $T$, the internal edges of
the tree are all bridges in the resulting graph, hence the
property of the Tutte polynomial for the removal of bridges
gives
$$ \cT_\Gamma(x,y)= x^{\# E_{int}(T)}\,\, \cT_{\Gamma\smallsetminus \cup_{e\in
E_{int}(T)} e}(x,y). $$
Then one obtains an abstract Feynman rule with values in $\cR=\C[x,y]$
of the form \eqref{UTutte}.

In fact, the multiplicative property follows from the same
property for the Tutte polynomial and the specialization
formula \eqref{Fspecialize}. The case of connected and 1PI
graphs again follow from the property of Tutte--Grothendieck
invariants that $F(\Gamma)=x F(\Gamma \smallsetminus e)$,
when $e\in E(\Gamma)$ is a bridge, exactly in the same 
way as in the case of $\cT_\Gamma(x,y)$.
\endproof

This implies that both the chromatic and the Jones polynomial,
for instance, can be regarded as abstract Feynman rules.

As we have mentioned above, whenever the ring $\cR$ where
an abstract Feynman rule takes values has the structure
of a Rota--Baxter algebra of weight $-1$, the Feynman rule
can be renormalized. This works in the following way.

A Rota--Baxter operator of weight $\lambda$ is a linear 
operator $\fT: \cR \to \cR$ satisfying 
\begin{equation}\label{RBop}
\fT(x) \fT(y)= \fT(x\fT(y)) +\fT(\fT(x)y) + \lambda \fT(xy).
\end{equation} 
In the case where $\lambda=-1$, such an operator determines
a decomposition of the ring $\cR$ into two commutative
unital rings $\cR_\pm$ defined by $\cR_+=(1-\fT)\cR$ and
$\cR_-$ the ring obtained by adjoining a unit to the
nonunital $\fT \cR$. 
An example of $(\cR,\fT)$ is given by
Laurent series with the projection onto the polar part. 

The Connes--Kreimer interpretation \cite{CoKr} \cite{CoKr2} 
of the BPHZ renormalization procedure as a Birkhoff factorization
of loops with values in the affine group scheme dual to the
Hopf algebra of Feynman graphs can be formulated equivalently
in terms of the Rota--Baxter structure \cite{EFGK}. The 
Connes--Kreimer recursive formula for the Birkhoff factorization
of an algebra homomorphism $U: \cH \to \cR$ is given as in \cite{CoKr} by 
\begin{equation}\label{Birkhoff}
\begin{array}{rl}
U_-(\Gamma) = & - \displaystyle{ \fT \left( U(\Gamma) + \sum_{\gamma \subset \Gamma}
U_-(\gamma) U(\Gamma/\gamma) \right) } \\[3mm]
U_+(\Gamma) = & \displaystyle{ (1-\fT) \left( U(\Gamma) + \sum_{\gamma \subset \Gamma}
U_-(\gamma) U(\Gamma/\gamma) \right)} .
\end{array}
\end{equation}
Of these, the $U_-$ term gives the counterterms and the $U_+$ gives
the renormalized value.

Thus, for example, when one considers specializations of the Tutte
polynomial such as the Jones polynomial, which take values in a
ring of Laurent series, one can introduce a renormalized version
of the invariant obtained by performing the Birkhoff factorization
of the corresponding character of the Hopf algebra of Feynman graphs. 
It would be interesting to see if properties of the coefficients
of the Jones polynomial, such as the fact that they are not finite
type invariants, may be affected by this renormalization procedure.

\section{Graph hypersurfaces and deletion-contraction relations}\label{HypersurfSec}

In \cite{AluMa2} we considered in particular abstract Feynman rules
that are algebro-geometric or motivic, which means that they factor
through the information on the affine hypersurface defined by the
Kirchhoff polynomial of the graph, which appears in the parametric
form of Feynman integrals. We recall here how the graph hypersurfaces
are defined and how they arise in the original context of parametric
Feynman integrals. We recall from \cite{AluMa2} how one can use the
affine hypersurface complement to define algebro-geometric and 
motivic Feynman rules, and we then prove that motivic Feynman rules
satisfy a more complicated variant of the deletion--contraction
relation discussed above.

\subsection{Parametric Feynman integrals and graph hypersurfaces}\label{ParFIsec}

The Feynman rules \eqref{Uint} for a
scalar quantum field theory can be reformulated in terms of
Feynman parameters (see \cite{BjDr2}, \cite{ItZu}) in the form
of an integral of an algebraic differential form on a cycle
with boundary in the complement of a hypersurface defined by
the vanishing of the graph polynomial. The parametric form
of the Feynman integral, in the massless case $m=0$, is given by 
\begin{equation}\label{UGammaPPsi}
U(\Gamma,p_1,\ldots,p_N) =
\frac{\Gamma(n-\frac{D\ell}{2})}{(4\pi)^{D\ell/2}} 
\int_{\sigma_n}
\frac{P_\Gamma(t,p)^{-n+D\ell/2}\omega_n}{\Psi_\Gamma(t)^{-n +D(\ell+1)/2}} ,
\end{equation}
where $n=\# E_{int}(\Gamma)$ and $\ell = b_1(\Gamma)$. The domain
of integration is the simplex $\sigma_n = \{ t \in \R_+^n\,|\, \sum_i
t_i =1 \}$.
The Kirchhoff--Symanzik polynomial $\Psi_\Gamma(t)$ is given by 
\begin{equation}\label{SpanTrees}
\Psi_\Gamma(t) = \sum_{T\subset \Gamma} \prod_{e\notin E(T)} t_e,
\end{equation}
where the sum is over all the spanning forests $T$ (spanning trees
in the connected case) of the graph $\Gamma$ and for
each spanning forest the product is over all edges of $\Gamma$ that are not in that
spanning forest.
The polynomial $P_\Gamma(t,p)$, often referred to as the second
Symanzik polynomial, is similarly defined in terms of
the combinatorics of the graph, using cut sets instead of spanning
trees, and it depends explicitly on the external momenta of the
graph (see \cite{BjDr2} \S 18). 
In the following, we assume for simplicity to work in the
``stable range'' where $-n+D\ell/2 >0$. In this case, further 
assuming that for general choice of the external momenta the
polynomials $\Psi_\Gamma(t)$ and $P_\Gamma(t,p)$ do not have
common factors, the parametric Feynman integral \eqref{UGammaPPsi}
is defined in the complement of the hypersurface
\begin{equation}\label{hatXGamma}
\wihat X_\Gamma = \{ t \in \A^n \,|\, \Psi_\Gamma(t) =0 \}.
\end{equation}
Since $\Psi_\Gamma$ is homogeneous of degree $\ell$, one can
reformulate the period computation in projective space in terms
of the hypersurface $$X_\Gamma =\{ t \in \P^{n-1}\,|\,
\Psi_\Gamma(t)=0\},$$ see \cite{BEK}. 
Up to a divergent Gamma-factor, one is then interested in understanding the 
nature of the remaining integral (the residue of the Feynman graph)
\begin{equation}\label{UGammaPPsiNoGamma}
 \int_{\sigma_n}
\frac{P_\Gamma(t,p)^{-n+D\ell/2}\omega_n}{\Psi_\Gamma(t)^{-n +D(\ell+1)/2}} .
\end{equation}
viewed (possibly after eliminating divergences) as a period of an
algebraic variety. The complexity of the period depends on the motivic
complexity of the part of the cohomology of the algebraic variety that
is involved in the period evaluation. In this case, the integration is
on the domain $\sigma_n$ with boundary $\partial \sigma_n \subset
\wihat\Sigma_n$ contained in the divisor $\wihat\Sigma_n \subset \A^n$ 
given by the union of coordinate hyperplanes $\wihat\Sigma_n =\{ t\in
\A^n\,|\, \prod_i t_i =0 \}$, hence the relative cohomology involved
in the period computation is 
\begin{equation}\label{relcohom}
H^{n-1}(\P^{n-1}\smallsetminus X_\Gamma, \Sigma_n \smallsetminus
(X_\Gamma \cap \Sigma_n)),
\end{equation}
where $\Sigma_n$ is the divisor of coordinate hyperplanes in $\P^{n-1}$.
The main question one would like to address is under what conditions
on the graph this cohomology is a realization of a mixed Tate motive,
which in turn gives a strong bound on the complexity of the periods.

A way to understand the motivic complexity of \eqref{relcohom} is to
look at classes in the Grothendieck ring of varieties. A result of
Belkale--Brosnan \cite{BeBro} shows that the classes $[X_\Gamma]$ of
the graph hypersurfaces generate the Grothendieck ring, so they can
be arbitrarily complex as motives and not only of mixed Tate type.
It is still possible, however, that the piece of the cohomology 
involved in \eqref{UGammaPPsiNoGamma} may still be mixed Tate even if
$X_\Gamma$ itself contains non-mixed Tate strata. 

\subsection{Algebro-geometric and motivic Feynman rules}\label{MotFRsec}

Coming back to abstract Feynman rules, we observed in \cite{AluMa2}
that the affine hypersurface complement $\A^n \smallsetminus 
\wihat X_\Gamma$ behaves like a Feynman rule, in the sense that it
satisfies the multiplicative property under disjoint unions of graphs
\begin{equation}\label{prodAffCompl}
\A^n \smallsetminus \wihat X_\Gamma = (\A^{n_1} \smallsetminus \wihat
X_{\Gamma_1}) \times (\A^{n_2} \smallsetminus \wihat X_{\Gamma_2}),
\end{equation}
for $\Gamma = \Gamma_1 \cup \Gamma_2$ a disjoint union. The
role of the inverse propagator is played by the affine line $\A^1$.

We introduced in \cite{AluMa2} a {\em Grothendieck ring of immersed
conical varieties} $\cF$ which is generated by the equivalence classes
$[\wihat X ]$ up to linear changes of coordinates of varieties $\wihat X
\subset \A^N$ embedded in some affine space, that are defined by 
homogeneous ideals (affine cones over projective varieties), with
the usual inclusion--exclusion relation
$$ [\wihat X]= [\wihat Y] + [\wihat X\smallsetminus \wihat Y] $$
for $\wihat Y \subset \wihat X$ a closed embedding. This maps to the
usual Grothendieck ring of varieties $K_0(\cV)$ by passing to isomorphisms
classes of varieties. 

We then defined in \cite{AluMa2} algebro-geometric and motivic 
Feynman rules in the following way.

\begin{defn}\label{motFeyRules}
An algebro geometric Feynman rule is an abstract Feynman rule
$U: \cH \to \cR$, which factors through the Grothendieck ring 
of immersed conical varieties, 
\begin{equation}\label{UAffcompl}
U(\Gamma) = I([\A^n\smallsetminus \wihat X_\Gamma]),
\end{equation}
where $[\A^n\smallsetminus \wihat X_\Gamma]$ is the class in $\cF$
of the affine graph hypersurface complement and $I: \cF \to \cR$ 
is a ring homomorphism.
A motivic Feynman rule is an abstract Feynman rule that similarly
factors through the Grothendieck ring of varieties as in
\eqref{UAffcompl} with $[\A^n\smallsetminus \wihat X_\Gamma]$ the class
in $K_0(\cV)$ and a ring homomorphism $I: K_0(\cV)\to \cR$.
\end{defn}

It is then natural to ask whether these abstract Feynman rules, like
the examples of abstract Feynman rules we have described in \S
\ref{TutteSec} above, satisfy deletion--contraction relations.
We show in \S \ref{delconMotSec} below that there is a 
deletion--contraction relation for
the graph hypersurfaces and their classes in the Grothendieck ring,
which is, however, of a more subtle form than the one satisfied by
Tutte--Grothendieck invariants. We first show that the polynomial
invariant of graphs we introduced in \cite{AluMa2} as an example
of an algebro-geometric Feynman rule which is not motivic (it
does not factor through the Grothendieck ring) is {\em not} 
a specialization of the Tutte polynomial. 

\subsection{The Chern--Schwartz--MacPherson Feynman rule}\label{CSMsec}

In particular, we constructed in \cite{AluMa2} an algebro-geometric 
Feynman rule given by a polynomial invariant $C_\Gamma(T)$ constructed
using Chern--Schwartz--MacPherson characteristic classes of singular
varieties. Without going into the details of the definition and
properties of this invariant, for which we refer the reader to 
\cite{AluMa2}, we just mention briefly how it is obtained. One
obtains a ring homomorphism $I_{CSM}: \cF \to \Z[T]$ from the
ring of immersed conical varieties to a polynomial ring by 
assigning to the class $[\wihat X]$ of a variety in $\cF$ the polynomial
$$ I_{CSM}([\wihat X])= a_0 + a_1 T + \cdots a_N T^N $$
where $\wihat X \subset \A^N$ (viewed as a locally closed subscheme of $\P^N$) 
has Chern--Schwartz--MacPherson (CSM) class
$$ c_*(1_{\wihat X})=a_0[\P^0] + a_1 [\P^1] + \cdots a_N [\P^N] $$
in the Chow group (or homology) of $\P^N$. It is shown in
\cite{AluMa2} that this is well defined and is indeed a ring
homomorphism, which involves some careful analysis of the behavior
of CSM classes for joins of projective varieties.
One then defines the polynomial invariant of graphs as
$$ C_\Gamma(T)=I_{CSM}([\A^n\smallsetminus \wihat
X_\Gamma]). $$
It is natural to ask whether this polynomial invariant
may be a specialization of the Tutte polynomial.
We show in the remaining of this section that this is {\em not} the
case: the invariant $C_\Gamma(T)$ is not a specialization of the
Tutte polynomial, hence it appears to be a genuinely new invariant of graphs.

\begin{prop}\label{NoTutte}
The polynomial invariant $C_\Gamma(T)$ is not a specialization
of the Tutte polynomial.
\end{prop}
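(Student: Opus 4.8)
The plan is to exhibit two graphs that have the same Tutte polynomial but different values of $C_\Gamma(T)$; this immediately shows that $C_\Gamma(T)$ cannot be a function of $\cT_\Gamma(x,y)$, hence not a specialization. The natural candidates are very small graphs, since the invariant $C_\Gamma(T)$ is computable there from the affine hypersurface complement. The first thing I would do is recall that the Tutte polynomial only depends on the (cycle) matroid of the graph, so any two graphs with isomorphic matroids — for instance, a graph and its dual, or two different ``parallel/series'' reorganizations — have the same Tutte polynomial; even more simply, the Tutte polynomial is insensitive to how the $1$PI pieces are glued along a tree. So I would look for two graphs $\Gamma_1,\Gamma_2$ with the same number of edges, same $b_0$, same $b_1$, same matroid, but genuinely different graph hypersurfaces $\widehat X_{\Gamma_i}$.

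Concretely, the simplest workable example: compare the graph consisting of three edges in parallel between two vertices (the ``banana'' with $n=3$, $\ell=2$) against a triangle with one doubled edge, or against a graph built as a tree with small $1$PI insertions. One cleaner route is to use the behavior under the tree-gluing operation \eqref{U1PIT}: the Tutte polynomial of a graph obtained by joining $1$PI pieces $\Gamma_v$ along a tree $T$ is $x^{\# E_{int}(T)}\prod_v \cT_{\Gamma_v}$, which depends only on the multiset $\{\cT_{\Gamma_v}\}$ and the number of bridges, not on the combinatorial shape of $T$ or on which vertices of the pieces are identified. So if I take two $1$PI graphs with the same Tutte polynomial but different $C_\Gamma(T)$, or take the same collection of pieces glued in two inequivalent ways, I get equal Tutte polynomials. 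The task then reduces to computing $C_\Gamma(T)$ in a couple of explicit small cases using the formula $C_\Gamma(T)=I_{CSM}([\A^n\smallsetminus\widehat X_\Gamma])$, i.e.\ computing the CSM class of the complement of a low-degree hypersurface in $\A^n$ — for $n=3$ or $4$ this is a finite computation with the graph polynomial $\Psi_\Gamma$ written down from spanning trees via \eqref{SpanTrees}.

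I would therefore: (1) pick an explicit pair $\Gamma_1,\Gamma_2$ (small, e.g.\ the ``triangle with a parallel edge'' versus ``two triangles sharing a vertex'' type, or a parallel-edges graph versus a series reconfiguration with the same matroid); (2) check by direct inspection or by \eqref{Tuttepoly} that $\cT_{\Gamma_1}=\cT_{\Gamma_2}$; (3) write down $\Psi_{\Gamma_1}$ and $\Psi_{\Gamma_2}$, compute (or recall from \cite{AluMa2}) the classes $[\A^{n}\smallsetminus\widehat X_{\Gamma_i}]$ and apply $I_{CSM}$ to get the two polynomials $C_{\Gamma_i}(T)$; (4) observe they differ. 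The main obstacle is step (3): computing the CSM class of a singular hypersurface complement is not a black box, so I want to choose the example small enough that the hypersurface is either smooth, a union of hyperplanes, or otherwise has a CSM class already recorded — e.g.\ using that for a union of coordinate hyperplanes or a cone over a smooth conic the CSM class is known, and using multiplicativity of $I_{CSM}$ over the join/product decompositions coming from the tree-gluing. If a direct pair is awkward, a safe fallback is to use a one-parameter family such as the banana graphs $\Gamma_m$ (edges in parallel), for which Section~\ref{OpSec} will give closed formulas for $\bU$ and (conjecturally) $C$, against graphs with the same matroid type, and simply compare low-degree coefficients. Either way the computation is finite and elementary; the cleverness is entirely in picking the right small example so that no hard CSM computation is needed.
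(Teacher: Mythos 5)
Your overall logic is sound --- if $C_\Gamma(T)=\cT_\Gamma(x(T),y(T))$, then two graphs with equal Tutte polynomials must have equal $C$-invariants, so one such pair with $C_{\Gamma_1}\neq C_{\Gamma_2}$ would settle the matter --- but the concrete examples you propose to use are guaranteed to fail, and this is a genuine gap. The Kirchhoff polynomial $\Psi_\Gamma=\sum_T\prod_{e\notin E(T)}t_e$ is a sum over bases of the cycle matroid, so two graphs with isomorphic cycle matroids have the \emph{same} polynomial up to a permutation of the variables $t_e$; hence their affine hypersurfaces differ by a linear change of coordinates, their classes in the ring $\cF$ of immersed conical varieties coincide, and $C_{\Gamma_1}(T)=C_{\Gamma_2}(T)$ automatically. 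This rules out every ``same matroid'' pair (duals, Whitney twists, series/parallel reorganizations, one-point joins vs.\ disjoint unions). Likewise, your tree-regluing idea fails because $C_\Gamma$ is itself an abstract Feynman rule: by \eqref{U1PIT} it equals $(T+1)^{\#E_{int}(T)}\prod_v C_{\Gamma_v}$, which, exactly like the Tutte polynomial, depends only on the multiset of $1$PI pieces and the number of tree edges, not on how they are glued. What survives of your strategy is a pair of graphs with non-isomorphic cycle matroids but equal Tutte polynomials; such pairs exist, but they are not small, and for them step (3) --- computing the CSM class of a singular graph hypersurface complement in many variables --- is precisely the hard computation you were hoping to avoid and is not available in the literature you can cite.

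The paper's argument is both shorter and avoids this entirely: the bridge and looping-edge behavior of $C_\Gamma$ (namely $C_\Gamma=(T+1)C_{\Gamma\smallsetminus e}$ for a bridge and $C_\Gamma=T\,C_{\Gamma/e}$ for a loop) \emph{forces} any putative specialization to have $x(T)=T+1$ and $y(T)=T$, and then a single small graph --- the triangle, with $C_\Gamma(T)=T(T+1)^2$ versus $\cT_\Gamma(T+1,T)=(T+1)^2+(T+1)+T$ --- gives the contradiction. If you want to salvage your approach, you would need to first prove that no Tutte-equivalent pair can distinguish $C$ is false, i.e.\ actually produce and compute such a pair; as written, the proposal does not constitute a proof.
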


\proof We show that one cannot find functions $x=x(T)$ and
$y=y(T)$ such that 
$$ C_\Gamma(T) =  \cT_\Gamma(x(T),y(T)). $$
First notice that, if $e\in E(\Gamma)$ is a bridge, the polynomial
$C_\Gamma(T)$ satisfies the relation 
\begin{equation}\label{CGammabridge}
C_\Gamma(T)= (T+1) C_{\Gamma \smallsetminus e}(T).
\end{equation}
In fact, $(T+1)$ is the inverse propagator of the algebro-geometric
Feynman rule $U(\Gamma)=C_\Gamma(T)$ and the property of abstract
Feynman rules for 1PI graphs connected by a bridge gives
\eqref{CGammabridge}. In the case where $e\in E(\Gamma)$ is a looping
edge, we have
\begin{equation}\label{CGammaloop}
C_\Gamma(T)= T \,\, C_{\Gamma/e}(T).
\end{equation}
In fact, adding a looping edge to a graph corresponds, in terms of
graph hypersurfaces, to taking a cone on the graph hypersurface and
intersecting it with the hyperplane defined by the coordinate of the
looping edge. This implies that the universal algebro-geometric
Feynman rule with values in the Grothendieck ring $\cF$ of immersed
conical varieties satisfies $$\bU(\Gamma)=([\A^1]-1) \bU(\Gamma/e)$$
if $e$ is a looping edge of $\Gamma$ and
$\bU(\Gamma)=[\A^n\smallsetminus \wihat X_\Gamma] \in \cF$. The
property \eqref{CGammaloop} then follows since the image of
the class $[\A^1]$ is the inverse propagator $(T+1)$. (See
Proposition 2.5 and \S 2.2 of \cite{AluMa2}.)

This implies that, if $C_\Gamma(T)$ has to be a specialization of
the Tutte polynomial, the relations for bridges and looping edges
imply that one has to identify $x(T)=T+1$ and $y(T)=T$. However,
this is not compatible with the behavior of the invariant
$C_\Gamma(T)$ on more complicated graphs. For example, for the
triangle graph one has $C_\Gamma(T)=T(T+1)^2$ while the
specialization $\cT_\Gamma(x(T),y(T))=(T+1)^2+(T+1)+T$.
\endproof

The reason for this discrepancy is the fact that, while any
algebro-geometric or motivic Feynman rule will have the same behavior
as the Tutte polynomial for looping edges and bridges, the
more general deletion--contraction relation does not hold.
The class $[\A^n\smallsetminus \wihat X_\Gamma]$ in the
Grothendieck ring of varieties $K_0(\cV)$ satisfies a
more subtle deletion--contraction relation, which we now describe.

\subsection{Deletion--contraction for motivic Feynman rules}\label{delconMotSec}

We begin by considering a more general situation, which we
then specialize to the case of the graph hypersurfaces. In this
general setting, we consider two homogeneous polynomials $F$ and 
$G$ of degree $\ell-1$ and $\ell$, respectively, in variables 
$t_1,\dots,t_{n-1}$, with $n\ge 2$. Let
\begin{equation}\label{psiFG}
\psi(t_1,\ldots,t_n)=t_n F(t_1,\dots,t_{n-1})+G(t_1,\dots,t_{n-1}).
\end{equation}
Thus, $\psi$ is homogeneous of degree~$\ell$ in $t_1,\dots,t_n$.
Assume that both $F$ and~$G$ and not identically zero, so that 
it makes sense to consider the hypersurfaces defined by these polynomials. 
Cases where either $F$ or $G$ are zero are easily analyzed 
separately. We denote then by $X$ and $Y$ the projective
hypersurfaces in $\P^{n-1}$ and $\P^{n-2}$, respectively, determined
by $\psi$ and $F$. We denote by $\overline Y$ the cone of $Y$ in
$\P^{n-1}$, that is, the hypersurface defined in $\P^{n-1}$ by the
same polynomial $F$.

\begin{thm}\label{isoXYthm}
With notation as above, the projection from the point $(0:\dots:0:1)$
induces an isomorphism
\begin{equation}\label{isoXY}
X\smallsetminus (X\cap \overline Y) \overset\sim\longrightarrow 
\P^{n-2}\smallsetminus Y .
\end{equation}
\end{thm}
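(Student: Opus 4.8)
The plan is to construct the inverse map explicitly and verify that it is a morphism, using the special form \eqref{psiFG} of $\psi$. The projection $\pi$ from the point $P = (0:\dots:0:1)$ is the rational map $\P^{n-1} \dashrightarrow \P^{n-2}$ sending $(t_1:\dots:t_n) \mapsto (t_1:\dots:t_{n-1})$, defined away from $P$. First I would check that $P$ itself lies on $X$ but \emph{not} on $\overline Y$: indeed $\psi(0,\dots,0,1) = G(0,\dots,0) = 0$ since $G$ is homogeneous of positive degree $\ell$, whereas if $F(0,\dots,0,1)=F(0,\dots,0)$ were zero this would merely say $0=0$ — so in fact I must be slightly more careful and observe that $P \notin \overline Y$ is \emph{false} in general; rather the point is that $X \cap \overline Y$ is exactly the locus where $\pi$ fails to be defined \emph{as a map to the target we want}. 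So the correct observation is: on $X \smallsetminus \overline Y$ we have $F(t_1,\dots,t_{n-1}) \ne 0$, which in particular forces $(t_1,\dots,t_{n-1}) \ne (0,\dots,0)$, so $\pi$ is a well-defined morphism there, landing in $\P^{n-2} \smallsetminus Y$ (since $F \ne 0$ at the image).

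Next I would build the candidate inverse $\sigma \colon \P^{n-2} \smallsetminus Y \to \P^{n-1}$. Given a point with $F(t_1,\dots,t_{n-1}) \ne 0$, the equation $\psi(t_1,\dots,t_{n-1},t_n) = t_n F + G = 0$ has the unique solution $t_n = -G/F$; homogenizing, I would set
\begin{equation}\label{sigmadef}
\sigma(t_1:\dots:t_{n-1}) = \bigl(t_1 F(t):\dots:t_{n-1}F(t): -G(t)\bigr),
\end{equation}
where $F(t) = F(t_1,\dots,t_{n-1})$ and similarly for $G$. Since $F$ has degree $\ell - 1$, the first $n-1$ entries are homogeneous of degree $\ell$, matching $\deg G = \ell$, so $\sigma$ is a well-defined morphism on the locus $F \ne 0$; and the image lands on $X$ because $\psi$ evaluated at \eqref{sigmadef} equals $(-G)\cdot F^{\ell-1}\cdot F + G \cdot F^{\ell} = F^{\ell}(-G + G) \cdot (\text{scaling}) $ — more precisely, plugging into \eqref{psiFG} and using homogeneity of $F$ and $G$ in the scaled coordinates $t_i F(t)$, one gets $F(t)^{\ell}\bigl(-G(t) \cdot F(t)^{?} + G(t) \cdot F(t)^{?}\bigr)$; I would just do this degree bookkeeping carefully to confirm it vanishes. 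Moreover the image avoids $\overline Y$ because the last coordinate would then also have to satisfy $F=0$, forcing all coordinates to vanish.

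Finally I would verify $\pi \circ \sigma = \id$ and $\sigma \circ \pi = \id$ on the respective open sets. The first is immediate: $\pi(\sigma(t_1:\dots:t_{n-1})) = (t_1 F : \dots : t_{n-1} F) = (t_1:\dots:t_{n-1})$ since $F \ne 0$. For the second, starting from $(t_1:\dots:t_n) \in X \smallsetminus \overline Y$, applying $\pi$ then $\sigma$ gives $(t_1 F : \dots : t_{n-1} F : -G)$ where $F, G$ are evaluated at $(t_1,\dots,t_{n-1})$; using that the original point lies on $X$, i.e. $t_n F + G = 0$, hence $-G = t_n F$, this equals $(t_1 F : \dots : t_{n-1} F : t_n F) = (t_1 : \dots : t_n)$ since $F \ne 0$. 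This closes the argument.

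I do not expect a genuine obstacle here — the statement is essentially the classical fact that a linear projection from a point off a hypersurface, restricted suitably, is birational, and the explicit form \eqref{psiFG} makes the inverse completely transparent. The only point requiring a little care is the degree/homogeneity bookkeeping in \eqref{sigmadef} and in checking the image lies on $X$, together with making sure at each stage that one stays on the open sets where $F \ne 0$ so that everything is an honest morphism rather than just a rational map; that is where I would be most attentive, but it is routine.
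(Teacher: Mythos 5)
Your argument is correct, and it reaches the conclusion by a slightly different route than the paper. Both proofs hinge on the same observation — that $\psi=t_nF+G$ is linear in $t_n$ with coefficient $F$, so over a point where $F\ne 0$ there is exactly one solution $t_n=-G/F$ — but the paper packages this as a statement about scheme-theoretic fibers (the line through $p=(0:\dots:0:1)$ and $q$ meets $X$ in a single reduced point because $tF(q)+G(q)=0$ has degree exactly one in $t$), whereas you write down the inverse explicitly as the morphism $\sigma(t_1:\dots:t_{n-1})=(t_1F:\dots:t_{n-1}F:-G)$, given by forms of equal degree $\ell$ with no common zero on $\{F\ne 0\}$, and check both composites. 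Your version is arguably more self-contained, since it avoids having to invoke that a map with reduced one-point fibers is an isomorphism. Two small points to tighten: (i) your justification that the image of $\sigma$ avoids $\overline Y$ is garbled as written — the correct reason is simply that $F$ evaluated at the first $n-1$ coordinates of $\sigma(t)$ equals $F(t)^{\ell}\ne 0$ by homogeneity, which is the same computation that shows $\psi(\sigma(t))=-GF^{\ell}+F^{\ell}G=0$; and (ii) the step ``$F\ne 0$ forces $(t_1,\dots,t_{n-1})\ne(0,\dots,0)$'' uses $\deg F>0$, so the degenerate case where $F$ is a nonzero constant (i.e.\ $\deg\psi=1$, when $Y=\overline Y=\emptyset$ and the claim is trivial) should be disposed of separately, as the paper does at the outset.
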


\proof
The projection $\P^{n-1} \dashrightarrow \P^{n-2}$ from 
$p=(0:\dots:0:1)$ acts as 
\[
(t_1:\dots:t_n) \mapsto (t_1:\dots:t_{n-1}).
\]
If $F$ is constant (that is, if $\deg \psi=1$), then 
$Y=\overline Y=\emptyset$ and the
statement is trivial. Thus, assume $\deg F>0$. In this case,
$\psi(p)=F(p)=0$, hence $p\in X\cap \overline Y$, and hence
$p\not\in X\smallsetminus (X\cap \overline Y)$. 
Therefore, the projection restricts to a regular map
\[
X\smallsetminus (X\cap \overline Y) \to \P^{n-2} .
\]
The image is clearly contained in $\P^{n-2}\smallsetminus Y$,
and the statement is that this map induces an {\em isomorphism\/}
\[
X\smallsetminus (X\cap \overline Y) \overset\sim\longrightarrow
\P^{n-2}\smallsetminus Y\quad.
\]
To see this, it suffices to verify that the (scheme-theoretic) inverse
image of any $q\in \P^{n-2}\smallsetminus Y$ is a (reduced) point
in $X\smallsetminus (X\cap \overline Y)$. Equivalently, one shows 
that the line through $p$ and $q$ meets $X\smallsetminus (X\cap \overline Y)$
transversely at one point.
Let then $q=(q_1:\dots: q_{n-1})$. The line from $p$ to $q$ is 
parametrized by 
\[
(q_1:\dots:q_{n-1}:t).
\]
Intersecting with $X$ gives the equation
\[
t F(q_1:\dots:q_{n-1})+G(q_1:\dots:q_{n-1})=0 .
\]
Since $F(q)\ne 0$, this is a polynomial of degree exactly~$1$ in $t$, and
determines a reduced point, as needed.
\endproof

This general results has some useful consequences at the level of 
classes in the Grothendieck ring $K_0(\cV)$ and of Euler
characteristic. 

\begin{cor}\label{projvers}
In the Grothendieck ring of varieties,
\begin{equation}\label{XYGro1}
[\P^{n-1}\smallsetminus X] =
[\P^{n-1}\smallsetminus (X\cap\overline Y)]
-[\P^{n-2}\smallsetminus Y] .
\end{equation}
If $\deg X>1$, then 
\begin{equation}\label{XYGro2}
[\P^{n-1}\smallsetminus X] =
\bL\cdot [\P^{n-2}\smallsetminus (Y\cap Z)]
-[\P^{n-2}\smallsetminus Y],
\end{equation}
where $\bL=[\A^1]$ is the Lefschetz motive and
$Z$ denotes the hypersurface $G=0$.
\end{cor}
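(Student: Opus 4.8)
The plan is to deduce Corollary \ref{projvers} directly from the isomorphism \eqref{isoXY} of Theorem \ref{isoXYthm}, by repeated use of the scissor (inclusion--exclusion) relation in $K_0(\cV)$. First I would establish \eqref{XYGro1}. The key observation is the decomposition $\P^{n-1}\smallsetminus X = (\P^{n-1}\smallsetminus(X\cap\overline Y)) \setminus (X\smallsetminus(X\cap\overline Y))$, where $X\smallsetminus(X\cap\overline Y)$ is a locally closed subvariety of $\P^{n-1}\smallsetminus(X\cap\overline Y)$: indeed $X\smallsetminus(X\cap\overline Y)$ is closed inside the open set $\P^{n-1}\smallsetminus\overline Y$, and $\P^{n-1}\smallsetminus\overline Y \subseteq \P^{n-1}\smallsetminus(X\cap\overline Y)$. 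Hence in $K_0(\cV)$ one has
\[
[\P^{n-1}\smallsetminus X] = [\P^{n-1}\smallsetminus(X\cap\overline Y)] - [X\smallsetminus(X\cap\overline Y)],
\]
and substituting the isomorphism $X\smallsetminus(X\cap\overline Y)\cong \P^{n-2}\smallsetminus Y$ from Theorem \ref{isoXYthm} gives \eqref{XYGro1} immediately. The hypothesis that $F$ and $G$ are not identically zero guarantees the hypersurfaces in question are genuine, and the degenerate constant-$F$ case is covered since then $Y=\emptyset$ and the statement reduces to a triviality.

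Next I would establish \eqref{XYGro2} under the assumption $\deg X = \ell > 1$, i.e. $\deg F = \ell - 1 > 0$. The idea is to analyze $[\P^{n-1}\smallsetminus(X\cap\overline Y)]$ directly using the geometry of the equations. On $\overline Y$ we have $F = 0$, so on $\overline Y$ the polynomial $\psi = t_n F + G$ restricts to $G$; therefore $X\cap\overline Y$, as a subscheme of $\overline Y$, is cut out by $G=0$, i.e. $X\cap\overline Y = \overline Y \cap Z$ where $Z = \{G=0\}$. Now $\overline Y$ is the cone over $Y$ with vertex $p=(0:\cdots:0:1)$, so $\overline Y \setminus \{p\}$ fibers over $Y$ with fiber $\A^1$ (the affine line being the ruling minus the vertex), and this $\A^1$-bundle is Zariski-locally trivial over $Y$. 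The point $p$ itself lies on $Z$ (since $G$ is homogeneous of positive degree $\ell>0$), hence on $X\cap\overline Y$. So I would compute $[\overline Y \setminus (X\cap\overline Y)] = [\overline Y \setminus (\overline Y\cap Z)]$ by stratifying via the projection away from $p$: the complement of $Z$ inside $\overline Y$ maps to $\P^{n-2}$, lands in $Y\smallsetminus(Y\cap Z)$ (a point of $Y$ whose ruling line avoids $Z$ entirely except possibly at $p$, but $p\in Z$ always, so the fiber over such a point is exactly $\A^1$ minus one point... ) — here care is needed, which I flag below.

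The main obstacle is exactly this fiberwise count: for $q\in Y$, the line $\overline{pq}\subseteq\overline Y$ meets $Z$ in a scheme cut out by $t_n F(q) + \cdots$; but since $q\in Y$ we have $F(q)=0$, so on this line $\psi$ and $G$ agree and the intersection with $Z$ is $\{G(q_1:\cdots:q_{n-1}:t)=0\}$, a degree-$\ell$ condition in the affine coordinate $t$ that is independent of $t$ (as $G$ does not involve $t_n$!). Hence $G$ restricted to the ruling line is constant equal to $G(q)$: if $q\in Z$ the whole line lies in $Z$, and if $q\notin Z$ the line meets $Z$ only at the vertex $p$. Therefore $\overline Y\setminus(\overline Y\cap Z)$ fibers over $Y\setminus(Y\cap Z)$ with fiber $\A^1\setminus\{\text{pt}\} = \A^1 - 1$ in $K_0$... but we actually want the cleaner statement. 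Let me instead observe directly: $\overline Y \setminus (\overline Y \cap Z) \to \P^{n-2}\setminus Z$ — no. The clean route: $[\P^{n-1}\setminus(X\cap\overline Y)] = [\P^{n-1}\setminus\overline Y] + [\overline Y\setminus(\overline Y\cap Z)]$, and $[\P^{n-1}\setminus\overline Y]=\bL\cdot[\P^{n-2}\setminus Y]$ since the cone complement is an $\A^1$-bundle over $\P^{n-2}\setminus Y$, while $[\overline Y\setminus(\overline Y\cap Z)] = \bL\cdot[\P^{n-2}\setminus(Y\cup Z)]$ by the same cone-bundle structure (the vertex $p\in Z$ is removed, restoring a full $\A^1$ fiber over each $q\in Y\setminus Z$ — wait, $q\notin Z$ means the line meets $Z$ at $p$ only, so the fiber is $\A^1\setminus\{0\}$, not $\A^1$). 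So one gets $\bL - 1$...

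Let me restate the obstacle honestly: the bookkeeping of whether the vertex $p$ contributes a removed point to each ruling fiber is the delicate step, and the way to finesse it is to work throughout with $\P^{n-1}\setminus X$ rather than with cone complements, using that $p\in X$ (as $\psi(p)=F(p)=0$ when $\deg F>0$). Concretely: $\P^{n-1}\setminus X$ maps to $\P^{n-2}$ by projection from $p$; over $q\notin Y\cap Z$ (equivalently, where not both $F(q)=0$ and $G(q)=0$) the fiber is the ruling line minus its intersection with $X$, which is $\A^1$ minus the single reduced point $\{t_nF(q)+G(q)=0\}$, hence $\cong \A^1\setminus\{\text{pt}\}$ when $F(q)\neq 0$; and over $q\in Y\setminus(Y\cap Z)$ the fiber is all of $\A^1$ (the line meets $X$ only at $p$). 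Over $q\in Y\cap Z$ the whole line lies in $X$, contributing nothing. Assembling: $[\P^{n-1}\setminus X] = (\bL-1)[\P^{n-2}\setminus Y] + \bL[Y\setminus(Y\cap Z)]$. Then $[Y\setminus(Y\cap Z)] = [\P^{n-2}\setminus(Y\cap Z)] - [\P^{n-2}\setminus Y]$, so $[\P^{n-1}\setminus X] = (\bL-1)[\P^{n-2}\setminus Y] + \bL[\P^{n-2}\setminus(Y\cap Z)] - \bL[\P^{n-2}\setminus Y] = \bL[\P^{n-2}\setminus(Y\cap Z)] - [\P^{n-2}\setminus Y]$, which is \eqref{XYGro2}. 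So the whole argument reduces to the fiberwise analysis of the projection from $p$, and the only real subtlety — the one I expect to spend the most care on — is checking that $G$ is constant along each ruling line of $\overline Y$ over $Y$ (because $G$ does not involve $t_n$), which is what forces the clean split into the $\A^1$ fibers and $(\A^1\setminus\text{pt})$ fibers above.
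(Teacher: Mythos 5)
Your proof is correct. For \eqref{XYGro1} it coincides with the paper's argument: $X\smallsetminus(X\cap\overline Y)$ is closed in $\P^{n-1}\smallsetminus(X\cap\overline Y)$ because $X$ is closed in $\P^{n-1}$, so the scissor relation together with the isomorphism of Theorem~\ref{isoXYthm} gives the identity at once. For \eqref{XYGro2} your final assembly is a mild but genuine variant. Both arguments rest on the same two facts --- the ideal identity $(\psi,F)=(F,G)$, i.e.\ $X\cap\overline Y=\overline Y\cap\overline Z$, and the behavior of the ruling lines through $p=(0:\cdots:0:1)$ --- but the paper deduces \eqref{XYGro2} from \eqref{XYGro1} by noting that $\P^{n-1}\smallsetminus(\overline Y\cap\overline Z)$ is an $\A^1$-bundle over $\P^{n-2}\smallsetminus(Y\cap Z)$ (the vertex $p$ lies in the cone because $\deg F>0$), hence has class $\bL\cdot[\P^{n-2}\smallsetminus(Y\cap Z)]$, whereas you compute $[\P^{n-1}\smallsetminus X]$ directly by stratifying the projection from $p$ over the three strata $\P^{n-2}\smallsetminus Y$, $Y\smallsetminus(Y\cap Z)$, $Y\cap Z$, obtaining $(\bL-1)[\P^{n-2}\smallsetminus Y]+\bL\,[Y\smallsetminus(Y\cap Z)]$ and rearranging. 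This bypasses \eqref{XYGro1} entirely for the second identity; what it buys is that the roles of the two special fibers ($\A^1$ minus a point over $\P^{n-2}\smallsetminus Y$, all of $\A^1$ over $Y\smallsetminus(Y\cap Z)$) are made completely explicit, which is essentially the linear-system picture the paper develops later in Proposition~\ref{affversLinSys}. One caution: the abandoned detour in the middle of your writeup contains a genuinely false identity, namely $[\overline Y\smallsetminus(\overline Y\cap\overline Z)]=\bL\cdot[\P^{n-2}\smallsetminus(Y\cup Z)]$; the correct class is $\bL\cdot[Y\smallsetminus(Y\cap Z)]$, since for $q\in Y\smallsetminus Z$ the ruling line meets $\overline Z$ only at the vertex $p$, which is already excluded from the cone minus its vertex, so the fiber is all of $\A^1$ and the base of the bundle is $Y\smallsetminus(Y\cap Z)$, not a complement in $\P^{n-2}$. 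Since you discard that route and the final computation is self-contained and sound, this does not affect the validity of the proof, but the final writeup should omit the false starts.
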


\proof 
The equality \eqref{XYGro1} is an immediate consequence of Theorem~\ref{isoXYthm}.
For the second, notice that the ideal of $X\cap \overline Y$ is
\[
(\psi,F)=(t_n F+G,F)=(F,G).
\]
This means that 
\begin{equation}\label{XcapYvsYcapZ}
X\cap \overline Y=\overline Y\cap \overline Z .
\end{equation}
If $\deg X>1$, then $F$ is not constant, hence $\overline Y
\ne\emptyset$. It then follows that
$\overline Y\cap \overline Z$ contains the point $p=(0:\dots:0:1)$. 
The fibers of the projection
\[
\P^{n-1}\smallsetminus (\overline Y\cap \overline Z)
\to \P^{n-2}\smallsetminus Y
\]
with center $p$ are then all isomorphic to $\A^1$, and it 
follows that
\[
[\P^{n-1}\smallsetminus (\overline Y\cap \overline Z)]
=\bL \cdot [\P^{n-2}\smallsetminus (Y\cap Z)].
\]
This verifies the equality \eqref{XYGro2}.
\endproof

For a projective algebraic set $S\subseteq \P^{N-1}$, we
denote by $\widehat S$ the corresponding affine cone
$\widehat S\subseteq \A^N$, that is, the (conical) subset 
defined in affine space by the ideal of $S$. (Care must be
taken if $S=\emptyset$, as the corresponding cone may be the
empty set or the `origin', depending on how $S$ is defined.)

We then have the following ``affine version'' of the statement
of Corollary \ref{projvers}, where we no longer need any restriction
on $\deg X$.

\begin{cor}\label{affvers}
\begin{align*}
[\A^n\smallsetminus \widehat X] &=
[\A^n\smallsetminus \widehat {X\cap\overline Y}]
-[\A^{n-1}\smallsetminus \widehat Y] \\
&=\bL\cdot [\A^{n-1}\smallsetminus (\widehat Y\cap\widehat Z)]
-[\A^{n-1}\smallsetminus \widehat Y] .
\end{align*}
\end{cor}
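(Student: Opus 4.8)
The statement is the "affine cone" version of Corollary~\ref{projvers}, and the natural approach is to deduce it from the projective statement by the standard relation between a cone and its base, namely $[\widehat S] = (\bL-1)[S] + 1$ for a nonempty projective set $S\subseteq \P^{N-1}$, which gives $[\A^N \smallsetminus \widehat S] = (\bL-1)[\P^{N-1}\smallsetminus S]$. Applying this to $X$, to $X\cap\overline Y$, and to $Y$ (in $\P^{n-2}$, so the cone lives in $\A^{n-1}$) would convert \eqref{XYGro1} directly into the first line of Corollary~\ref{affvers} after multiplying through by $(\bL-1)$. The subtlety flagged in the paper (the parenthetical warning about $S=\emptyset$) is exactly where care is needed: the formula $[\widehat S]=(\bL-1)[S]+1$ holds when $\widehat S$ is the cone over a nonempty $S$, but if $S=\emptyset$ one must decide whether $\widehat S$ is empty or the origin, and the bookkeeping differs by the class of a point. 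So the first task is to handle the degenerate cases ($F$ constant, i.e. $\deg\psi=1$; or $G\equiv 0$) separately, and otherwise assume $\deg F>0$ so that all the relevant projective sets are nonempty and the clean cone formula applies uniformly.

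**Key steps, in order.** First, I would dispose of the trivial cases: if $F$ is a nonzero constant then $Y=\overline Y=\emptyset$ and $\psi = t_n F + G$ is linear in $t_n$, so $\widehat X \cong \A^{n-1}$ (a hyperplane-like conical set after the coordinate change $t_n \mapsto t_n - G/F$ — but more cleanly, directly by projection), giving $[\A^n\smallsetminus\widehat X] = \bL^n - \bL^{n-1} = \bL^{n-1}(\bL-1)$, and one checks both right-hand sides reduce to this. Second, assuming $\deg F>0$: invoke Theorem~\ref{isoXYthm}, which gives $\widehat X \smallsetminus \widehat{X\cap\overline Y} \cong$ (the affine cone over) $\P^{n-2}\smallsetminus Y$ punctured appropriately — more precisely, the isomorphism \eqref{isoXY} lifts to an isomorphism of the affine cones minus their vertices, and one adds back the origin on each side. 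Third, translate using $[\A^N\smallsetminus\widehat S] = (\bL-1)[\P^{N-1}\smallsetminus S]$ for each of $X\subseteq\P^{n-1}$, $X\cap\overline Y\subseteq\P^{n-1}$, and $Y\subseteq\P^{n-2}$; multiplying \eqref{XYGro1} by $(\bL-1)$ yields the first equality of the corollary. Fourth, for the second equality, use \eqref{XcapYvsYcapZ}, i.e. $X\cap\overline Y = \overline Y\cap\overline Z$, and the fact (from the proof of Corollary~\ref{projvers}) that the projection $\P^{n-1}\smallsetminus(\overline Y\cap\overline Z)\to\P^{n-2}\smallsetminus(Y\cap Z)$ is an $\A^1$-bundle; at the cone level this becomes an $\A^1$-bundle $\A^n\smallsetminus(\widehat Y\cap\widehat Z)\to\A^{n-1}\smallsetminus(\widehat Y\cap\widehat Z)$ away from the axis, and after the usual origin bookkeeping one gets $[\A^n\smallsetminus(\widehat Y\cap\widehat Z)] = \bL\cdot[\A^{n-1}\smallsetminus(\widehat Y\cap\widehat Z)]$, whence the claim. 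Alternatively — and this is probably cleaner — one avoids the $\deg X>1$ hypothesis that Corollary~\ref{projvers} needed by arguing directly in affine space: the map $\A^n\smallsetminus(\widehat Y\cap\widehat Z)\to\A^{n-1}\smallsetminus(\widehat Y\cap\widehat Z)$ forgetting $t_n$ is a trivial $\A^1$-bundle because $\widehat Y\cap\widehat Z$ is a cone in the first $n-1$ coordinates alone, so its complement is $(\text{something})\times\A^1$.

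**The main obstacle.** The genuinely delicate point is the origin/vertex bookkeeping when passing between projective sets and their affine cones, since several of the sets in play may be empty or low-dimensional and the paper has explicitly warned that $\widehat\emptyset$ is ambiguous. The safe route is to note that for \emph{any} conical $\widehat S\subseteq\A^N$ containing the origin, $[\A^N\smallsetminus\widehat S]$ depends only on $[\P^{N-1}\smallsetminus S]$ via the clean formula $(\bL-1)[\P^{N-1}\smallsetminus S]$, provided we consistently take $\widehat S$ to \emph{include} the origin whenever the defining ideal is a proper homogeneous ideal — and then every cone appearing (namely $\widehat X$, $\widehat{X\cap\overline Y}$, $\widehat Y$, $\widehat Y\cap\widehat Z$) contains the origin, so the inclusion–exclusion relations in $\cF$ (and hence in $K_0(\cV)$) all carry the same "$+[\text{pt}]$" correction which cancels in the differences. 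I would therefore state and prove a short lemma to that effect first, and then the two equalities of Corollary~\ref{affvers} follow from \eqref{XYGro1} and \eqref{XcapYvsYcapZ} by multiplying by $(\bL-1)$ and by the $\A^1$-bundle argument respectively, with the degenerate cases checked by hand.
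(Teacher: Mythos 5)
Your proposal is correct and follows essentially the same route as the paper: for $\deg X>1$ both equalities are obtained from Corollary~\ref{projvers} by multiplying through by $\bL-1$, using $[\A^N\smallsetminus\widehat S]=(\bL-1)[\P^{N-1}\smallsetminus S]$ for cones containing the origin, and the case $\deg X=1$ is checked by hand. Your extra care about the origin bookkeeping and the alternative direct $\A^1$-bundle argument for the second equality are sound but not needed beyond what the paper's one-line verification already covers.
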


\proof
If $\widehat S$ contains the origin, then it is immediately seen
that
\[
[\A^N\smallsetminus \widehat S]=(\bL-1)\cdot
[\P^{N-1}\smallsetminus S] .
\]
If $\deg X>1$, then $\deg F>0$, hence $\widehat{X\cap \overline Y}$
and $\widehat Y\cap \widehat Z$ contain the origin. In this case, 
both equalities in the statement follow from the corresponding equalities
in Corollary~\ref{projvers} by just multiplying through by $\bL-1$. 
If $\deg X=1$, then the equalities are immediately checked by hand.
\endproof

Corollary \ref{projvers} also implies 
the following relation between the Euler characteristics.

\begin{cor}\label{EulXYZ}
If $\deg X>1$, then $\chi(X)=\chi(Y\cap Z)-\chi(Y)+n$.
\end{cor}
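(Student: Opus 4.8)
The plan is to deduce the Euler characteristic identity directly from Corollary~\ref{projvers} by applying the fact that the topological Euler characteristic factors through the Grothendieck ring $K_0(\cV)$. Concretely, there is a ring homomorphism $\chi\colon K_0(\cV)\to\Z$ sending $[V]\mapsto\chi(V)$ (for $\C$-varieties, using Euler characteristic with compact supports, which agrees with the ordinary one and is motivic), with $\chi(\bL)=\chi(\A^1)=1$. First I would apply $\chi$ to both sides of \eqref{XYGro2}, which gives
\[
\chi(\P^{n-1})-\chi(X)=\chi(\bL)\bigl(\chi(\P^{n-2})-\chi(Y\cap Z)\bigr)-\bigl(\chi(\P^{n-2})-\chi(Y)\bigr).
\]

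Next I would substitute the standard values $\chi(\P^{n-1})=n$ and $\chi(\P^{n-2})=n-1$ and use $\chi(\bL)=1$. The right-hand side becomes $(n-1)-\chi(Y\cap Z)-(n-1)+\chi(Y)=\chi(Y)-\chi(Y\cap Z)$, so the equation reads $n-\chi(X)=\chi(Y)-\chi(Y\cap Z)$. Rearranging yields $\chi(X)=\chi(Y\cap Z)-\chi(Y)+n$, which is exactly the claimed formula. The hypothesis $\deg X>1$ is inherited from Corollary~\ref{projvers}, where it is needed to guarantee $\deg F>0$ and hence the $\bL$-bundle structure of the projection fibers.

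The only point requiring a word of justification is that $\chi$ is well-defined on $K_0(\cV)$, i.e.\ that the Euler characteristic is additive over the scissor relations $[V]=[W]+[V\smallsetminus W]$ for closed $W\subseteq V$; this is the standard long-exact-sequence property of compactly supported cohomology, and multiplicativity over products gives $\chi$ the ring structure needed so that $\chi(\bL)=1$ can be used as above. I do not anticipate a genuine obstacle here—the content is entirely in Theorem~\ref{isoXYthm} and Corollary~\ref{projvers}, and this corollary is a routine specialization. If one prefers to avoid invoking the motivic nature of $\chi$, an alternative is to read off the identity directly from the isomorphism \eqref{isoXY} of Theorem~\ref{isoXYthm} together with the $\A^1$-bundle $\P^{n-1}\smallsetminus(\overline Y\cap\overline Z)\to\P^{n-2}\smallsetminus(Y\cap Z)$ used in the proof of Corollary~\ref{projvers}, applying the additive and multiplicative properties of $\chi$ to the stratifications $\P^{n-1}=(\P^{n-1}\smallsetminus X)\sqcup X$ and $\P^{n-2}=(\P^{n-2}\smallsetminus(Y\cap Z))\sqcup(Y\cap Z)$ and so on; but the Grothendieck-ring route is cleaner and is the natural one given the preceding results.
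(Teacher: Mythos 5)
Your proof is correct and is exactly the route the paper intends: Corollary~\ref{EulXYZ} is stated as an immediate consequence of Corollary~\ref{projvers}, obtained by applying the Euler-characteristic ring homomorphism $\chi\colon K_0(\cV)\to\Z$ (with $\chi(\bL)=1$, $\chi(\P^{k})=k+1$) to \eqref{XYGro2}, and your arithmetic checks out. The paper leaves this specialization implicit, so your write-up simply supplies the omitted computation.
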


There are interesting alternative ways to state Corollary \ref{EulXYZ}.
For example, we have the following.

\begin{cor}\label{EulXYZ2}
If $\deg X>1$, the Euler characteristics satisfy
\begin{equation}\label{EulXY1}
\chi(X\cup \overline Y)=n,
\end{equation}
or equivalently
\begin{equation}\label{EulXY2}
\chi(\P^{n-1}\smallsetminus (X\cup \overline Y))=0.
\end{equation}
\end{cor}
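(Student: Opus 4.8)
The plan is to obtain Corollary~\ref{EulXYZ2} directly from Corollary~\ref{EulXYZ}, so that the Euler characteristic identities in the two corollaries become manifestly equivalent. The only input is the additivity of the (topological, equivalently compactly supported) Euler characteristic on locally closed decompositions. First I would write down the inclusion--exclusion relation
\[
\chi(X\cup\overline Y)=\chi(X)+\chi(\overline Y)-\chi(X\cap\overline Y),
\]
and invoke the identity $X\cap\overline Y=\overline Y\cap\overline Z$ recorded in~\eqref{XcapYvsYcapZ}, so that the last term becomes $\chi(\overline Y\cap\overline Z)$.

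The one nontrivial ingredient is the Euler characteristic of the cones $\overline Y$ and $\overline Y\cap\overline Z$. Since $\deg X>1$ forces $\deg F>0$, and since $F$ and $G$ involve only $t_1,\dots,t_{n-1}$, both $\overline Y$ and $\overline Z$ are genuine cones in $\P^{n-1}$ over $Y,Z\subset\P^{n-2}$ with common vertex $p=(0:\dots:0:1)$; the ideal computation $(\psi,F)=(F,G)$ already used in the proof of Corollary~\ref{projvers}, read now inside $\P^{n-1}$, shows that $\overline Y\cap\overline Z$ is the cone over $Y\cap Z$ with vertex $p$ (degenerating to the single point $\{p\}$ when $Y\cap Z=\emptyset$). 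For any projective set $B\subseteq\P^{n-2}$, deleting the vertex from its cone $\overline B\subseteq\P^{n-1}$ exhibits $\overline B\smallsetminus\{p\}$ as a Zariski-locally trivial $\A^1$-bundle over $B$, so that $\chi(\overline B)=\chi(B)\cdot\chi(\A^1)+1=\chi(B)+1$. Applying this with $B=Y$ and with $B=Y\cap Z$ yields $\chi(\overline Y)=\chi(Y)+1$ and $\chi(\overline Y\cap\overline Z)=\chi(Y\cap Z)+1$.

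Substituting into the inclusion--exclusion relation, the two $+1$ contributions cancel and I get
\[
\chi(X\cup\overline Y)=\chi(X)+\chi(Y)-\chi(Y\cap Z).
\]
By Corollary~\ref{EulXYZ}, $\chi(X)=\chi(Y\cap Z)-\chi(Y)+n$, so the right-hand side collapses to $n$: this is~\eqref{EulXY1}. Since $\chi(\P^{n-1})=n$, additivity then gives $\chi(\P^{n-1}\smallsetminus(X\cup\overline Y))=\chi(\P^{n-1})-\chi(X\cup\overline Y)=n-n=0$, which is~\eqref{EulXY2}; conversely~\eqref{EulXY2} returns~\eqref{EulXY1}, so the two displays are genuinely interchangeable.

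I do not expect a real obstacle: the only delicate point is the cone bookkeeping, namely making sure the vertex $p$ is counted exactly once in each of $\overline Y$ and $\overline Y\cap\overline Z$, and that the formula $\chi(\overline B)=\chi(B)+1$ survives the degenerate case $B=\emptyset$, where the cone is just $\{p\}$. As an alternative route that bypasses Corollary~\ref{EulXYZ} altogether, one may combine the identity $[X]-[X\cap\overline Y]=[\P^{n-2}\smallsetminus Y]$ coming from Theorem~\ref{isoXYthm} with $[\overline Y]=\bL\cdot[Y]+1$ to obtain $[\P^{n-1}\smallsetminus(X\cup\overline Y)]=(\bL-1)\cdot[\P^{n-2}\smallsetminus Y]$ in $K_0(\cV)$, and then apply the Euler characteristic, which sends $\bL-1$ to $0$.
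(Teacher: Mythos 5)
Your proof is correct, and it rests on the same circle of ideas as the paper's: Theorem~\ref{isoXYthm}, the cone formula $\chi(\overline B)=\chi(B)+1$, and inclusion--exclusion. The routing differs slightly. The paper never evaluates $\chi(X\cap\overline Y)$ on its own: it reads off the difference $\chi(X)-\chi(X\cap\overline Y)=\chi(\P^{n-2}\smallsetminus Y)=n-1-\chi(Y)$ directly from the isomorphism of Theorem~\ref{isoXYthm}, rewrites $n-1-\chi(Y)$ as $n-\chi(\overline Y)$, and is done by inclusion--exclusion. You instead compute $\chi(X\cap\overline Y)=\chi(\overline Y\cap\overline Z)=\chi(Y\cap Z)+1$ via the ideal identity \eqref{XcapYvsYcapZ} and then cancel everything using Corollary~\ref{EulXYZ}; this is a harmless detour (Corollary~\ref{EulXYZ} is itself a consequence of the same isomorphism), and your care with the degenerate case $Y\cap Z=\emptyset$ is appropriate. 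Your closing alternative, the motivic identity $[\P^{n-1}\smallsetminus(X\cup\overline Y)]=(\bL-1)\cdot[\P^{n-2}\smallsetminus Y]$, is a nice strengthening that makes \eqref{EulXY2} immediate and is exactly the content of the paper's remark following the corollary about the $\bG_m$-action on $\P^{n-1}\smallsetminus(X\cup\overline Y)$.
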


\proof Since $[X\smallsetminus 
(X\cap \overline Y)]=[\P^{n-2}\smallsetminus Y]$ by 
Theorem \ref{isoXYthm}, we have
\[
\chi(X)-\chi(X\cap \overline Y)=n-1-\chi(Y)
=n-\chi(\overline Y).
\]
The hypothesis $\deg X>1$ is used here, since we need $\overline
Y\ne \emptyset$. If $\deg X=1$ then one just has $\chi(X)=n-1$.
\endproof

Written in the form \eqref{EulXY2}, the statement
can also be proved by showing that there is a $\bG_m$-action
on $\P^{n-1}\smallsetminus (X\cup \overline Y)$. This is implicit
in the argument used in the proof of Theorem \ref{isoXYthm}. 

\medskip

We now consider the case of the graph hypersurfaces.

Let $\Gamma$ be a graph with $n\ge 2$ edges 
$e_1,\ldots,e_{n-1},e=e_n$, with $(t_1:\ldots :t_n)$
the corresponding variables in $\P^{n-1}$. Consider the Kirchhoff 
polynomial $\Psi_\Gamma$ and the graph hypersurface 
$X_\Gamma \subset \P^{n-1}$ as above. We can assume 
$\deg\Psi_\Gamma =\ell>0$. The case of forests can be handled separately. 
In fact, it will be occasionally convenient to assume $\deg\Psi_\Gamma
>1$, that is, assuming that the $\Gamma$ has at least two loops.

We assume that the edge $e$ is not a bridge nor a looping edge.
Here we work with arbitrary finite graphs: we do not require
that the graph is 1PI or even connected. The
Kirchhoff polynomial is still well defined. 

We then consider the polynomials
\begin{equation}\label{FGgraph}
F:=\frac{\partial \Psi_\Gamma}{\partial t_n} =\Psi_{\Gamma\smallsetminus e} 
\ \ \ \text{ and } \ \ \  
G:=\Psi_\Gamma |_{t_n=0}=\Psi_{\Gamma/e}.
\end{equation}
These are, respectively, the polynomials corresponding to the 
deletion $\Gamma\smallsetminus e$ and the contraction $\Gamma/e$ of the edge 
$e=e_n$ in~$\Gamma$. Both are not identically zero in this situation.

As above, we use the notation $\overline{Y}$ for the projective cone
over $Y$ and $\widehat Y$ for the affine cone. Then Theorem \ref{isoXYthm}
and Corollaries \ref{projvers}, \ref{affvers}, and \ref{EulXYZ}
give in this case the following deletion--contraction relations.

\begin{thm}\label{delcon}
Let $\Gamma$ be a graph with $n>1$ edges. Assume that $e$ is an edge
of~$\Gamma$ which is neither a bridge nor a looping edge.
Let $X_\Gamma$ and $\widehat X_\Gamma$ be the projective and affine 
graph hypersurfaces.
Then the hypersurface complement classes in the Grothendieck ring of
varieties $K_0(\cV)$ satisfy the deletion--contraction relation
\begin{equation}\label{delconA}
[\A^n\smallsetminus \widehat X_\Gamma]
=\bL\cdot [\A^{n-1}\smallsetminus (\widehat X_{\Gamma\smallsetminus e}
\cap \widehat X_{\Gamma/e})]
-[\A^{n-1}\smallsetminus \widehat X_{\Gamma\smallsetminus e}] .
\end{equation}
If $\Gamma$ contains at least two loops, then
\begin{equation}\label{delconP}
[\P^{n-1}\smallsetminus X_\Gamma]
=\bL\cdot [\P^{n-2}\smallsetminus (X_{\Gamma\smallsetminus e}
\cap X_{\Gamma/e})]
-[\P^{n-2}\smallsetminus X_{\Gamma\smallsetminus e}] .
\end{equation}
Under the same hypotheses, the Euler characteristics satisfy
\begin{equation}\label{delconEul}
\chi(X_\Gamma)=n+\chi(X_{\Gamma\smallsetminus e}\cap X_{\Gamma/e})
-\chi(X_{\Gamma\smallsetminus e}) .
\end{equation}
\end{thm}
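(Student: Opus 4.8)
The plan is to deduce Theorem~\ref{delcon} directly from the general results already established for the polynomials $\psi$, $F$, $G$ of \eqref{psiFG}, by checking that the graph setup is an instance of that general situation. The key observation is the classical identity for the Kirchhoff polynomial: since each spanning tree of $\Gamma$ either contains $e$ or not, $\Psi_\Gamma$ is affine-linear in $t_n=t_e$, with $\partial\Psi_\Gamma/\partial t_n = \Psi_{\Gamma\smallsetminus e}$ (the coefficient of $t_e$, indexed by spanning trees not containing $e$) and $\Psi_\Gamma|_{t_n=0} = \Psi_{\Gamma/e}$ (the terms with no $t_e$ factor, indexed by spanning trees containing $e$). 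This is exactly the decomposition $\psi = t_n F + G$ with $F=\Psi_{\Gamma\smallsetminus e}$ homogeneous of degree $\ell-1$ and $G=\Psi_{\Gamma/e}$ homogeneous of degree $\ell$ in $t_1,\dots,t_{n-1}$, as recorded in \eqref{FGgraph}. The hypothesis that $e$ is neither a bridge nor a looping edge is precisely what guarantees $F\not\equiv 0$ (a looping edge $e$ appears in no spanning tree, forcing $F\equiv 0$ in the relevant sense) and $G\not\equiv 0$ (a bridge $e$ lies in every spanning tree, forcing $G\equiv 0$); so both hypotheses of the general setup are met.

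First I would record the affine-linearity of $\Psi_\Gamma$ in $t_e$ and the identifications in \eqref{FGgraph}, citing the spanning-tree formula \eqref{SpanTrees}; I would also note that deleting $e$ reduces to $n-1$ edges and contracting $e$ reduces to $n-1$ edges as well, so $X_{\Gamma\smallsetminus e}$ and $X_{\Gamma/e}$ naturally live in $\P^{n-2}$, matching $X=X_\Gamma\subset\P^{n-1}$, $Y=\overline{X_{\Gamma\smallsetminus e}}$ pulled back, and $Z = X_{\Gamma/e}$ in the notation of Corollary~\ref{projvers}. Then \eqref{delconA} is immediate from the affine statement in Corollary~\ref{affvers} applied to $\widehat X = \widehat X_\Gamma$, using the identity $\widehat X\cap\overline{\widehat Y} = \widehat Y\cap\widehat Z$ from \eqref{XcapYvsYcapZ}, which here reads $\widehat X_{\Gamma\smallsetminus e}\cap\widehat X_{\Gamma/e}$; note that because the affine version carries no restriction on $\deg X$, \eqref{delconA} holds as soon as $e$ is neither a bridge nor a looping edge. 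For \eqref{delconP} I would invoke \eqref{XYGro2} of Corollary~\ref{projvers}, which does require $\deg X>1$, i.e. $\ell = b_1(\Gamma) \ge 2$ — hence the hypothesis that $\Gamma$ has at least two loops. Finally \eqref{delconEul} follows from Corollary~\ref{EulXYZ} after rewriting $\chi(Y\cap Z)$ as $\chi(X_{\Gamma\smallsetminus e}\cap X_{\Gamma/e})$ and $\chi(Y)$ as $\chi(X_{\Gamma\smallsetminus e})$; equivalently one can apply the Euler characteristic ring homomorphism to \eqref{delconP}.

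There is essentially no hard analytic step here: the content is entirely in verifying that the graph hypersurfaces instantiate the general framework. The one point requiring genuine care — and the place I expect a careful reader to want more detail — is the bookkeeping of ambient dimensions and the scheme-theoretic meaning of the intersections. Specifically, one must be sure that $\widehat X_{\Gamma\smallsetminus e}$ and $\widehat X_{\Gamma/e}$ are being taken inside the same $\A^{n-1}$ (coordinates $t_1,\dots,t_{n-1}$), that $\widehat X_{\Gamma\smallsetminus e}\cap\widehat X_{\Gamma/e}$ is the subscheme cut out by the ideal $(\Psi_{\Gamma\smallsetminus e},\Psi_{\Gamma/e})$ — which is indeed $(\psi, F)=(F,G)$ by \eqref{XcapYvsYcapZ} — and that the origin is contained in these cones exactly when $\deg F>0$, so that the passage between projective and affine classes via $[\A^N\smallsetminus\widehat S]=(\bL-1)[\P^{N-1}\smallsetminus S]$ is legitimate. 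I would also remark explicitly that the degenerate cases excluded here (bridges, looping edges, and forests) are covered by the multiplicativity and inverse-propagator relations already discussed, e.g. $\bU(\Gamma)=(\bL-1)\bU(\Gamma/e)$ for a looping edge and $\bU(\Gamma)=\bL\cdot\bU(\Gamma\smallsetminus e)$ for a bridge, so that the deletion–contraction package is complete. With these remarks in place the theorem is a direct corollary of the general statements, and the proof is short.
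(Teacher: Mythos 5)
Your proposal is correct and follows the same route as the paper: Theorem~\ref{delcon} is obtained by specializing Theorem~\ref{isoXYthm} and Corollaries~\ref{projvers}, \ref{affvers}, \ref{EulXYZ} to $F=\Psi_{\Gamma\smallsetminus e}$, $G=\Psi_{\Gamma/e}$ as in \eqref{FGgraph}, with the two-loop hypothesis supplying $\deg X>1$ for the projective and Euler-characteristic statements. One slip to fix: you have the vanishing attributions swapped --- a \emph{bridge} lies in every spanning tree, so no monomial of $\Psi_\Gamma$ contains $t_e$ and it is $F=\partial\Psi_\Gamma/\partial t_e$ that vanishes, while a \emph{looping edge} lies in no spanning tree, so every monomial contains $t_e$ and it is $G=\Psi_\Gamma|_{t_e=0}$ that vanishes (this also contradicts your own earlier, correct, indexing of $F$ by spanning trees not containing $e$). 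The conclusion you actually need --- that both $F$ and $G$ are nonzero when $e$ is neither a bridge nor a looping edge --- is unaffected.
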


The class $[\A^n\smallsetminus \widehat X_\Gamma]$ is
the universal motivic Feynman rule of \cite{AluMa2}.

In the projective case, requiring that $\Gamma$ has at least two
loops meets the condition on the degree of the hypersurface we have
in Corollary \ref{projvers}. In the one loop case, $X_\Gamma$ is
a hyperplane, so one simply gets
\[
[\P^{n-1}\smallsetminus X_\Gamma]=\bL^{n-1}\ \ \ \text{ and } \ \ \
\chi(X_\Gamma)=n-1.
\] 

The formulae for the hypersurface complement classes in
the cases where $e$ is either a bridge or a looping edge were already
covered in the results of Proposition 2.5 and \S 2.2 of
\cite{AluMa2}. We recall them here.
\begin{itemize}
\item If the edge $e$ is a bridge in $\Gamma$, then
\begin{equation}\label{bridgeGr}
[\A^n\smallsetminus \widehat X_\Gamma]
= \bL\cdot [\A^{n-1}\smallsetminus \widehat X_{\Gamma\smallsetminus e}]
= \bL\cdot [\A^{n-1}\smallsetminus \widehat X_{\Gamma/e}].
\end{equation}
In fact, if $e$ is a bridge, then $\Psi_\Gamma$ does not depend on 
the variable $t_e$ and $F\equiv 0$. The
equation for $X_{\Gamma\smallsetminus e}$ is $\Psi_\Gamma=0$ again, but viewed in
one fewer variables. The equation for $X_{\Gamma/e}$ is the
same.
\item If $e$ is a looping edge in $\Gamma$, then
\begin{equation}\label{loopGr}
[\A^n\smallsetminus \widehat X_\Gamma]
= (\bL-1)\cdot [\A^{n-1}\smallsetminus \widehat X_{\Gamma\smallsetminus e}]
= (\bL-1)\cdot [\A^{n-1}\smallsetminus \widehat X_{\Gamma/e}].
\end{equation}
In fact, if $e$ is a looping edge, then $\Psi_\Gamma$ is divisible by
$t_e$, so that $G\equiv 0$. The
equation for $X_{\Gamma/e}$ is obtained by dividing $\Psi_\Gamma$
through by~$t_e$, and one has $X_{\Gamma\smallsetminus e}=X_{\Gamma/e}$.
\end{itemize}

The formulae \eqref{delconA}, \eqref{bridgeGr}, and
\eqref{loopGr} give us the closest analog to the recursion
satisfied by the Tutte-Grothendieck invariants. Notice
that, by \eqref{XcapYvsYcapZ}, the intersection of 
$X_{\Gamma\smallsetminus e}$ and $X_{\Gamma/e}$ can in 
fact be expressed in terms of $X_{\Gamma\smallsetminus e}$ and $X_\Gamma$ alone, 
so that the result of Theorem \ref{delcon} can be expressed in 
terms that do not involve the contraction $\Gamma/e$.

\medskip

One knows from the general result of \cite{BeBro} that the 
classes $[X_\Gamma]$ of the graph hypersurfaces span the
Grothendieck ring $K_0(\cV)$ of varieties. Thus, motivically,
they can become arbitrarily complex. The question remains of
identifying more precisely, in terms of inductive procedures
related to the combinatorics of the graph, how the varieties
$X_\Gamma$ will start to acquire non-mixed Tate strata as the
complexity of the graph grows. Recent results of \cite{Doryn}
have made substantial progress towards producing explicit
cohomological computations that can identify non-mixed Tate
contributions. In the setting of deletion--contraction relations 
described above, one sees from Theorem \ref{delcon} 
that, in an inductive procedure that assembles the class of
$X_\Gamma$ from data coming from the simpler graphs $X_{\Gamma\smallsetminus e}$
and $X_{\Gamma/e}$, where one expects non-mixed Tate contributions
to first manifest themselves is in the intersection
$X_{\Gamma\smallsetminus e } \cap X_{\Gamma/e}$.

\section{Linear systems and Milnor fibers}\label{MilnorSec}

We give a different geometric interpretation of the
deletion--contraction relation proved in the previous section, which 
views the graph hypersurface
of $\Gamma$ as a Milnor fiber for hypersurfaces related to
$\Gamma\smallsetminus e$ and $\Gamma/e$. An advantage of this point of view is
that it may be better suited for extending the deletion--contraction
relation for the invariants like $C_\Gamma(T)$ defined in terms
of characteristic classes of singular varieties.

The main observation is that the deletion--contraction setting determines
a rather special linear system. With notation as above, we have
\[
\psi = t_n F(t_1,\ldots,t_{n-1}) + G(t_1,\ldots,t_{n-1}).
\]
This says that $\psi$ is in the linear system
\[
\lambda t_n F(t_1,\ldots,t_{n-1}) + \mu G(t_1,\ldots,t_{n-1}).
\]
This system specializes to $t_n F(t_1,\ldots,t_{n-1})$ for $\mu=0$ 
and to $G(t_1,\ldots,t_{n-1})$ for $\lambda=0$. 
What is special is that, for {\em every} other choice of 
$(\lambda:\mu)$, the corresponding 
hypersurface is isomorphic to $\psi=0$. Indeed, replacing
$t_n$ by $\frac \lambda\mu t_n$ gives a coordinate change
in $\P^{n-1}$ taking the hypersurface corresponding
to $(\lambda:\mu)$ to the one corresponding to $(1:1)$.

We consider the same general setting as in the previous
section, with $F$ and $G$
nonzero homogeneous polynomials of degree $\ell-1$ and $\ell$,
respectively (with $\ell>0$), in coordinates $t_1,\dots,t_{n-1}$.
We want to study the general fiber~$\psi$ of the linear system
\[
\lambda\, t_n F + \mu\, G,
\]
where we note, as above, that 
its isomorphism class is independent of the point $(\lambda:\mu)
\ne (1:0), (0:1)$.
We denote, as above, by $X\subset \P^{n-1}$ and $Y,Z\subset
\P^{n-2}$ the hypersurfaces determined by $\psi$, $F$, $G$,
respectively. We also denote by $\widehat X\subset \A^n$, 
$\widehat Y,\widehat Z \subset \A^{n-1}$ the corresponding 
affine cones.

\medskip

We can then give, using this setting, a different proof
of the statement of Corollary \ref{affvers}.

\begin{prop}\label{affversLinSys}
With the notation as above, the classes of the affine
hypersurface complements in the Grothendieck ring $K_0(\cV)$ satisfy
the deletion--contraction relation
$$ [\A^n \smallsetminus \widehat X]=\bL\cdot [\A^{n-1} \smallsetminus
(\widehat Y\cap \widehat Z)]- [\A^{n-1}\smallsetminus \widehat Y]. $$
\end{prop}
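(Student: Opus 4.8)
The plan is to compute $[\A^n\smallsetminus\widehat X]$ directly in the affine setting by slicing along the lines of the pencil, rather than importing the projective isomorphism of Theorem~\ref{isoXYthm} through Corollaries~\ref{projvers} and~\ref{affvers}. I would use the linear projection
\[
\pi\colon \A^n\to\A^{n-1},\qquad (t_1,\dots,t_{n-1},t_n)\mapsto (t_1,\dots,t_{n-1}),
\]
which is the affine shadow of the projection from $p=(0:\dots:0:1)$ used earlier; the fibre $\pi^{-1}(q)$ is the line $\{q\}\times\A^1$ on which the members of the pencil $\lambda\,t_nF+\mu\,G$ restrict. The point that makes this work — and it is exactly the special feature of the deletion--contraction linear system — is that $\widehat X\cap\pi^{-1}(q)$ is cut out by $t_nF(q)+G(q)=0$, a polynomial of degree $\le 1$ in $t_n$ whose zero locus depends only on whether $F(q)$ and $G(q)$ vanish.

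First I would partition the base $\A^{n-1}$ into the three locally closed pieces
\[
A=\{F\ne 0\}=\A^{n-1}\smallsetminus\widehat Y,\quad B=\{F=0,\ G\ne 0\}=\widehat Y\smallsetminus(\widehat Y\cap\widehat Z),\quad C=\{F=0,\ G=0\}=\widehat Y\cap\widehat Z,
\]
and describe the restriction of $\pi$ to $\A^n\smallsetminus\widehat X$ over each. Over $A$, the equation $t_nF(q)+G(q)=0$ has the single root $t_n=-G(q)/F(q)$, so $\widehat X\cap\pi^{-1}(A)$ is the image of the regular section $q\mapsto(q,-G(q)/F(q))$; translating by this section identifies $\pi^{-1}(A)\smallsetminus\widehat X$ with $A\times\bG_m$, giving class $(\bL-1)\,[A]$. (This $\bG_m$ is precisely the Milnor fibre of $\psi$ along a generic line of the pencil, which is what motivates the title of this section.) Over $B$ the equation has no solution, so $\pi^{-1}(B)\smallsetminus\widehat X=\pi^{-1}(B)=B\times\A^1$, contributing $\bL\,[B]$; over $C$ the whole fibre lies in $\widehat X$, contributing $0$.

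Then, by scissor additivity in $K_0(\cV)$ applied to the partition $\pi^{-1}(A)\sqcup\pi^{-1}(B)\sqcup\pi^{-1}(C)$ of $\A^n$,
\[
[\A^n\smallsetminus\widehat X]=(\bL-1)\,[\A^{n-1}\smallsetminus\widehat Y]+\bL\bigl([\widehat Y]-[\widehat Y\cap\widehat Z]\bigr),
\]
and substituting $[\widehat Y]=[\A^{n-1}]-[\A^{n-1}\smallsetminus\widehat Y]$ and rearranging yields
\[
[\A^n\smallsetminus\widehat X]=\bL\cdot[\A^{n-1}\smallsetminus(\widehat Y\cap\widehat Z)]-[\A^{n-1}\smallsetminus\widehat Y],
\]
which is the asserted relation. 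There is no serious obstacle here: the only delicate points are the bookkeeping ones — checking that $A$, $B$, $C$ are genuinely locally closed and that the bundles over them are Zariski-locally trivial (immediate from the explicit section over $A$ and the product structure over $B$, so that the classes multiply), and disposing of the degenerate case $\deg F=0$, i.e. $\ell=1$ with $\widehat Y=\emptyset$ and $A=\A^{n-1}$, which is checked by hand exactly as in the proof of Corollary~\ref{affvers}.
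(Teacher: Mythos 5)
Your argument is correct, and it is a genuinely different route from the one the paper takes for this proposition. The paper's proof stays projective and exploits the pencil $\lambda\, t_n F+\mu\, G$ as a family over $\P^1$: all members other than the two special fibres $H\cup\overline Y$ and $\overline Z$ are isomorphic to $X$, so one decomposes $[\P^{n-1}\smallsetminus W]$ ($W$ the base locus) accordingly and then untangles the cones and the hyperplane $H$ to land on the affine statement. You instead fibre $\A^n$ over $\A^{n-1}$ along the $t_n$-direction and stratify the base by the vanishing of $F$ and $G$, reading off the fibre of the complement as $\bG_m$, $\A^1$, or $\emptyset$ on the three strata; the explicit shearing section over $\{F\ne 0\}$ makes the piece over $A$ an honest product, so the classes multiply and scissor additivity finishes the computation (I checked the final substitution $[\widehat Y]=\bL^{n-1}-[\A^{n-1}\smallsetminus\widehat Y]$; it does produce the stated identity). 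Your approach is really the affine, fibrewise shadow of Theorem~\ref{isoXYthm} rather than of the pencil argument, and it buys uniformity: the degenerate cases $\deg X=1$ and $n=2$, which the paper's proof of this proposition disposes of by hand, are absorbed into the same stratification (when $F$ is a nonzero constant one simply has $A=\A^{n-1}$ and $B=C=\emptyset$). What it does not exhibit is the pencil/Milnor-fibre structure that motivates this section of the paper — the identification of the general member of the linear system with $X$ is the conceptual point the authors want on record — though your parenthetical identifying the $\bG_m$ over $A$ with the Milnor fibre of $\psi$ along a generic line recovers some of that. One small caveat: your formula for the contribution over $B$ uses $[B]=[\widehat Y]-[\widehat Y\cap\widehat Z]$, which tacitly takes $\widehat Y=\{F=0\}$ and $\widehat Y\cap\widehat Z=\{F=G=0\}$ as honest affine zero loci; this is consistent with your definition of $A$, $B$, $C$ and with the convention needed for the degenerate case, but it is worth saying explicitly given the paper's warning about cones over the empty set.
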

 
\proof
If $\deg X=1$, then $\widehat Y=\widehat Y\cap \widehat Z=\emptyset$.
The formula then reduces to $[\A^n\smallsetminus \A^{n-1}]=\bL\cdot 
[\A^{n-1}]-[\A^{n-1}]$, which is trivially satisfied. The formula is 
also easily checked in the case $n=2$. In fact, if $n=2$, then 
up to constants we may assume $F=t_1^{\ell-1}$ and $G=t_1^\ell$.
We can also assume $\ell>1$. We then have $\psi=t_1^{\ell-1}(t_1+t_2)$,
so that $[\widehat X]=2\bL-1$. We also have $[\widehat Y]=[\widehat Y\cap \widehat Z]
=1$. The formula then reads
\[
\bL^2-(2\bL -1)=\bL(\bL-1)-(\bL-1).
\]

We then consider the case with $\deg X>1$ and $n>2$, where we have 
$Y\ne \emptyset$ and $Y\cap Z\ne \emptyset$.
As observed above, the key to the statement is that all but two of the fibers of 
the linear system $\lambda t_n F+\mu G$ are isomorphic to~$X$. The two 
special fibers may be written as
\[
H\cup \overline Y\ \ \ \text{ and } \ \ \  \overline Z ,
\]
where $H$ is the hyperplane $t_n=0$, and $\overline Y$ and $\overline Z$
are the projective cones in~$\P^n$ over $Y$ and $Z$, respectively.
Letting $W$ denote the common intersection of all elements of the
system, we therefore have
\[
[\P^{n-1}\smallsetminus W]=(\bL-1)[X\smallsetminus W]
+[(H\cup\overline Y)\smallsetminus W]+[\overline Z\smallsetminus W],
\]
or equivalently
\[
[\P^{n-1}]=(\bL-1)\cdot [X]-\bL\cdot [W]+[H\cup \overline Y]+
[\overline Z] .
\]
Recalling that $[\A^n\smallsetminus \widehat X]=(\bL-1)
\cdot [\P^{n-1}-X]$, we get
\[
[\A^n\smallsetminus \widehat X]
=\bL\cdot [\P^{n-1}\smallsetminus W]-[\P^{n-1}\smallsetminus
(H\cup\overline Y)]-[\P^{n-1}\smallsetminus \overline Z] .
\]
Next, notice that removing the hyperplane $H$ amounts precisely
to restricting to affine space. Thus, we obtain
\[
[\P^{n-1}\smallsetminus (H\cup \overline Y)]=[\A^{n-1}
\smallsetminus \widehat Y] .
\]
As for $W=(H\cup \overline Y)\cap \overline Z\subseteq 
\P^{n-1}$, one can break up $\P^{n-1}$ as the disjoint union of 
$H=\P^{n-2}$ and~$\A^{n-1}$. Then $W$ intersects the first 
piece along $H\cap \overline Z=Z$ and the second 
along~$\widehat Y\cap \widehat Z$. Therefore, we obtain
\[
[\P^{n-1}\smallsetminus W]
=[\P^{n-2}\smallsetminus Z]+[\A^{n-1}\smallsetminus
(\widehat Y\cap \widehat Z)] .
\]
Notice that $\bL[\P^{n-2}\smallsetminus Z] 
= [\P^{n-1}\smallsetminus \overline Z]$. This shows that
\[
\bL\cdot [\P^{n-1}\smallsetminus W]
-[\P^{n-1}\smallsetminus \overline Z]
=\bL\cdot [\A^{n-1}\smallsetminus (\widehat Y\cap \widehat Z)].
\]
This completes the proof.
\endproof

In this geometric formulation one can observe also
that the projection $X_\Gamma \dashrightarrow \P^{n-2}$ 
is resolved by blowing up the point $p=(0:\dots:0:1)$,
\begin{equation}\label{blowupdiag}
\xymatrix{
& \widetilde{X_\Gamma } \ar[dl]_\nu \ar[dr]^\pi \\
X_\Gamma \ar@{-->}[rr] & &  \P^{n-2}
}
\end{equation}
The exceptional 
divisor in $\widetilde{X_\Gamma}$ is a copy of $Y=X_{\Gamma\smallsetminus e}$, mapping
isomorphically to its image in $\P^{n-2}$. The fibers
of $\pi$ are single points away from $Y\cap Z
=X_{\Gamma\smallsetminus e}\cap X_{\Gamma/e}$, and are copies of $\P^1$ over
$X_{\Gamma\smallsetminus e}\cap X_{\Gamma/e}$. In fact, $\widetilde{X_\Gamma}$ may be
identified with the blowup of $\P^{n-2}$ along the subscheme $Y \cap
Z = X_{\Gamma\smallsetminus e}\cap X_{\Gamma/e}$. This geometric setting may be
useful in trying to obtain deletion--contraction relations for
invariants defined by Chern--Schwartz--MacPherson classes, though
at present the existing results on the behavior of these classes
under blowup \cite{Alu09} 
do not seem to suffice to yield directly the desired result.

\section{Operations on graphs}\label{OpSec}

Applying the deletion--contraction formulas \eqref{delconA},
\eqref{delconP} for motivic Feynman rules obtained in Theorem
\ref{delcon} as a tool for computing the classes in the
Grothendieck ring of the graph hypersurfaces runs into a clear difficulty:
determining the intersection $X_{\Gamma\smallsetminus e}\cap X_{\Gamma/e}$. This can be
challenging, even for small graphs. In general it is bound to be, since this is
where non-Mixed-Tate phenomena must first occur. Also, this is a seemingly
`non-combinatorial' term, in the sense that it cannot be read off immediately
from the graph, unlike the ingredients in the simpler
deletion--contraction relations satisfied by the Tutte--Grothendieck 
invariants.

We analyze in this section some operations on graphs, which have
the property that the problem of describing the intersection
$X_{\Gamma\smallsetminus e}\cap X_{\Gamma/e}$ can be bypassed and the class of
more complicated graphs can be computed inductively only in terms 
of combinatorial data. The first such operation replaces a chosen
edge $e$ in a graph $\Gamma$ with $m$ parallel edges connecting the
same two vertices $\partial(e)$.

We first describe how this operation of replacing an edge in a graph
by $m$ parallel copies affects combinatorial Feynman rules such as the
Tutte polynomial. We then compare it with the behavior of the 
motivic Feynman rules under the same operation.

\subsection{Multiplying edges: the Tutte case}\label{multTutteSec}

Assume that $e$ is an edge of $\Gamma$,
and denote by $\Gamma_{me}$ the graph obtained from $\Gamma$ by
replacing $e$ by $m$ parallel edges. (Thus, $\Gamma_{0e}=\Gamma\smallsetminus e$,
and $\Gamma_e=\Gamma$.) 

Let $T_\Gamma=T(\Gamma,x,y)$ be the Tutte polynomial of the graph. 
We derive a formula for $T_{\Gamma_{me}}(x,y)$ in terms of
the polynomials for $\Gamma$ and other easily identifiable
variations. 

\begin{prop}\label{mTutteprop}
Assume $e$ is neither a bridge nor a looping edge of~$\Gamma$. Then
\begin{equation}\label{TutteGen1}
\sum_{m\ge 0} T_{\Gamma_{me}}(x,y)\, \frac {s^m}{m!}=
e^s\left(T_{\Gamma\smallsetminus e}(x,y) +\frac{e^{(y-1)s}-1}{y-1}\, 
T_{\Gamma/e}(x,y)\right)\quad.
\end{equation}
\begin{equation}\label{TutteGen2}
\sum_{m\ge 0} T_{\Gamma_{me}}(x,y)\, s^m=
\frac 1{1-s}\left(T_{\Gamma\smallsetminus e}(x,y) +\frac{s}{1-ys}\, 
T_{\Gamma/e}(x,y)\right)\quad.
\end{equation}
Explicitly, we have
\begin{equation}\label{mTutte}
T_{\Gamma_{me}}(x,y)=T_{\Gamma\smallsetminus e}(x,y)+ \frac{y^m-1}{y-1}\,
T_{\Gamma/e}(x,y) .
\end{equation}
\end{prop}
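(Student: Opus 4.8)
The plan is to prove the explicit formula \eqref{mTutte} directly by induction on $m$, using only the elementary deletion--contraction rules for the Tutte polynomial recalled in \S\ref{TutteSec}; the two generating-function identities \eqref{TutteGen1} and \eqref{TutteGen2} will then follow by routine manipulation of power series once \eqref{mTutte} is established. For the base cases, note that $\Gamma_{0e}=\Gamma\smallsetminus e$ gives $T_{\Gamma_{0e}}=T_{\Gamma\smallsetminus e}$, consistent with $\frac{y^0-1}{y-1}=0$, and $\Gamma_{1e}=\Gamma$ gives $T_{\Gamma_{e}}=T_{\Gamma\smallsetminus e}+T_{\Gamma/e}$, which is exactly the deletion--contraction relation \eqref{Tutte2} at $e$ (valid since $e$ is neither a bridge nor a looping edge in $\Gamma$), consistent with $\frac{y^1-1}{y-1}=1$.

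First I would set up the inductive step: in $\Gamma_{(m+1)e}$ with $m\ge 1$, pick one of the $m+1$ parallel copies of $e$, call it $e'$. Since there are at least two parallel copies, $e'$ is not a bridge, and since $m\ge 1$ it is not a looping edge, so \eqref{Tutte2} applies: $T_{\Gamma_{(m+1)e}}=T_{\Gamma_{(m+1)e}\smallsetminus e'}+T_{\Gamma_{(m+1)e}/e'}$. The key combinatorial observation is that $\Gamma_{(m+1)e}\smallsetminus e'=\Gamma_{me}$, since deleting one of the parallel edges leaves $m$ of them; and $\Gamma_{(m+1)e}/e'=(\Gamma/e)_{me'}$ in a degenerate sense — contracting $e'$ identifies the two endpoints $\partial(e)$, turning the remaining $m$ parallel copies of $e$ into $m$ looping edges attached at the single resulting vertex of $\Gamma/e$. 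Thus $\Gamma_{(m+1)e}/e'$ is $\Gamma/e$ with $m$ looping edges added.

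The next step is to evaluate $T$ on that last graph. By the looping-edge rule $T_{\Gamma'}(x,y)=y\,T_{\Gamma'/(\text{loop})}(x,y)$, each of the $m$ looping edges contributes a factor $y$, and contracting a looping edge simply removes it; so $T_{\Gamma_{(m+1)e}/e'}=y^m\,T_{\Gamma/e}$. Combining, $T_{\Gamma_{(m+1)e}}=T_{\Gamma_{me}}+y^m\,T_{\Gamma/e}$, and substituting the induction hypothesis $T_{\Gamma_{me}}=T_{\Gamma\smallsetminus e}+\frac{y^m-1}{y-1}T_{\Gamma/e}$ yields $T_{\Gamma\smallsetminus e}+\left(\frac{y^m-1}{y-1}+y^m\right)T_{\Gamma/e}=T_{\Gamma\smallsetminus e}+\frac{y^{m+1}-1}{y-1}T_{\Gamma/e}$, completing the induction. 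Finally, to get \eqref{TutteGen2} I would plug \eqref{mTutte} into $\sum_{m\ge0}T_{\Gamma_{me}}s^m$ and use $\sum_{m\ge0}s^m=\frac1{1-s}$ together with $\sum_{m\ge0}\frac{y^m-1}{y-1}s^m=\frac1{y-1}\left(\frac1{1-ys}-\frac1{1-s}\right)=\frac{s}{(1-s)(1-ys)}$; and \eqref{TutteGen1} similarly from the exponential generating function, using $\sum_{m\ge0}\frac{y^m}{m!}s^m=e^{ys}$ so that $\sum_{m\ge0}\frac{y^m-1}{y-1}\frac{s^m}{m!}=\frac{e^{ys}-e^s}{y-1}=e^s\frac{e^{(y-1)s}-1}{y-1}$.

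I do not expect a serious obstacle here; the only point requiring a little care is the precise description of $\Gamma_{(m+1)e}/e'$ and the verification that the remaining parallel edges genuinely become looping edges (so that the looping-edge rule, rather than the bridge rule or the generic relation, is the one to apply). One should also keep track of the hypothesis ``$e$ neither a bridge nor a looping edge of $\Gamma$'': it is used only to justify the $m=1$ case via \eqref{Tutte2}; for $m\ge2$ the chosen copy $e'$ is automatically neither a bridge (thanks to its parallel companions) nor a looping edge (since the endpoints of $e$ are distinct, as $e$ is not a loop in $\Gamma$), so the induction is self-sustaining. The degenerate formula for $y=1$ is handled by interpreting $\frac{y^m-1}{y-1}$ as $m$ in the limit, matching $T_{\Gamma_{me}}=T_{\Gamma\smallsetminus e}+m\,T_{\Gamma/e}$, which is also directly visible from the recursion $T_{\Gamma_{(m+1)e}}=T_{\Gamma_{me}}+T_{\Gamma/e}$ at $y=1$.
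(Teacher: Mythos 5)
Your proposal is correct and follows essentially the same route as the paper: the paper derives the recursion $T_{\Gamma_{me}}=T_{\Gamma_{(m-1)e}}+y^{m-1}T_{\Gamma/e}$ by contracting one parallel copy (turning the others into looping edges) and iterates it to get \eqref{mTutte}, then performs the same power-series manipulations for \eqref{TutteGen1} and \eqref{TutteGen2}. Your version merely packages the iteration as a formal induction and spells out why the chosen copy $e'$ is neither a bridge nor a loop, which the paper leaves implicit.
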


\proof If $e$ is neither a bridge nor a looping edge of $\Gamma$, then
\[
T_{\Gamma_{me}}=T_{\Gamma_{(m-1)e}}+y^{m-1} T_{\Gamma/e} .
\]
This follows from the basic recursion \eqref{Tutte2} ruling the Tutte
polynomial, observing that contracting the $m$-th copy of $e$ transforms
the first $m-1$ copies into looping edges attached to $\Gamma/e$.
Doing this recursively shows that
\[
T_{\Gamma_{me}}=T_{\Gamma\smallsetminus e}+(1+y+\dots+y^{m-1}) T_{\Gamma/e} ,
\]
which is the expression given above.

To convert this into generating functions is straightforward.
The coefficient of $T_{\Gamma\smallsetminus e}$ is immediately seen to be
as stated, in both cases. As for the coefficient of $T_{\Gamma/e}$
in the first generating function, just note that
\[
\sum_{m\ge 0} y^m\, \frac{s^m}{m!}=e^{ys}.
\]
Similarly, one has
\[
\sum_{m\ge 0} (y^m-1) s^m = \frac 1{1-ys}-\frac 1{1-s}
=(y-1)\frac s{(1-s)(1-ys)}
\]
and this gives the second generating function.
\endproof

In the case where $e$ is a bridge, one has
\[
T_\Gamma=x\,T_{\Gamma\smallsetminus e}\ \ \  \text{ and } \ \ \ 
T_{\Gamma/e}=T_{\Gamma\smallsetminus e} .
\]
Thus, everything can be written in terms of $T_{\Gamma\smallsetminus e}$.
Running through the recursion gives
\begin{align*}
T_{\Gamma_{0e}} &= T_{\Gamma\smallsetminus e} \\
T_{\Gamma_{1e}} &= T_{\Gamma} = x T_{\Gamma\smallsetminus e} \\
T_{\Gamma_{2e}} &= T_{\Gamma_{1e}}+y T_{\Gamma/e} 
= (x+y)\, T_{\Gamma\smallsetminus e} \\
T_{\Gamma_{3e}} &= T_{\Gamma_{2e}}+y^2 T_{\Gamma/e} 
= (x+y+y^2)\, T_{\Gamma\smallsetminus e} \\
&\dots
\end{align*}
This gives the generating functions 
\[
\left(e^s\left(\frac{e^{(y-1)s}-1}{y-1}+x-1\right)+2-x\right)
T_{\Gamma\smallsetminus e} ,
\]
\[
\left(\frac 1{1-s}\left(\frac s{1-ys}+x-1\right)+2-x\right)
T_{\Gamma\smallsetminus e} .
\]
The case where $e$ is a looping edge simply gives the generating
functions 
\[
e^{ys}\, T_{\Gamma\smallsetminus e}(x,y) \ \ \ \text{ and } \ \ \ 
\frac 1{1-ys}\,T_{\Gamma\smallsetminus e}(x,y).
\]

\subsection{Multiplying edges: motivic Feynman
rules}\label{multMotSec}

We now compare the behavior analyzed in the previous section in the
combinatorial setting with the case of the motivic Feynman
rules.
We use the notation as in \cite{AluMa2} for the motivic Feynman rule 
\[
\bU(\Gamma):=[\A^n\smallsetminus \Gamma] ,
\]
for $\Gamma$ a graph with $n$ edges,
with $[\A^n\smallsetminus \Gamma]$ the class of the affine
hypersurface complement in the Grothendieck ring of varieties
$K_0(\cV)$. For later use, we also introduce the notation
\begin{equation}\label{chiGamma}
\chi_\Gamma := \chi(\P^{n-1}\smallsetminus X_\Gamma),
\end{equation}
for the Euler characteristic of the projective hypersurface
complement. 

The formula in Theorem~\ref{delcon} reads then
\begin{equation}\label{motdelcon}
\bU(\Gamma)
=\bL\cdot [\A^{n-1}\smallsetminus (\widehat X_{\Gamma\smallsetminus e}
\cap \widehat X_{\Gamma/e})]
-\bU(\Gamma\smallsetminus e),
\end{equation}
under the assumption that $e$ is not a bridge or a looping
edge of $\Gamma$. We derive from this formula a multiple
edge formula in the style of those written above for the
Tutte polynomial. The nice feature these formulae exhibit 
is the fact that the complicated term $\widehat X_{\Gamma\smallsetminus e}
\cap \widehat X_{\Gamma/e}$ does
not appear and the class $\bU(\Gamma_{me})$ can be described
in terms involving only the classes $\bU(\Gamma)$,
$\bU(\Gamma\smallsetminus e)$ and $\bU(\Gamma/e)$.

By the nature of the problem, the key case is that of doubling
an edge. One obtains the following.

\begin{prop}\label{doubling}
Let $e$ be an edge of a graph $\Gamma$.
\begin{itemize}
\item If $e$ is a looping edge, then
\begin{equation}\label{double1}
\bU(\Gamma_{2e}) = \bT^2\,\bU(\Gamma\smallsetminus e) .
\end{equation}
\item If $e$ is a bridge, then
\begin{equation}\label{double2}
\bU(\Gamma_{2e}) = \bT(\bT+1)\,\bU(\Gamma\smallsetminus e) .
\end{equation}
\item If $e$ is not a bridge or a looping edge, then
\begin{equation}\label{double3}
\bU(\Gamma_{2e}) = (\bT-1)\,\bU(\Gamma) + \bT\,\bU(\Gamma\smallsetminus e)
+(\bT+1)\,\bU(\Gamma/e) ,
\end{equation}
\end{itemize}
where $\bT=[\bG_m]\in K_0(\cV)$ is the class of the
multiplicative group.
\end{prop}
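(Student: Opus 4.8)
The three cases correspond to the three possibilities for the edge $e$, and in each I would apply the known relations to the graph $\Gamma_{2e}$ directly. The looping and bridge cases are warm-ups: if $e$ is a looping edge, then each of the two parallel copies in $\Gamma_{2e}$ is again a looping edge (a looping edge stays a looping edge when another parallel loop is added), so applying \eqref{loopGr} twice gives $\bU(\Gamma_{2e}) = (\bL-1)^2\,\bU(\Gamma\smallsetminus e)$, and since $\bT = \bL - 1$ this is \eqref{double1}. If $e$ is a bridge, then in $\Gamma_{2e}$ the two parallel copies form a $2$-cycle rather than a bridge, so neither copy is a bridge or a looping edge; I would delete one copy $e'$ using the deletion--contraction relation \eqref{delconA}, noting that $\Gamma_{2e}\smallsetminus e' = \Gamma$ (with $e$ now a bridge, so $\widehat X_\Gamma = \widehat X_{\Gamma\smallsetminus e}$ viewed in one more variable) and $\Gamma_{2e}/e' = \Gamma$ with $e$ a looping edge; a short direct analysis of $\widehat X_{\Gamma}\cap \widehat X_{\Gamma_{2e}/e'}$ in these coordinates, combined with \eqref{bridgeGr} and \eqref{loopGr}, should collapse to \eqref{double2}.

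The substantive case is \eqref{double3}, where $e$ is neither a bridge nor a looping edge. Here I would apply the deletion--contraction relation \eqref{motdelcon} to $\Gamma_{2e}$ with respect to one of the two parallel copies, call it $e'$. Observe that $\Gamma_{2e}\smallsetminus e' = \Gamma$ and $\Gamma_{2e}/e' = (\Gamma/e)$ with a looping edge attached (contracting one copy of the doubled edge turns the other copy into a loop at the merged vertex of $\Gamma/e$). Thus \eqref{motdelcon} reads
\begin{equation}\label{plandcdouble}
\bU(\Gamma_{2e}) = \bL\cdot[\A^{n}\smallsetminus(\widehat X_{\Gamma}\cap \widehat X_{\Gamma_{2e}/e'})] - \bU(\Gamma),
\end{equation}
where $\Gamma_{2e}$ has $n+1$ edges. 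The hypersurface $X_{\Gamma_{2e}/e'}$ is a cone: since the extra edge of $\Gamma_{2e}/e'$ is a loop, $\Psi_{\Gamma_{2e}/e'}$ is (up to the loop variable factor) just $\Psi_{\Gamma/e}$ in the same variables. The remaining work is to compute the class of $\widehat X_{\Gamma}\cap \widehat X_{\Gamma_{2e}/e'}$ inside $\A^{n}$.

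**The core computation.** Write the edge variables of $\Gamma_{2e}$ as $t_1,\dots,t_{n-1}$ for the edges of $\Gamma\smallsetminus e$, together with $t_n$ and $t_{n+1}$ for the two parallel copies. Then $\Psi_{\Gamma_{2e}}$ is obtained from $\Psi_\Gamma$ by the standard parallel-edge substitution: the single variable $t_e$ in the term $t_e\,\Psi_{\Gamma\smallsetminus e}$ gets replaced by the "parallel resistance" combination, yielding
\begin{equation}\label{planpsi}
\Psi_{\Gamma_{2e}} = (t_n + t_{n+1})\,\Psi_{\Gamma\smallsetminus e} + t_n t_{n+1}\,\Psi_{\Gamma/e}.
\end{equation}
Deleting $e'$ (say $t_{n+1}$) recovers $\Psi_\Gamma = t_n\Psi_{\Gamma\smallsetminus e} + \Psi_{\Gamma/e}$, and contracting $e'$ sets $t_{n+1}$-related structure to a loop with variable $t_n$, giving $t_n\,\Psi_{\Gamma/e}$. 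Using \eqref{FGgraph}, the intersection $\widehat X_\Gamma \cap \widehat X_{\Gamma_{2e}/e'}$ is cut out by $(t_n\Psi_{\Gamma\smallsetminus e} + \Psi_{\Gamma/e},\ t_n\Psi_{\Gamma/e})$ in $\A^n$, which by the ideal manipulation in Corollary~\ref{projvers} (the identity $(t_nF+G,\,t_nG) = (t_nF+G,\,G) \cap (\text{stuff})$, more precisely decomposing along $t_n=0$ versus $t_n\ne 0$) should stratify into a piece over $\{t_n\ne 0\}$ isomorphic to a $\bG_m$-bundle over $\widehat X_{\Gamma\smallsetminus e}\cap \widehat X_{\Gamma/e}$ and a piece over $\{t_n=0\}$ equal to $\widehat X_{\Gamma/e}\subset \A^{n-1}$. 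Converting the resulting scissor relation via \eqref{XcapYvsYcapZ}, which lets me trade $\widehat X_{\Gamma\smallsetminus e}\cap \widehat X_{\Gamma/e}$ for an expression in $\bU(\Gamma)$, $\bU(\Gamma\smallsetminus e)$, $\bU(\Gamma/e)$ using \eqref{motdelcon} itself, should make the troublesome intersection class cancel and leave exactly \eqref{double3} after substituting $\bT = \bL-1$ and simplifying.

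**Main obstacle.** The delicate point is the bookkeeping of the stratification of $\widehat X_\Gamma\cap\widehat X_{\Gamma_{2e}/e'}$ by $t_n = 0$ versus $t_n\ne 0$, together with the scheme-theoretic identification of the ideal $(t_nF+G,\,t_nG)$; one must be careful that the "$t_n\ne 0$" locus really is a trivial $\bG_m$-factor over $\widehat Y\cap\widehat Z$ and that there is no extra contribution hidden where both $\Psi_{\Gamma\smallsetminus e}$ and $\Psi_{\Gamma/e}$ vanish. Once that class is pinned down, the rest is linear algebra in $K_0(\cV)$ using \eqref{motdelcon} to eliminate $\widehat X_{\Gamma\smallsetminus e}\cap\widehat X_{\Gamma/e}$ in favor of the three "combinatorial" classes — and I expect the coefficients $(\bT-1)$, $\bT$, $(\bT+1)$ to drop out of exactly this cancellation, which is the punchline the proposition is advertising.
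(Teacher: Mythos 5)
Your proposal is correct and follows essentially the same route as the paper: apply the deletion--contraction relation \eqref{motdelcon} to one of the two parallel copies, identify $\Gamma_{2e}/e'$ as $\Gamma/e$ with a loop, compute $[\widehat X_\Gamma\cap\widehat X_{\Gamma_{2e}/e'}]$ from the ideal $(t_nF+G,\,t_nG)$ (your stratification by $t_n=0$ versus $t_n\neq 0$ yields the same class $[\widehat X_{\Gamma/e}]+(\bL-1)[\widehat X_{\Gamma\smallsetminus e}\cap\widehat X_{\Gamma/e}]$ that the paper gets by inclusion--exclusion on the two components), and then eliminate the intersection class via \eqref{motdelcon} applied to $\Gamma$ itself. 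The only slip is that your displayed formula \eqref{planpsi} swaps the roles of $\Psi_{\Gamma\smallsetminus e}$ and $\Psi_{\Gamma/e}$ --- the correct identity is $\Psi_{\Gamma_{2e}}=t_nt_{n+1}\,\Psi_{\Gamma\smallsetminus e}+(t_n+t_{n+1})\,\Psi_{\Gamma/e}$ --- but everything you derive afterwards is consistent with the correct version, so this is a typo rather than a gap.
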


\proof The formulae for the cases of a bridge or a looping edges 
follow immediately from elementary considerations, as shown
in \S 5 of \cite{AluMa1}. Thus, we concentrate on the remaining
case of \eqref{double3}, where we use the deletion--contraction
rule \eqref{motdelcon}.

Let $\Psi_{\Gamma}$, $\Psi_{\Gamma_{2e}}$ be the Kirchhoff polynomials
corresponding to the graphs $\Gamma$ and $\Gamma_{2e}$, respectively. 
We can write, as in the previous sections,
\[
\Psi_\Gamma=t_e\, F+G\quad,
\]
where $F$ is the polynomial for $\Gamma\smallsetminus e$ and $G$ is the polynomial
for $\Gamma/e$. If $e$ is replaced by the parallel edges $e$, $e'$ in
$\Gamma_{2e}$, then
\[
\Psi_{\Gamma_{2e}}=t_e t_{e'}\, F+(t_e+t_{e'})\,G
=t_{e'} (t_e F+G)+t_e G=t_{e'} \Psi_\Gamma + t_e\,G .
\]
Indeed, the term $t_e F$ in $\Psi_\Gamma$ collects the monomials 
corresponding to spanning forests that do not include $e$. The edge variable
$t_e$ is replaced by $t_e t_{e'}$ in those monomials. The term $G$ collects
monomials corresponding to spanning forests that do include $e$.
Each such monomial will appeare twice, multiplied by $t_{e'}$ when
the spanning forest is taken to include $e$, and again multiplied
by $t_e$ when the forest is taken to include $e'$.

We then apply the deletion--contraction rule to $\Psi_{\Gamma_{2e}}$,
by focusing on $e'$. Since deleting $e'$ gives us back the graph $\Gamma$,
the formula \eqref{motdelcon} gives
\begin{equation}\label{Gamma2e}
\bU(\Gamma_{2e})=\bL\cdot [\A^n\smallsetminus (\widehat X_\Gamma
\cap \widehat X_{\Gamma_o})]- \bU(\Gamma),
\end{equation}
where $n$ is the number of edges of $\Gamma$ and $\Gamma_o$ denotes
the graph obtained by attaching a looping edge named $e$ to 
$\Gamma/e$. The equation for $\Gamma_o$ is $t_e\,G$.
The ideal for this intersection is
\[
(\Psi_\Gamma,t_e\,G)\quad,
\]
so the intersection is the union of the loci defined by
\[
(\Psi_\Gamma, t_e)\ \ \ \text{ and }\ \ \  (\Psi_\Gamma, G).
\]
Simple ideal manipulations give
\[
(\Psi_\Gamma, t_e)=(t_e F+G,t_e)=(G,t_e) ,
\]
\[
(\Psi_\Gamma, G)=(t_e F+G,G)=(t_e F,G) .
\]
The latter ideal is supported on the union of the loci corresponding to
$(G,t_e)$ and $(F,G)$. The conclusion is that
\begin{equation}\label{Xo}
\widehat X_\Gamma \cap \widehat X_{\Gamma_o}
=(H\cap \widehat X'_{\Gamma/e}) \cup 
(\widehat X'_{\Gamma\smallsetminus e}\cap \widehat X'_{\Gamma/e}),
\end{equation}
where $H$ denotes the hyperplane $t_e=0$ in $\A^n$, and the primed notation
place the hypersurfaces in $\A^n$. With this notation, if
$X\subseteq \P^{n-2}$, then $\widehat X$ stands for the affine
cone over $X$, in $\A^{n-1}$, and $\widehat X'$ is the 
`cylinder' over $\widehat X$, obtained by taking the same equation 
in the larger affine space $\A^n$. We have 
$H\cap \widehat X'=\widehat X$,
and $[\widehat X']=\bL\cdot[\widehat X]$. 

By inclusion--exclusion in 
the Grothendieck ring, applied to the case of cones and 
cylinders as in \S 5 of \cite{AluMa1}, we obtain
\begin{align*}
[\widehat X_\Gamma \cap \widehat X_{\Gamma_o}]
&=[H\cap \widehat X'_{\Gamma/e}] +
[\widehat X'_{\Gamma\smallsetminus e}\cap \widehat X'_{\Gamma/e}]
-[H\cap \widehat X'_{\Gamma\smallsetminus e}\cap \widehat X'_{\Gamma/e}]\\
&=[\widehat X_{\Gamma/e}] +
(\bL-1)\cdot [\widehat X_{\Gamma\smallsetminus e}\cap \widehat X_{\Gamma/e}] .
\end{align*}
Notice that the hats on the left-hand side place the hypersurfaces
in $\A^n$, while on the right-hand side we view then in $\A^{n-1}$. This is as
it should: an affine graph hypersurface lives in a space of
dimension equal to the number of edges of the corresponding graph.

It follows then that
\begin{multline*}
[\A^n\smallsetminus (\widehat X_\Gamma \cap \widehat X_{\Gamma_o})]\\
=\bL^n 
+([\A^{n-1}- \widehat X_{\Gamma/e}]-\bL^{n-1})
+(\bL-1)\cdot (
[\A^{n-1}-(\widehat X_{\Gamma\smallsetminus e}\cap \widehat X_{\Gamma/e})]
-\bL^{n-1})\quad.
\end{multline*}
Carrying out the obvious cancellations, we get
\[
[\A^n\smallsetminus (\widehat X_\Gamma \cap \widehat X_{\Gamma_o})]
=\bU(\Gamma/e)+
(\bL-1)\cdot [\A^{n-1}-(\widehat X_{\Gamma\smallsetminus e}\cap \widehat X_{\Gamma/e})].
\]
Notice that the intersection on
the right-hand side is precisely the one that appears in the
deletion--contraction rule for $e$ on $\Gamma$. (We are using essentially here the
hypothesis that $e$ not be a bridge or a looping edge.)

Thus, we obtain 
\[
\bU(\Gamma)
=\bL\cdot [\A^{n-1}\smallsetminus (\widehat X_{\Gamma\smallsetminus e}
\cap \widehat X_{\Gamma/e})]
-\bU(\Gamma\smallsetminus e) ,
\]
So that we have
\[
\bL\cdot 
[\A^n\smallsetminus (\widehat X_\Gamma \cap \widehat X_{\Gamma_o})]
=\bL\cdot \bU(\Gamma/e) 
+(\bL-1)\cdot(\bU(\Gamma)+\bU(\Gamma\smallsetminus e)).
\]
Then plugging this into \eqref{Gamma2e} we can finally conclude
\begin{align*}
\bU(\Gamma_{2e}) &=\left(\bL\cdot\bU(\Gamma/e) 
+(\bL-1)\cdot (\bU(\Gamma)
+\bU(\Gamma\smallsetminus e))\right)-\bU(\Gamma)\\
&=(\bL-2)\cdot \bU(\Gamma)+(\bL-1)\cdot\bU(\Gamma\smallsetminus e)+
\bL\cdot\bU(\Gamma/e),
\quad,
\end{align*}
which is the statement, with $\bT=\bL-1=[\bG_m]\in K_0(\cV)$.
\endproof

A more general formula for the class of $\Gamma_{me}$ can now be
obtained using the result of Proposition~\ref{doubling}. As in 
the case of the Tutte polynomial, this is best expressed in
terms of generating functions.

\begin{thm}\label{paraledges}
Let $e$ be an edge of a graph $\Gamma$.
\begin{enumerate}
\item If $e$ is a looping edge, then
\begin{equation}\label{medge1}
\sum_{m\ge 0} \bU(\Gamma_{me})\, \frac{s^m}{m!}
=e^{\bT s}\, \bU(\Gamma\smallsetminus e) .
\end{equation}
\item If $e$ is a bridge, then
\begin{equation}\label{medge2}
\sum_{m\ge 0} \bU(\Gamma_{me})\, \frac{s^m}{m!}
=\left(\bT\cdot\frac{e^{\bT s}- e^{-s}}{\bT+1}
+ s\, e^{\bT s} +1\right)\bU(\Gamma\smallsetminus e) .
\end{equation}
\item If $e$ is not a bridge nor a looping edge, then
\begin{equation}\label{medge3} 
\begin{array}{rl}
\displaystyle{\sum_{m\ge 0} \bU(\Gamma_{me})\, \frac{s^m}{m!} }
= & \displaystyle{ \frac{e^{\bT s}-e^{-s}}{\bT+1} \bU(\Gamma) } \\[3mm]
+ & \displaystyle{ \frac{e^{\bT s}+\bT e^{-s}}{\bT+1} \bU(\Gamma\smallsetminus e) } \\[3mm]
+ & \displaystyle{ \left( s\,e^{\bT s}-\frac{e^{\bT s}-e^{-s}}{\bT+1}\right) 
\bU(\Gamma/e) }. \end{array}
\end{equation}
\end{enumerate}
\end{thm}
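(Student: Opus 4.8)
\emph{Plan.} The driving idea is to use Proposition~\ref{doubling} iteratively. For $m\ge 2$ the graph $\Gamma_{me}$ is obtained by doubling one of the parallel copies $e'$ of $e$ inside $\Gamma_{(m-1)e}$, that is, $\Gamma_{me}=(\Gamma_{(m-1)e})_{2e'}$. Applying Proposition~\ref{doubling} to the pair $(\Gamma_{(m-1)e},e')$ expresses $\bU(\Gamma_{me})$ through $\bU(\Gamma_{(m-1)e})$, $\bU(\Gamma_{(m-1)e}\smallsetminus e')=\bU(\Gamma_{(m-2)e})$, and $\bU(\Gamma_{(m-1)e}/e')$. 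Now $\Gamma_{(m-1)e}/e'$ is $\Gamma/e$ with $m-2$ looping edges attached, and each looping edge multiplies the invariant by $\bT$ by \eqref{loopGr}, so $\bU(\Gamma_{(m-1)e}/e')=\bT^{m-2}\,\bU(\Gamma/e)$ whenever $e'$ is neither a bridge nor a looping edge of $\Gamma_{(m-1)e}$. When $e$ is neither a bridge nor a looping edge of $\Gamma$ this holds for all $m\ge 2$ (for $m=2$ the assertion is exactly \eqref{double3} applied to $\Gamma$ itself), and we obtain the second order recursion
\[
\bU(\Gamma_{me})=(\bT-1)\,\bU(\Gamma_{(m-1)e})+\bT\,\bU(\Gamma_{(m-2)e})+(\bT+1)\,\bT^{m-2}\,\bU(\Gamma/e),\qquad m\ge 2,
\]
with initial data $\bU(\Gamma_{0e})=\bU(\Gamma\smallsetminus e)$ and $\bU(\Gamma_{1e})=\bU(\Gamma)$.

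The looping and bridge cases are handled separately and more directly. If $e$ is a looping edge, then $\Gamma_{me}$ is just $\Gamma\smallsetminus e$ with $m$ looping edges, so $\bU(\Gamma_{me})=\bT^{m}\,\bU(\Gamma\smallsetminus e)$ by \eqref{loopGr}, and multiplying by $s^m/m!$ and summing gives \eqref{medge1}. If $e$ is a bridge, then \eqref{bridgeGr} gives $\bU(\Gamma)=(\bT+1)\,\bU(\Gamma\smallsetminus e)$ and $\bU(\Gamma/e)=\bU(\Gamma\smallsetminus e)$, while \eqref{double2} gives $\bU(\Gamma_{2e})=\bT(\bT+1)\,\bU(\Gamma\smallsetminus e)$; for $m\ge 3$ one again reaches the recursion above (now with $\bU(\Gamma/e)$ replaced by $\bU(\Gamma\smallsetminus e)$), since a parallel copy $e'$ of $e$ in $\Gamma_{(m-1)e}$ is no longer a bridge once $m-1\ge 2$. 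Thus in each of the three cases $\bU(\Gamma_{me})$ is governed by a linear recursion with small explicit initial data.

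It then remains to convert the recursion into a generating function. Setting $A(s)=\sum_{m\ge 0}\bU(\Gamma_{me})\,s^m/m!$, the recursion becomes the linear ODE
\[
A''-(\bT-1)\,A'-\bT\,A=(\bT+1)\,\bU(\Gamma/e)\,e^{\bT s}+\gamma,
\]
where $\gamma$ is a constant read off from the initial data ($\gamma=0$ in the non-bridge non-loop case, and $\gamma=-\bT\,\bU(\Gamma\smallsetminus e)$ in the bridge case), subject to $A(0)=\bU(\Gamma\smallsetminus e)$ and $A'(0)=\bU(\Gamma)$. The characteristic polynomial factors as $(x-\bT)(x+1)$, so the homogeneous solutions are $e^{\bT s}$ and $e^{-s}$; since the forcing term $(\bT+1)\bT^{m-2}$ resonates with the root $\bT$, the function $\bU(\Gamma/e)\,s\,e^{\bT s}$ is a particular solution (the factor $\bT+1$ cancels, so no element of $K_0(\cV)$ is inverted), and $-\gamma/\bT$ accounts for the constant. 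Matching the two initial conditions and regrouping the coefficients according to $\bU(\Gamma)$, $\bU(\Gamma\smallsetminus e)$, $\bU(\Gamma/e)$ yields exactly \eqref{medge2} and \eqref{medge3}. Equivalently, and perhaps more transparently, one verifies directly that the stated right-hand sides --- read as elements of $K_0(\cV)\dbr{s}$, with $\frac{e^{\bT s}-e^{-s}}{\bT+1}=\sum_{m\ge 1}\frac{\bT^m-(-1)^m}{\bT+1}\,\frac{s^m}{m!}$, whose coefficients $\frac{\bT^m-(-1)^m}{\bT+1}$ are honest polynomials in $\bT$ with integer coefficients --- satisfy the recursion coefficient by coefficient; this is a short check on the three ``columns'' multiplying $\bU(\Gamma)$, $\bU(\Gamma\smallsetminus e)$, $\bU(\Gamma/e)$, plus the initial values at $m=0,1,2$.

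The substantive content lives in Proposition~\ref{doubling}, whose proof carries out the one genuinely geometric step --- the identification \eqref{Xo} of $\widehat X_\Gamma\cap\widehat X_{\Gamma_o}$ and the ensuing cancellations in the Grothendieck ring; granting that, the present theorem is bookkeeping. The two points deserving care are: tracking which clause of Proposition~\ref{doubling} is invoked when doubling an edge of $\Gamma_{(m-1)e}$, since this depends on the type of the \emph{original} edge $e$ in $\Gamma$ and on whether $m\le 2$ (this is why the bridge case needs a separate value for $\bU(\Gamma_{2e})$ and the recursion only from $m\ge 3$); and ensuring that all manipulations involving $1/(\bT+1)$ take place at the level of formal power series over $K_0(\cV)$ whose coefficients are polynomials in $\bT$, so that nothing is actually inverted in the Grothendieck ring.
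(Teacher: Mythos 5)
Your proposal is correct and follows essentially the same route as the paper's proof: establish the second-order recursion $\bU(\Gamma_{me})=(\bT-1)\bU(\Gamma_{(m-1)e})+\bT\,\bU(\Gamma_{(m-2)e})+(\bT+1)\bT^{m-2}\bU(\Gamma/e)$ from Proposition~\ref{doubling} (identifying the contraction as $\Gamma/e$ with looping edges), translate it into the linear ODE with characteristic roots $\bT$ and $-1$ and resonant particular solution $s\,e^{\bT s}$, and match initial data, with the bridge case requiring the same truncation constant $-\bT\,\bU(\Gamma\smallsetminus e)$ that the paper builds into its equation \eqref{eqdiffE2}. Your explicit remark that the coefficients $\frac{\bT^m-(-1)^m}{\bT+1}$ are genuine polynomials, so that nothing is inverted in $K_0(\cV)$, is a point the paper leaves implicit but is worth making.
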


\proof (1) If $e$ is a looping edge, then 
$$ \bU(\Gamma_{me})= \bT^m\bU(\Gamma\smallsetminus e) , $$ 
as shown in \cite{AluMa2}, \S 2.2.

(2) For the case of a bridge, by the multiplicative
properties of abstract Feynman rules, we can write
$$ \bU(\Gamma_{me})=\epsilon_m(\bT) \bU(\Gamma\smallsetminus e), $$ 
for $m\ge 0$ and for some function $\epsilon_m$ of $\bT$, 
see Proposition~2.5 of \cite{AluMa2}. Indeed, the function
$\epsilon_m(\bT)$ is the class of the $m$-th banana graph,
which we computed explicitly in \cite{AluMa1}. 
In fact, we do not need to use the explicit
computation of $\epsilon_m(\bT)$ given in
\cite{AluMa1}, since we are going to obtain the
expression for $\epsilon_m(\bT)$ again here in a 
different way. We have
\begin{align*}
\bU(\Gamma_{0e}) &= \bU(\Gamma\smallsetminus e) \\
\bU(\Gamma_{1e}) &= (\bT+1)\cdot \bU(\Gamma\smallsetminus e) \\
\bU(\Gamma_{2e}) &= \bT(\bT+1)\cdot \bU(\Gamma\smallsetminus e)
\end{align*}
by Proposition~\ref{doubling}. For $m\ge 2$ we then have
\[
\bU(\Gamma_{(m+1)e}) = (\bT-1)\bU(\Gamma_{me})
+\bT \bU(\Gamma_{(m-1)e}) + (\bT+1)\bT^{m-1} \bU(\Gamma/e),
\]
according again to Proposition~\ref{doubling}, used to double one of
the $m$ parallel edges, which is not a bridge for $m\ge 2$.
For the third term on the right-hand side, notice that contracting
one of the $m$ parallel edges produces $m-1$ looping edges
attached to $\Gamma/e$. We then apply \cite{AluMa2}, \S2.2 to deal 
with looping edges. 
Since, in the case where $e$ is a bridge, one has $\Gamma/e=\Gamma\smallsetminus e$,
this says that
\[
\bU(\Gamma_{(m+1)e}) =
((\bT-1)\epsilon_m (\bT)+\bT\epsilon_{m-1}(\bT)
+(\bT+1)\bT^{m-1}) \bU(\Gamma\smallsetminus e)
\]
for $m\ge 2$. Thus, we obtain the family of functions 
$\epsilon_m$ as needed by solving the recurrence relation
\begin{align*}
\epsilon_0(\bT)&= 1 \\
\epsilon_1(\bT)&= \bT+1 \\
\epsilon_2(\bT)&= \bT(\bT+1) \\
\epsilon_{m+1}(\bT) &= (\bT-1)\epsilon_m(\bT)+\bT\epsilon_{m-1}(\bT)
+(\bT+1) \bT^{m-1}\ \ \ \text{ for $m\ge 2$.}
\end{align*}

Consider then the series
\begin{equation}\label{Eseries}
E(s):=\sum_{m\ge 0} \epsilon_m(\bT)\, \frac{s^m}{m!} ,
\end{equation}
so that $E(s)\cdot \bU(\Gamma\smallsetminus e)$ is the generating
function in the statement \eqref{medge2}. The recursion deals with
the coefficients $\epsilon_i$ for $i\ge 1$. It can be expressed as a relation
involving the function~$E$, taking care to truncate the first couple of terms
which are not covered by the recursion. The recursion can then be
expressed as the differential equation
\[
E''(s)-\bT(\bT+1)=(\bT-1)(E'(s)-(\bT+1))+\bT (E(s)-1)+(\bT+1)e^{\bT s}
-(\bT+1),
\]
that is,
\begin{equation}\label{eqdiffE2}
E''(s)-(\bT-1)E'(s)-\bT E(s)=(\bT+1) e^{\bT s}-\bT\quad.
\end{equation}
It is immediately checked that
\[
s\,e^{\bT s}+1
\]
is one solution of the differential equation \eqref{eqdiffE2}, and 
standard techniques show that the general solution is then of the form
\[
A\, e^{\bT s}+B\, e^{-s}+s\,e^{\bT s}+1 .
\]
Matching the initial conditions for $\epsilon_0$ and $\epsilon_1$
determines
\[
A=\frac \bT{\bT+1}\ \ \ \text{ and } \ \ \  B=-\frac \bT{\bT+1}.
\]
This yields the formula \eqref{medge2}.

(3) The situation where $e$ is not a bridge nor a looping edge is
very similar. Let
\[
\bU(\Gamma_{me}) = f_m(\bT)\, \bU(\Gamma)
+ g_m(\bT)\, \bU(\Gamma\smallsetminus e)
+ h_m(\bT)\, \bU(\Gamma/e) .
\]
These coefficients satisfy
\begin{equation}\label{fgh0}
\begin{cases}
f_0(\bT) = 0\quad, & f_1(\bT) = 1\\
g_0(\bT) = 1\quad, & g_1(\bT) = 0\\
h_0(\bT) = 0\quad, & h_1(\bT) = 0,
\end{cases}
\end{equation}
while for $m\ge 1$ the expression
\[
\bU(\Gamma_{(m+1)e})
=(\bT-1)\bU(\Gamma_{me}) +\bT \bU(\Gamma_{(m-1)e})
+(\bT+1)\bT^{m-1} \bU(\Gamma/e) 
\]
gives
\begin{align*}
\bU(\Gamma_{(m+1)e})
=&(\bT-1)(f_m(\bT)\bU(\Gamma) 
+ g_m(\bT)\, \bU(\Gamma\smallsetminus e)
+ h_m(\bT)\, \bU(\Gamma/e)) \\
&+ \bT\, (f_{m-1}(\bT)\bU(\Gamma) 
+ g_{m-1}(\bT)\, \bU(\Gamma\smallsetminus e)
+ h_{m-1}(\bT)\, \bU(\Gamma/e)) \\
&+(\bT+1) \bT^{m-1} \bU(\Gamma/e) \\
=&((\bT-1) f_m(\bT)+\bT f_{m-1}(\bT))\, \bU(\Gamma)\\
&+((\bT-1) g_m(\bT)+\bT g_{m-1}(\bT))\, \bU(\Gamma\smallsetminus e)\\
&+((\bT-1) h_m(\bT)+\bT h_{m-1}(\bT)+
(\bT+1) \bT^{m-1})\, \bU(\Gamma\smallsetminus e) .
\end{align*}
This says that the functions $f_m$, $g_m$, $h_m$ satisfy the recurrence
\[
\begin{cases}
f_{m+1} = (\bT-1) f_m + \bT f_{m-1} \\
g_{m+1} = (\bT-1) g_m + \bT g_{m-1} \\
h_{m+1} = (\bT-1) h_m + \bT h_{m-1} +(\bT+1) \bT^{m-1}
\end{cases}
\]
for $m\ge 1$. Now define the series
\[ \begin{array}{rl}
F(s):= & \displaystyle{\sum_{m\ge 0} f_m(\bT)\, \frac{s^m}{m!}},\\[3mm] 
G(s):= & \displaystyle{\sum_{m\ge 0} g_m(\bT)\, \frac{s^m}{m!}}
\\[3mm] 
H(s):= & \displaystyle{\sum_{m\ge 0} h_m(\bT)\, \frac{s^m}{m!}}, \end{array}
\]
so that
\begin{equation}\label{FGHseries}
\sum_{m\ge 0} \bU(\Gamma_{me}) \frac{s^m}{m!}
=F(s)\bU(\Gamma) + G(s) \bU(\Gamma\smallsetminus e) + H(s) \bU(\Gamma/e) .
\end{equation}
The recursions translate into the differential equations
\begin{align*}
F''(s)-(\bT-1)F'(s)-\bT F(s) &=0\\
G''(s)-(\bT-1)G'(s)-\bT G(s) &=0\\
H''(s)-(\bT-1)H'(s)-\bT H(s) &=(\bT+1) e^{\bT s} .
\end{align*}
Notice that in these cases
the recursion covers the initial indices as well, so it is not necessary to 
`truncate off' the initial terms of the series.

The homogeneous part of these equations agrees with the
homogeneous part of the equation \eqref{eqdiffE2} for $E(s)$ 
solved above. Moreover, $s\, e^{\bT s}$ is one solution of 
the third equation. Therefore, the solutions are of the form
\begin{align*}
F(s) &=A_1 e^{\bT s}+ B_1 e^{-s} \\
G(s) &=A_2 e^{\bT s}+ B_2 e^{-s} \\
H(s) &=A_3 e^{\bT s}+ B_3 e^{-s}+ s\, e^{\bT s}
\end{align*}
for suitable functions $A_i$, $B_i$ of $\bT$.
The conditions listed in \eqref{fgh0} determine these functions,
and yield the formula \eqref{medge3} given in the statement.
\endproof

\begin{rem}\label{Hirzrem} {\rm
An interesting property of the coefficients of the various
classes in the formula \eqref{medge3} of Theorem~\ref{paraledges} 
is that the quotient of the coefficients of $\bU(\Gamma)$ and 
$\bU(\Gamma\smallsetminus e)$ is the function used in defining 
Hirzebruch's $T_y$ genus, in \S 11 of Chapter III of \cite{Hirz}.
This is more evident upon rewriting the formula \eqref{medge3}
in the form
\[   \left( \frac{e^{\bT s}-e^{-s}}{\bT+1}\right) \left( \bU(\Gamma) 
+ \frac{e^{(\bT+1) s}+\bT}{e^{(\bT+1) s}-1} \bU(\Gamma\smallsetminus e) 
+ \left( \frac{(\bT+1)s}{1-e^{-(\bT+1)s}} -1 \right) 
\bU(\Gamma/e) \right) 
\]
and then comparing this expression with the formula (2) on p.94 of \cite{Hirz}.}
\end{rem}

\smallskip

We state a few direct consequences of Theorem~\ref{paraledges}. 

\begin{cor}\label{eulercharcor}
If $e$ is not a bridge or a looping edge of $\Gamma$, and 
$\Gamma\smallsetminus e$ is not a forest, then with notation as in
\eqref{chiGamma},
\[
\sum_{m\ge 0} \chi_{\Gamma_{me}} \frac {s^m}{m!}
=(1-e^{-s})\chi_\Gamma +\chi_{\Gamma\smallsetminus e}+
(s-1+e^{-s}) \chi_{\Gamma/e}
\quad.
\]
\end{cor}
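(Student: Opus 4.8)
The plan is to deduce the formula from the generating‑function identity \eqref{medge3} of Theorem~\ref{paraledges}(3) by applying a realization. The naive attempt of feeding \eqref{medge3} directly to the Euler characteristic ring homomorphism $\chi\colon K_0(\cV)\to\Z$ is doomed: the affine hypersurface complement carries a free $\bG_m$‑action, so $\chi(\bU(\Delta))=0$ for every non‑forest graph $\Delta$, and one would obtain only the trivial identity $0=0$. What must be done first is to divide \eqref{medge3} by the common factor $\bL-1$; since $K_0(\cV)$ has zero divisors, this forces a passage to an invariant with values in an integral domain.

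First I would check that, under the stated hypotheses, each of the graphs $\Gamma_{me}$ (for $m\ge 0$), $\Gamma$, $\Gamma\smallsetminus e$, and $\Gamma/e$ is \emph{not} a forest. Indeed $b_1(\Gamma\smallsetminus e)\ge 1$ by assumption; since $e$ is not a bridge, both endpoints of $e$ lie in one connected component of $\Gamma\smallsetminus e$, so $b_1(\Gamma)=b_1(\Gamma\smallsetminus e)+1$ and $b_1(\Gamma_{me})=b_1(\Gamma\smallsetminus e)+m$; and $b_1(\Gamma/e)=b_1(\Gamma)$ since $e$ is not a looping edge. Consequently, exactly as in the proof of Corollary~\ref{affvers}, for any such graph $\Delta$ with $N$ edges the affine cone $\widehat X_\Delta$ contains the origin, so that
\[
\bU(\Delta)=[\A^N\smallsetminus\widehat X_\Delta]=(\bL-1)\cdot[\P^{N-1}\smallsetminus X_\Delta]
\]
in $K_0(\cV)$, where $\chi\bigl([\P^{N-1}\smallsetminus X_\Delta]\bigr)=\chi_\Delta$.

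Next I would apply the Hodge--Deligne ($E$‑polynomial) ring homomorphism $E\colon K_0(\cV)\to\Z[u,v]$, which satisfies $E(\bL)=uv$ (whence $E(\bT)=E([\bG_m])=uv-1$) and $E([X])|_{u=v=1}=\chi_c(X)=\chi(X)$ for every complex variety $X$, to the identity \eqref{medge3}. By the preceding paragraph, every term of the resulting identity in the integral domain $\Z[u,v][[s]]$ is divisible by $uv-1$, so this factor may be cancelled. The surviving identity expresses $\sum_{m\ge 0}E\bigl([\P^\bullet\smallsetminus X_{\Gamma_{me}}]\bigr)\,s^m/m!$ as a combination of $E\bigl([\P^\bullet\smallsetminus X_\Gamma]\bigr)$, $E\bigl([\P^\bullet\smallsetminus X_{\Gamma\smallsetminus e}]\bigr)$, and $E\bigl([\P^\bullet\smallsetminus X_{\Gamma/e}]\bigr)$, with coefficient series obtained from those of \eqref{medge3} by the substitution $\bT\mapsto uv-1$. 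Finally I would specialize $u=v=1$: each class $E\bigl([\P^\bullet\smallsetminus X_\Delta]\bigr)$ becomes $\chi_\Delta$, and, via $\bT\mapsto 0$, the coefficient series $\dfrac{e^{\bT s}-e^{-s}}{\bT+1}$, $\dfrac{e^{\bT s}+\bT e^{-s}}{\bT+1}$, and $s\,e^{\bT s}-\dfrac{e^{\bT s}-e^{-s}}{\bT+1}$ of \eqref{medge3} become $1-e^{-s}$, $1$, and $s-1+e^{-s}$ respectively. This is precisely the asserted formula.

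The only genuine obstacle is the cancellation of $\bL-1$: one cannot divide inside $K_0(\cV)$, so it is essential to move to a domain‑valued realization (the $E$‑polynomial) before dividing and then recover $\chi$ by specialization. A self‑contained alternative, avoiding $E$‑polynomials, is to re‑run, mutatis mutandis, the arguments leading to Theorem~\ref{paraledges} with the projective complement classes $[\P^\bullet\smallsetminus X_\Delta]$ in place of the classes $\bU(\Delta)$: the projective deletion--contraction relation \eqref{delconP} is available throughout, since every graph occurring has at least two loops, and attaching a looping edge to a non‑forest graph $\Delta$ with $N$ edges multiplies its projective complement class by $\bT$ (the new class equals $[\A^N\smallsetminus\widehat X_\Delta]=\bU(\Delta)=\bT\cdot[\P^{N-1}\smallsetminus X_\Delta]$, as its Kirchhoff polynomial has merely acquired a factor $t_{e_0}$ for the new edge $e_0$); applying $\chi$ to the resulting identity and setting $\bT\mapsto 0$ again yields the claim.
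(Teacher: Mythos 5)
Your proposal is correct and follows the same route as the paper's own one-line proof: use $\bU(\Delta)=\bT\cdot[\P^{N-1}\smallsetminus X_\Delta]$ for the non-forest graphs $\Gamma_{me}$, $\Gamma$, $\Gamma\smallsetminus e$, $\Gamma/e$ to strip the factor $\bT$ off \eqref{medge3}, then specialize to Euler characteristics. You add two refinements worth noting. First, you justify the cancellation of $\bT=\bL-1$, which cannot simply be "divided out" in $K_0(\cV)$ since that ring has zero divisors; routing through the $E$-polynomial (or, as in your alternative, re-deriving the identity directly for the projective classes via \eqref{delconP}) closes this small gap that the paper leaves implicit. Second, your specialization is the right one: the Euler characteristic corresponds to $\bL\mapsto 1$, i.e.\ $\bT\mapsto 0$, and indeed the coefficients $\frac{e^{\bT s}-e^{-s}}{\bT+1}$, $\frac{e^{\bT s}+\bT e^{-s}}{\bT+1}$, $s e^{\bT s}-\frac{e^{\bT s}-e^{-s}}{\bT+1}$ evaluate at $\bT=0$ to $1-e^{-s}$, $1$, $s-1+e^{-s}$ as required; the paper's phrase "setting $\bT=1$" is evidently a slip (it would yield $\sinh s$ and $\cosh s$) and should read "setting $\bL=1$". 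Your verification that all the graphs involved are non-forests (via the bookkeeping of $b_1$) is also a hypothesis check the paper omits.
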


\begin{proof}
This is obtained from \eqref{medge3} by dividing through by $\bT$ and
then setting $\bT=1$, since if $\Gamma$ has $n$ edges and is not a forest,
then $\bU(\Gamma) =\bT\cdot [\bP^{n-1}\smallsetminus X_\Gamma]$
(\cite{AluMa2}, Lemma~2.6).
\end{proof}

\begin{cor}\label{bananacor}
Starting with the graph $\Gamma$ that consists of a single edge (hence
a bridge), the formula \eqref{medge2} recovers the class of the
hypersurface complements of the banana graphs
\begin{equation}\label{bananaT}
 \bT\, \frac{\bT^m-(-1)^m}{\bT+1}
+ m\, \bT^{m-1}, 
\end{equation}
for $m\ge 1$, and $1$ for $m=0$.
\end{cor}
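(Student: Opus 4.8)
The plan is to specialize the bridge generating function \eqref{medge2} to the case $\Gamma=$ a single edge, and then read off the coefficients of the resulting exponential generating series. First I would note that when $\Gamma$ consists of a single edge $e$, this edge is a bridge, so \eqref{medge2} applies, and the graphs $\Gamma_{me}$ obtained by replacing $e$ with $m$ parallel edges are precisely the banana graphs: the $m$-th banana graph is the graph on two vertices joined by $m$ parallel edges, with $\Gamma_{1e}=\Gamma$ the single edge and $\Gamma_{0e}=\Gamma\smallsetminus e$ the edgeless graph on the two endpoints of $e$.

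Next I would pin down $\bU(\Gamma\smallsetminus e)$. Since $\Gamma\smallsetminus e$ has no edges, the affine hypersurface complement lives in $\A^0$, the Kirchhoff polynomial is the empty product $1$, so the hypersurface is empty and $\bU(\Gamma\smallsetminus e)=[\A^0]=1$ in $K_0(\cV)$. Substituting into \eqref{medge2} collapses the right-hand side to
$$\sum_{m\ge 0}\bU(\Gamma_{me})\,\frac{s^m}{m!}=\bT\cdot\frac{e^{\bT s}-e^{-s}}{\bT+1}+s\,e^{\bT s}+1.$$

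Finally I would extract the coefficient of $s^m/m!$ on the right. The constant term $1$ contributes only to $m=0$, where one checks that the whole expression equals $1$ at $s=0$. For $m\ge 1$, the coefficient of $s^m/m!$ in $e^{\bT s}$ is $\bT^m$ and in $e^{-s}$ is $(-1)^m$, so the first summand contributes $\bT\cdot\frac{\bT^m-(-1)^m}{\bT+1}$; and since $s\,e^{\bT s}=\sum_{k\ge 0}\bT^k s^{k+1}/k!$, its coefficient of $s^m/m!$ is $m\,\bT^{m-1}$. Adding these gives the stated formula \eqref{bananaT}. There is no genuine obstacle here beyond bookkeeping with the exponential generating function; the one point meriting a moment's care is the normalization $\bU(\Gamma\smallsetminus e)=1$ for the edgeless graph, which can be cross-checked against the low values $\bU(\Gamma_{1e})=\bT+1$ and $\bU(\Gamma_{2e})=\bT(\bT+1)$ coming from Proposition~\ref{doubling}, and which reproduces the class of the banana graphs computed directly in \cite{AluMa1}.
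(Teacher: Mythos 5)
Your proposal is correct and follows essentially the same route as the paper: specialize \eqref{medge2} to the single-edge graph (where $\bU(\Gamma\smallsetminus e)=1$) and read off $m!$ times the coefficient of $s^m$ in $\bT\,\frac{e^{\bT s}-e^{-s}}{\bT+1}+s\,e^{\bT s}+1$. The only difference is that you spell out the normalization of the edgeless graph and the coefficient extraction explicitly, which the paper leaves implicit.
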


\proof Using the formula \eqref{medge2} applied to the graph consisting
of a single edge one finds that $m!$ times
the coefficient of $s^m$ in
\[
\bT\, \frac{e^{\bT s}- e^{-s}}{\bT+1}
+ s\, e^{\bT s} +1
\] 
is precisely \eqref{bananaT}.
\endproof

It is easy to obtain similar expressions for the coefficients
$\bU(\Gamma)$ and the other terms in $\bU(\Gamma_{me})$ when
 $e$ is not a bridge nor a looping edge.

\begin{cor}\label{explGme}
If $e$ is not a bridge nor a looping edge of $\Gamma$, then
\[ \begin{array}{rl}
\bU(\Gamma_{me}) = &
\displaystyle{\frac{\bT^m-(-1)^m}{\bT+1} \bU(\Gamma)}\\[3mm]
+& \displaystyle{\frac{\bT^m+(-1)^m \bT}{\bT+1} \bU(\Gamma\smallsetminus e)} \\[3mm]
+& \displaystyle{\left(m\,\bT^{m-1}-\frac{\bT^m-(-1)^m}{\bT+1}\right)
\bU(\Gamma/e)}.\end{array}
\]
\end{cor}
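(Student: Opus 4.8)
The plan is to obtain $\bU(\Gamma_{me})$ as $m!$ times the coefficient of $s^m$ in the exponential generating function \eqref{medge3} of Theorem~\ref{paraledges}(3). Since the classes $\bU(\Gamma)$, $\bU(\Gamma\smallsetminus e)$, $\bU(\Gamma/e)$ are constants with respect to $s$, it suffices to expand each of the three scalar series multiplying them as a power series in $s$ and read off the coefficient of $s^m/m!$.

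First I would handle the coefficient of $\bU(\Gamma)$. From $e^{\bT s}=\sum_{m\ge 0}\bT^m s^m/m!$ and $e^{-s}=\sum_{m\ge 0}(-1)^m s^m/m!$ one gets at once that the coefficient of $s^m/m!$ in $(e^{\bT s}-e^{-s})/(\bT+1)$ is $(\bT^m-(-1)^m)/(\bT+1)$; this is a genuine element of $K_0(\cV)$ (in fact a polynomial in $\bT$), since $\bT+1$ divides $\bT^m-(-1)^m$. The same expansion gives the coefficient of $s^m/m!$ in $(e^{\bT s}+\bT e^{-s})/(\bT+1)$ as $(\bT^m+(-1)^m\bT)/(\bT+1)$, which is the asserted coefficient of $\bU(\Gamma\smallsetminus e)$. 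For the coefficient of $\bU(\Gamma/e)$ the only new ingredient is $s\,e^{\bT s}$: writing $s\,e^{\bT s}=\sum_{k\ge 0}\bT^k s^{k+1}/k!=\sum_{m\ge 1} m\,\bT^{m-1}\, s^m/m!$ shows its coefficient of $s^m/m!$ is $m\,\bT^{m-1}$ (and $0$ for $m=0$); subtracting the coefficient $(\bT^m-(-1)^m)/(\bT+1)$ of the first series yields $m\,\bT^{m-1}-(\bT^m-(-1)^m)/(\bT+1)$, as claimed. Assembling the three coefficients gives the formula, and I would record the consistency checks $m=0$, which reduces to $\bU(\Gamma\smallsetminus e)=\bU(\Gamma_{0e})$, and $m=1$, which reduces to $\bU(\Gamma)=\bU(\Gamma_{1e})$, matching the conventions $\Gamma_{0e}=\Gamma\smallsetminus e$ and $\Gamma_{1e}=\Gamma$.

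There is essentially no obstacle: this is a mechanical coefficient extraction from an already established generating function, and the only point worth a comment is that the divisions by $\bT+1$ are formal and cancel, so each coefficient genuinely lies in $K_0(\cV)$, which is visible from the factorizations above. If one preferred an argument not routed through generating functions, I would instead prove the closed formula by induction on $m$ directly from the three-term recursion $\bU(\Gamma_{(m+1)e})=(\bT-1)\bU(\Gamma_{me})+\bT\,\bU(\Gamma_{(m-1)e})+(\bT+1)\bT^{m-1}\bU(\Gamma/e)$ established inside the proof of Theorem~\ref{paraledges}, verifying the base cases $m=0,1$ and checking that the proposed expressions satisfy the recursion term by term.
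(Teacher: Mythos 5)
Your proposal is correct and is exactly the paper's argument: the paper proves Corollary~\ref{explGme} by reading the coefficients of $s^m/m!$ off the generating function \eqref{medge3} of Theorem~\ref{paraledges}, precisely as you do. Your added checks (divisibility by $\bT+1$ and the base cases $m=0,1$) are sound but not needed beyond what the paper records.
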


\proof The result follows, as in the case of Corollary
\ref{bananacor}, by reading the coefficients off the formula
\eqref{medge3} of Theorem~\ref{paraledges}.
\endproof

The first and second coefficients in are of course just alternating sums
of powers of $\bT$. One gets the second from the first
by dropping the constant term.
It is perhaps less evident that the third coefficient 
has the factorization
\[
(\bT+1)
\left( (m-1) \bT^{m-2}-(m-2) \bT^{m-3}+(m-3) \bT^{m-4}
-\cdots +(-1)^m \right).
\]
The interesting factor is the derivative of the second
coefficient. Calling
$f_m(\bT)$, $g_m(\bT)$, $h_m(\bT)$ the three coefficients,
as in the proof of Theorem~\ref{paraledges}, the statement is
that
\[
f_m=g_m-(-1)^m\quad,\quad h_m=(\bT+1)\, g'_m .
\]

Also notice that the formula \eqref{bananaT} for the banana
graph obtained in \cite{AluMa1} and in Corollary \ref{bananacor}
above, can also be described (for $m\geq 1$) in the form
\begin{equation}\label{bananaTdiff}
(\bT+1) \left(
\frac{\bT^m-(-1)^m}{\bT+1}
+\frac d{d\bT} \frac{\bT^m+(-1)^m \bT}{\bT+1}
\right).
\end{equation}

One can also formulate the result of Theorem~\ref{paraledges} in terms
of algebraic generating functions in the following form.

\begin{cor}\label{algrecur}
Let $e$ be an edge of a graph $\Gamma$.
\begin{itemize}
\item If $e$ is a looping edge, then
\[
\sum_{m\ge 0} \bU(\Gamma_{me})\, s^m
=\frac 1{1-\bT s}\, \bU(\Gamma\smallsetminus e) .
\]
\item If $e$ is a bridge, then
\[
\sum_{m\ge 0} \bU(\Gamma_{me})\, s^m
=\frac 1{(1+s)(1-\bT s)} \left(
1+s(1-\bT s)+\frac{s(1+s)}{1-\bT s}
\right)\bU(\Gamma\smallsetminus e) .
\]
\item If $e$ is not a bridge nor a looping edge, then
\[
\sum_{m\ge 0} \bU(\Gamma_{me})\, s^m
= \]\[ \frac 1{(1+s)(1-\bT s)}\left(
s\bU(\Gamma)
+(1+s-\bT s)\bU(\Gamma\smallsetminus e)
+\frac{(\bT+1) s^2}{1-\bT s}\bU(\Gamma/e)
\right).
\]
\end{itemize}
\end{cor}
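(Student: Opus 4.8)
The plan is to avoid introducing any new differential equations and instead read the ordinary generating functions off the closed-form coefficient expressions already established. In the looping case nothing is needed beyond $\bU(\Gamma_{me})=\bT^m\bU(\Gamma\smallsetminus e)$ (\cite{AluMa2}, \S2.2), which immediately gives $\sum_{m\ge0}\bU(\Gamma_{me})s^m=(1-\bT s)^{-1}\bU(\Gamma\smallsetminus e)$. For the other two cases I would first record the three elementary power-series identities that will be used repeatedly:
\[
\sum_{m\ge0}\bT^m s^m=\frac1{1-\bT s},\qquad
\sum_{m\ge0}(-1)^m s^m=\frac1{1+s},\qquad
\sum_{m\ge1}m\bT^{m-1}s^m=\frac s{(1-\bT s)^2},
\]
the last obtained by differentiating $\sum_{m\ge0}u^m=(1-u)^{-1}$ and setting $u=\bT s$.

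For a bridge, I would take the explicit coefficient $\epsilon_m(\bT)=\bT\frac{\bT^m-(-1)^m}{\bT+1}+m\bT^{m-1}$ for $m\ge1$, together with $\epsilon_0(\bT)=1$, from Corollary~\ref{bananacor}, split the sum $\sum_{m\ge0}\epsilon_m(\bT)s^m$ into the $m=0$ term and the $m\ge1$ part, substitute the three identities above, and reduce over the common denominator $(1+s)(1-\bT s)$. The resulting rational function is matched against $\frac1{(1+s)(1-\bT s)}\bigl(1+s(1-\bT s)+\frac{s(1+s)}{1-\bT s}\bigr)$ by the one-line identity $(1+s)(1-\bT s)+\bT s=1+s-\bT s^2$, which shows the two expressions agree after the $\frac{s}{(1-\bT s)^2}$ contributions cancel.

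For the remaining case I would use the coefficients $f_m(\bT)=\frac{\bT^m-(-1)^m}{\bT+1}$, $g_m(\bT)=\frac{\bT^m+(-1)^m\bT}{\bT+1}$, $h_m(\bT)=m\bT^{m-1}-\frac{\bT^m-(-1)^m}{\bT+1}$ from Corollary~\ref{explGme} (valid for all $m\ge0$) in $\bU(\Gamma_{me})=f_m(\bT)\bU(\Gamma)+g_m(\bT)\bU(\Gamma\smallsetminus e)+h_m(\bT)\bU(\Gamma/e)$. Summing each coefficient against $s^m$ and applying the three identities yields, after the simplifications $(1+s)-(1-\bT s)=(\bT+1)s$ and $(1+\bT)+(1-\bT^2)s=(\bT+1)(1+s-\bT s)$, the values $\sum_m f_m(\bT)s^m=\frac s{(1+s)(1-\bT s)}$, $\sum_m g_m(\bT)s^m=\frac{1+s-\bT s}{(1+s)(1-\bT s)}$, and $\sum_m h_m(\bT)s^m=\frac{(\bT+1)s^2}{(1+s)(1-\bT s)^2}$, which reassemble into the stated formula.

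The computation is entirely elementary, so I do not expect a genuine obstacle; the only points needing care are the bookkeeping of the low-order terms (the $m=0$ contribution in every case, and the fact that the closed form of Corollary~\ref{bananacor} is asserted only for $m\ge1$) and the consistent choice of common denominators. As an alternative, one could produce the same identities directly from the exponential generating functions of Theorem~\ref{paraledges}, using the term-by-term correspondence $e^{as}\leftrightarrow(1-as)^{-1}$ and $s\,e^{as}\leftrightarrow s(1-as)^{-2}$ between exponential and ordinary generating functions; this requires essentially the same amount of algebra.
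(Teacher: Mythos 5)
Your proposal is correct and follows one of the two routes the paper itself indicates for this corollary, namely reading the ordinary generating functions directly off the explicit coefficients of Corollary~\ref{explGme} (and, for the bridge case, the banana-graph formula of Corollary~\ref{bananacor}) by summing geometric series; the algebraic simplifications you cite, including the identity $(1+s)(1-\bT s)+\bT s=1+s-\bT s^2$ and the care with the $m=0$ term in the bridge case, all check out.
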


\proof
These formulae are obtained by solving algebraic equations obtained from
the same recursions derived in the course of proving Theorem~\ref{paraledges},
or else directly from the explicit expressions of Corollary~\ref{explGme} and the
discussion leading to them.
\endproof

\subsection{Chains of polygons in graphs}\label{chainsSec}

As an application of the formulae obtained in Theorem~\ref{paraledges}
for parallel edges in a graph, we can provide formulae for graphs
obtained as chains of polygons. For instance, in the example given 
in Figure \ref{chainFig} one obtains that the corresponding class
$\bU(\Gamma)$ is 
$$ \bT^4(\bT+1)^{17}(\bT^3+6\bT^2+9\bT+1). $$
These graphs are inductively obtained by attaching a new polygon to
one free side of the last polygon included in the graph. It should be possible
to give similar but more involved formulae for the more general case 
in which polygons may be attached to any available free side, 
so long as no chain closes onto itself, but we only consider the
simpler class of examples here, as they suffice to illustrate the 
general principle.

\begin{figure}
\includegraphics[scale=0.5]{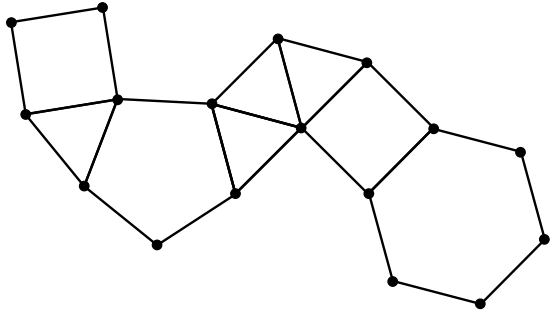}
\caption{A graph given by a chain of polygons.\label{chainFig}} 
\end{figure}

It is readily understood that, in fact, one only needs to deal with the case
in which all polygons are {\em triangles.\/} Indeed, up to isomorphism, 
the graph hypersurface is independent of the side chosen to attach the
last (and hence every) polygon: the two choices of Figure
\ref{twochoiFig} have isomorphic hypersurfaces. 

\begin{figure}
\includegraphics[scale=0.5]{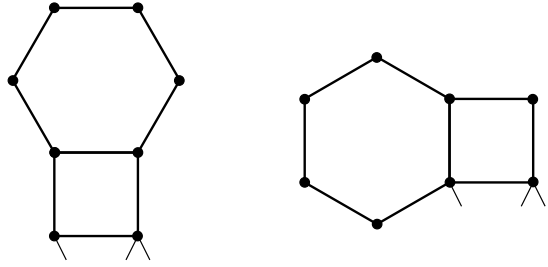}
\caption{A move on graphs which does not change the graph
hypersurface.\label{twochoiFig}} 
\end{figure}

\noindent This is because of an evident
bijection between the spanning trees of the two graphs, induced by the
switch of the two variables corresponding to the attaching edges in the
old polygon. So, for instance, the graph of Figure \ref{chain2Fig}
has graph hypersurface isomorphic to that of the one of Figure
\ref{chainFig}. 

\begin{figure}
\includegraphics[scale=0.5]{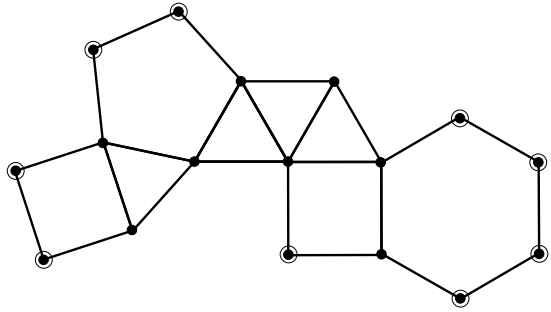}
\caption{Applying the move of Figure \ref{twochoiFig} to the graph of Figure
\ref{chainFig} does not change the graph hypersurface. \label{chain2Fig}} 
\end{figure}

Thus, we may assume that the free sides of each polygon
are all in a row. In the example of Figure \ref{chain2Fig}, 
the free vertices (marked by circles) may be obtained 
by multiple splittings of a free edge of a triangle, an operation that is 
controlled at the level of motivic invariants simply by multiplication by a 
power of $\bT+1$, since it corresponds to taking a cone (see \S 5 of
\cite{AluMa1}). Thus, all polygons in the graph of Figure
\ref{chain2Fig} may be reduced to triangles, by eliminating seven free
vertices, at the price of dividing the motivic class by a factor of
$(\bT+1)^7$. The resulting graph is illustrated in Figure
\ref{chaintriaFig}. This graph has class
\[
(\bT+1)^9\left(
\binom 80 \bT^8
+\binom 71 \bT^7
+\binom 62 \bT^6
+\binom 53 \bT^5
+\binom 44 \bT^4
\right) \]\[
=\bT^4(\bT+1)^{10}(\bT^3+6\bT^2+9\bT+1).
\]

\begin{figure}
\includegraphics[scale=0.5]{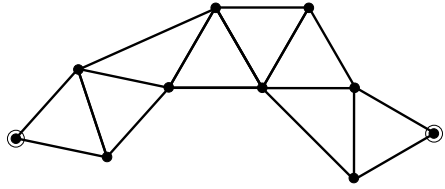}
\caption{Removing free vertices in the graph of Figure
\ref{chain2Fig}. \label{chaintriaFig}}  
\end{figure}

Since the attaching side is irrelevant, this reduces the problem of
computing the classes $\bU(\Gamma)$ of graphs obtained as chains of
polygons to that of computing the classes $\bU(\Lambda_m)$, where 
$\Lambda_m$ denotes the {\em lemon graph} with $m$ sections. For
example, the lemon graph $\Lambda_8$ of Figure \ref{lemonFig} has 
the same graph hypersurface as the graph in Figure
\ref{chaintriaFig}. 

\begin{figure}
\includegraphics[scale=0.5]{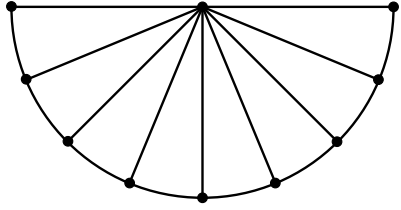}
\caption{The $8$th lemon graph $\Lambda_8$.\label{lemonFig}}
\end{figure}

The argument we described above in the example of 
Figure~\ref{chainFig} holds in general for such chains of polygons and it 
gives the following statement.

\begin{lem}\label{chainclasslem}
Let $\Gamma$ be the graph obtained as a chain of $m$ polygons with
$r_1,\dots, r_m$ sides, with $r_i\ge 3$. Then
\[
\bU(\Gamma)=(\bT+1)^{r_1+\cdots+r_m-3m}\, \bU(\Lambda_m) .
\]
\end{lem}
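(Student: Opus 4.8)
The plan is to reduce a chain of arbitrary polygons to the corresponding lemon graph (a chain of triangles) by successively replacing each polygon by a triangle, keeping track of the effect on the motivic class $\bU(\Gamma)$ at each step. The key geometric observation, already used in the discussion preceding the statement, is twofold. First, by the bijection of spanning trees induced by swapping the two attaching edges of a polygon (Figures \ref{twochoiFig} and \ref{chain2Fig}), the graph hypersurface of a chain of polygons does not depend on which free side of each polygon the next polygon is attached to; hence, up to isomorphism of graph hypersurfaces, we may assume that in the $i$-th polygon the two attaching edges are adjacent, so that the remaining $r_i-1$ free sides form a path between the two attaching vertices. Second, a vertex of valence two lying on such a path of free edges can be removed: topologically, deleting a valence-two vertex (merging its two edges into one) corresponds, at the level of graph hypersurfaces, to taking a cone, which multiplies the class $\bU$ by a factor of $(\bT+1)$. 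This is exactly the edge-splitting operation recorded in \S 5 of \cite{AluMa1}.

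First I would make precise the edge-splitting statement: if $\Gamma'$ is obtained from $\Gamma$ by subdividing a single edge $e$ (inserting one valence-two vertex), then $\bU(\Gamma')=(\bT+1)\,\bU(\Gamma)$, because the Kirchhoff polynomial of $\Gamma'$ is obtained from that of $\Gamma$ by the substitution making the new variable an affine coordinate along which the hypersurface is a cylinder, and the corresponding complement class acquires a factor $\bL-1=\bT$ on $\bP^{\bullet}\smallsetminus X$, i.e. a factor $(\bT+1)$ after accounting for the ambient dimensions; alternatively one quotes the cone computation of \cite{AluMa1} directly. Next, I would argue that the $i$-th polygon with $r_i$ sides, after the spanning-tree bijection normalization, is obtained from a triangle by subdividing one of its two non-attaching (free) edges $r_i-3$ times. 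Applying the edge-splitting relation $r_i-3$ times to each polygon $i=1,\dots,m$ independently — the operations at different polygons are on disjoint sets of edges and commute — converts $\Gamma$ into the lemon graph $\Lambda_m$ and shows
\[
\bU(\Gamma)=(\bT+1)^{(r_1-3)+\cdots+(r_m-3)}\,\bU(\Lambda_m)
=(\bT+1)^{r_1+\cdots+r_m-3m}\,\bU(\Lambda_m),
\]
which is the asserted formula. A clean way to organize this is by induction on $\sum_i(r_i-3)$: if all $r_i=3$ there is nothing to prove, and otherwise some polygon has a free edge that can be un-subdivided, reducing the quantity by one at the cost of a factor $(\bT+1)$.

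The main obstacle, and the place where care is needed, is justifying that the isomorphism-type of the graph hypersurface is genuinely unchanged under the spanning-tree-swap move of Figure \ref{twochoiFig}, and that the edge-splitting factor is exactly $(\bT+1)$ rather than some other power of $\bT$ or $\bL$. The first point requires checking that the claimed bijection on spanning trees respects the monomial-by-monomial structure of $\Psi_\Gamma$ up to the linear change of variables exchanging the two attaching edge variables — so that it induces a linear change of coordinates on $\A^n$ carrying one graph hypersurface to the other, hence an isomorphism of complements and equality of classes in $K_0(\cV)$. The second point is a direct but dimension-sensitive bookkeeping: subdividing an edge adds one vertex and one edge, so the ambient affine space grows by one, the Kirchhoff polynomial is unchanged as a polynomial (the new variable not appearing, or appearing only as a cylinder direction depending on the bookkeeping convention), and one must confirm that the resulting relation on the affine complement classes is multiplication by $\bL-1$ in the projective normalization, equivalently by $\bT+1$ in the normalization $\bU(\Gamma)=\bT\cdot[\bP^{n-1}\smallsetminus X_\Gamma]$ valid for non-forests; this is precisely the cone computation of \cite{AluMa1}, which I would cite.
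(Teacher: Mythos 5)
Your proposal is correct and follows essentially the same route as the paper: the paper's (one-line) proof likewise reduces each polygon to a triangle by deleting the free valence-two vertices, each deletion contributing a factor $(\bT+1)$ via the cone/edge-splitting computation of \cite{AluMa1}, after using the spanning-tree bijection to normalize the attaching sides. You simply make explicit the bookkeeping and the two justifications (the linear change of coordinates underlying the swap move, and the exact $(\bT+1)$ factor) that the paper leaves to the preceding discussion.
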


\proof
The indicated power simply counts the number of free vertices lost in
converting the polygons to triangles.
\endproof

A class of graphs closely related to the chains of polygons 
considered here, and their graph hypersurfaces, were recently
studied from the cohomological point of view in \cite{Doryn}.
More precisely, the type of graphs considered in \cite{Doryn},
called {\em generalized zig-zag graphs} are obtained by adding
an edge connecting the two free vertices at the ends of a
chain of triangles, in the same way in which the wheel with
$n$ spokes $W_n$ can be obtained by adding one edge connecting
the two free vertices of the lemon graph $\Lambda_n$. All 
these generalized zig-zag graphs are log divergent, like the
wheels $W_n$, which makes them especially nice from the
point of view of divergences fof the corresponding Feynman
integrals (see \cite{BroKr}, \cite{BEK}). It is proved in
\cite{Doryn} that for all these generalized zig-zag graphs,
as in the case of the wheels, the minimal non-trivial weight
piece of the Hodge structure of the corresponding projective
graph hypersurface complements is of Tate type $\Q(-2)$. 
The techniques adopted in \cite{Doryn} also involve an
analysis of the effect of removal of edges, and appear 
to be possibly related to some of our deletion--contraction
arguments.

\subsection{Lemon graphs}\label{lemonSec}

One reason why it is interesting to obtain an explicit formula
for the classes $\bU(\Lambda_m)$ of the lemon graphs, besides 
computing examples like the chain of polygons
described above, is that the $\Lambda_m$ are 
an important building block for a more complicated and more
interesting class of examples, the {\em wheel graphs} 
with $n$ spokes $W_n$ considered at length in \cite{BEK}. 

Applying the deletion--contraction relation of Theorem~\ref{delcon} 
to one spoke in the wheel $W_n$ produces the two graphs shown on the
right of Figure~\ref{wheelFig}. The class of the first 
would be known by induction, as $(\bT+1)\bU(W_{n-1})$, since the extra
vertex has the effect of taking a cone on the hypersurface hence 
multiplying the class by $(\bT+1)$, as shown in \cite{AluMa1}. 
The class of the second equals $\bU(\Lambda_n)/(\bT+1)^2$, since 
splitting the curvy edges produces the $n$-th lemon graph $\Lambda_n$.
Notice that here the class is a priori a multiple of $(\bT+1)^2$, so 
it makes sense to write $\bU(\Lambda_n)/(\bT+1)^2$.
The problem with this approach is of course that Theorem~\ref{delcon}
requires the knowledge of the class of the {\em intersection\/} of the
hypersurfaces corresponding to the two graphs on the right in 
Figure~\ref{wheelFig}, and this does not seem to be readily available.

\begin{figure}
\includegraphics[scale=0.35]{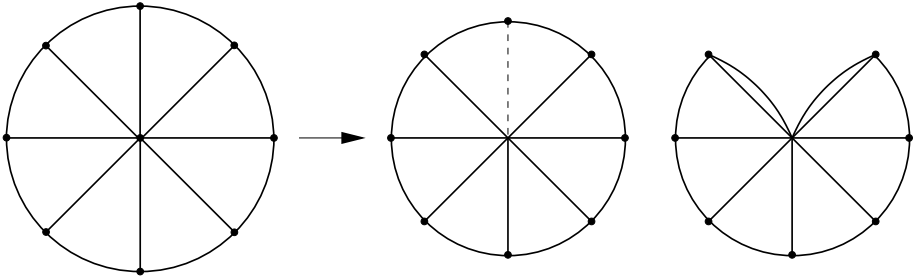}
\caption{Deletion-contraction on the wheel $W_8$.\label{wheelFig}}
\end{figure}

The classes of the lemon graphs are given by the following result,
which we formulate in terms of an algebraic generating function.

\begin{thm}\label{lemonth}
The classes $\bU(\Lambda_m)$ are determined by 
\begin{equation}\label{mLemonclass}
\sum_{m\ge 0} \bU(\Lambda_m)\, s^m=
\frac{\bT+1}{1-\bT(\bT+1)s-\bT(\bT+1)^2 s^2}.
\end{equation}
\end{thm}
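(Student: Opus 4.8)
The plan is to set up a deletion--contraction recursion for the lemon graphs and then encode it as the stated rational generating function. First I would fix notation: $\Lambda_m$ is the chain of $m$ triangles, built from $\Lambda_{m-1}$ by gluing one new triangle along a free edge. Pick the edge $e$ to be one of the two non-attaching edges of the last (outermost) triangle. This edge is neither a bridge nor a looping edge (for $m\ge 1$), so Theorem~\ref{delcon} applies. Deleting $e$ from $\Lambda_m$ leaves the outermost triangle with one edge removed, i.e.\ a path of two edges dangling off $\Lambda_{m-1}$; contracting each of those dangling edges in turn corresponds geometrically to taking a cone, multiplying the motivic class by $(\bT+1)$, so $\bU(\Lambda_m\smallsetminus e)=(\bT+1)^2\,\bU(\Lambda_{m-1})$. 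Contracting $e$ in $\Lambda_m$ identifies the two endpoints of $e$, which merges the outermost triangle into $\Lambda_{m-1}$ in a way that produces $\Lambda_{m-1}$ with one extra parallel edge on its last rung --- so $\bU(\Lambda_m/e)$ is handled by the doubling formula of Proposition~\ref{doubling} (or the parallel-edge result of Theorem~\ref{paraledges}) applied to $\Lambda_{m-1}$. I would also need the base cases: $\bU(\Lambda_0)$ corresponds to a single edge (or the empty chain), giving the value dictated by the constant term $\bT+1$ in \eqref{mLemonclass}, and $\bU(\Lambda_1)$ is the triangle, whose class $\bT(\bT+1)^2$ I can read off from the expansion of the right-hand side or compute directly.

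The core computation is then to combine these three ingredients via \eqref{motdelcon} to get a linear recursion for $\bU(\Lambda_m)$ in terms of $\bU(\Lambda_{m-1})$ and $\bU(\Lambda_{m-2})$. Concretely, I expect a relation of the shape
\[
\bU(\Lambda_m)=\bT(\bT+1)\,\bU(\Lambda_{m-1})+\bT(\bT+1)^2\,\bU(\Lambda_{m-2}),
\]
which is exactly the recursion encoded by the denominator $1-\bT(\bT+1)s-\bT(\bT+1)^2 s^2$ in \eqref{mLemonclass}. The cancellation of the awkward intersection term $\widehat X_{\Gamma\smallsetminus e}\cap\widehat X_{\Gamma/e}$ is what makes this work, and it happens for the same structural reason as in the parallel-edge case: the intersection locus, after the ideal manipulations $(\Psi_\Gamma,t_eG)$-style used in the proof of Proposition~\ref{doubling}, decomposes into pieces that are themselves cones over lemon-graph hypersurfaces of lower index, and the inclusion--exclusion bookkeeping in $K_0(\cV)$ collapses to an expression in $\bU(\Lambda_{m-1})$, $\bU(\Lambda_{m-2})$ alone. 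Once the two-term recursion and the two initial values are established, multiplying $\sum_m \bU(\Lambda_m)s^m$ by $1-\bT(\bT+1)s-\bT(\bT+1)^2s^2$ kills all coefficients of $s^m$ for $m\ge 2$, leaving only the contribution $(\bT+1)$ from $m=0$ after checking that the $s^1$ coefficient also vanishes; this yields \eqref{mLemonclass}.

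The main obstacle is pinning down \emph{precisely} what $\Lambda_m/e$ and $\Lambda_m\smallsetminus e$ are as graphs --- i.e.\ being careful about how contracting the chosen edge of the outermost triangle interacts with the free vertices, and verifying that the ``cone = multiply by $(\bT+1)$'' and ``parallel edge = Proposition~\ref{doubling}'' reductions apply on the nose rather than up to some extra correction. A secondary technical point is the explicit $K_0(\cV)$-level verification that the intersection term cancels; I would lift this verbatim from the argument in the proof of Proposition~\ref{doubling}, since the lemon-graph Kirchhoff polynomial has the same $t_eF+G$ structure with $F=\Psi_{\Lambda_m\smallsetminus e}$ and $G=\Psi_{\Lambda_m/e}$, and the relevant ideals factor identically. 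Everything else is routine power-series algebra.
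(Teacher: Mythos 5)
Your overall strategy---a two-term linear recursion for $\bU(\Lambda_m)$ with seeds $\bU(\Lambda_0)=\bT+1$, $\bU(\Lambda_1)=\bT(\bT+1)^2$, converted into the stated rational generating function---is the right one, and your identification of $\Lambda_m/e$ as $\Lambda_{m-1}$ with a doubled edge is correct. But the assembly step has a genuine gap. You propose to feed $\bU(\Lambda_m\smallsetminus e)$ and $\bU(\Lambda_m/e)$ into \eqref{motdelcon} and assert that the intersection class $[\A^{n-1}\smallsetminus(\widehat X_{\Lambda_m\smallsetminus e}\cap\widehat X_{\Lambda_m/e})]$ collapses ``verbatim'' as in Proposition~\ref{doubling}. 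It does not. In that proof the decomposition of the intersection rests on the fact that contracting the \emph{new} parallel edge produces a graph with a looping edge, so its Kirchhoff polynomial is the product $t_e\,G$ and the ideal $(\Psi_\Gamma,t_eG)$ splits into the loci of $(G,t_e)$ and $(F,G)$; and even then the piece $(F,G)$ is not computed directly but \emph{traded back} for $\bU(\Gamma)$ and $\bU(\Gamma\smallsetminus e)$ by reading \eqref{motdelcon} for $\Gamma$ itself backwards. For your choice of $e$ (a free edge of the outermost triangle of $\Lambda_m$), neither $F=\Psi_{\Lambda_m\smallsetminus e}$ nor $G=\Psi_{\Lambda_m/e}$ is a product, the ideal $(F,G)$ does not decompose, and you have no independent handle on its class---this is exactly the ``non-combinatorial'' term the whole section is designed to avoid. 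The paper's proof never applies \eqref{motdelcon} to the lemon graph at all: it observes that the two free edges of the last triangle of $\Lambda_{m+1}$ form a \emph{subdivision} of a doubled edge, so $\bU(\Lambda_{m+1})=(\bT+1)\bU(\Lambda'_m)$ where $\Lambda'_m$ is $\Lambda_m$ with $e$ doubled (subdividing an edge multiplies $\bU$ by $\bT+1$), and then evaluates $\bU(\Lambda'_m)$ by Proposition~\ref{doubling}, in which the intersection term has already been eliminated. That subdivision relation---equivalently $\bU(\Lambda_m)=(\bT+1)\bU(\Lambda_m/e)$---is the missing ingredient that must replace your use of \eqref{motdelcon}.

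There is also a concrete numerical error: deleting one free edge of the outermost triangle leaves a \emph{single} pendant edge hanging off $\Lambda_{m-1}$, not a path of two edges, because the third edge of that triangle (the attaching edge) is already an edge of $\Lambda_{m-1}$. So $\bU(\Lambda_m\smallsetminus e)=(\bT+1)\,\bU(\Lambda_{m-1})$, not $(\bT+1)^2\,\bU(\Lambda_{m-1})$. Carried through the correct assembly, your value would yield $\bU(\Lambda_{m+1})=\bT(\bT+1)\bU(\Lambda_m)+\bT(\bT+1)^3\bU(\Lambda_{m-1})$ and hence the wrong denominator $1-\bT(\bT+1)s-\bT(\bT+1)^3s^2$, contradicting \eqref{mLemonclass}.
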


\proof The theorem is proved by setting up a recursion, based on the fact that
the $(m+1)$-st lemon graph may be obtained from the $m$-th one by
doubling one edge and splitting the newly created edge, as shown in
Figure~\ref{lemonnewFig}.

\begin{figure}
\includegraphics[scale=0.3]{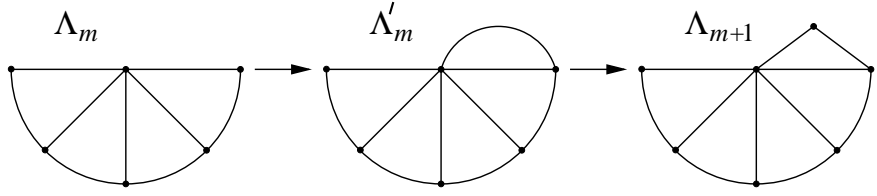}
\caption{Lemon building from edge doubling.\label{lemonnewFig}}
\end{figure}

Doubling the edge requires handling the graphs obtained by deleting
and contracting that edge as shown in Figure \ref{lemondcFig}.
These are inductively known:
\[
\bU(\Lambda_m\smallsetminus e)=(\bT+1)\bU(\Lambda_{m-1})\ \ \ \text{ and } \ \ \
\bU(\Lambda_m/e)=\bU(\Lambda_m)/(\bT+1),
\]
since adding a tail and splitting edges both have the effect of multiplying 
the motivic class by $(\bT+1)$, as shown in \cite{AluMa2}, \S 2.2.

\begin{figure}
\includegraphics[scale=0.3]{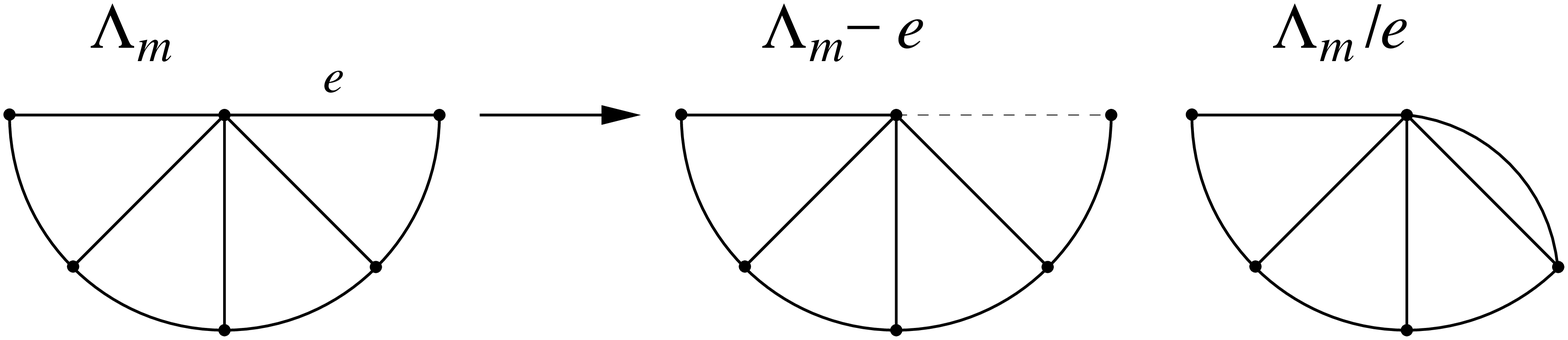}
\caption{Edge doubling in terms of deletion and contraction. \label{lemondcFig}}
\end{figure}

Applying Lemma~\ref{doubling}, with $\Lambda_m'$ denoting the second
graph of Figure~\ref{lemonnewFig}, we obtain
\begin{align*}
\bU(\Lambda_{m+1})
&=(\bT+1)\bU(\Lambda'_m) \\
& =(\bT+1)\left(
(\bT-1)\bU(\Lambda_m) + \bT \bU(\Lambda_m\smallsetminus e)+
(\bT+1)\bU(\Lambda_m/e)
\right)\\
&=(\bT+1)\left(
(\bT-1)\bU(\Lambda_m) + \bT (\bT+1) \bU(\Lambda_{m-1})+
\bU(\Lambda_m)
\right) \\
&=\bT(\bT+1) \bU(\Lambda_m) + \bT (\bT+1)^2 \bU(\Lambda_{m-1}) .
\end{align*}
This recursive relation holds as soon as the edge $e$ is not a bridge,
that is, for $m\ge 1$. The seeds are $\Lambda_0$ (a single edge)
and $\Lambda_1$ (a triangle), for which we have
\[
\bU(\Lambda_0)=\bT+1 \ \ \ \text{ and } \ \ \
\bU(\Lambda_1)=\bT (\bT+1)^2 .
\]
Let $$L_m(\bT)=\bU(\Lambda_m),$$ viewed as a polynomial in $\bT$, 
and $$L(s)=\sum_{m\ge 0} L_m s^m.$$ The recursion translates
into the relation
\[
L(s)-\bT(\bT+1)^2 s-(\bT+1) \]\[ =\bT(\bT+1) s (L(s)-(\bT+1)+\bT(\bT+1)^2
L(s) .
\]
Solving for $L(s)$ yields the formula \eqref{mLemonclass} in the statement.
\endproof

Equivalently, one can write the reciprocal of the generating 
function of \eqref{mLemonclass} of  Theorem~\ref{lemonth}, 
which has the simpler form
\[
\frac {\bT+1}{\sum_{m\ge 0} \bU(\Lambda_m) s^m}
=1-\bT (\bT+1) s-\bT (\bT+1)^2 s^2 .
\]

We then obtain from Theorem~\ref{lemonth} an 
explicit formula for the classes $\bU(\Lambda_m)$ in the
following way.

\begin{prop}\label{FiboLemon}
The classes $\bU(\Lambda_m)$ are of the form
\begin{equation}\label{FibLambdam}
 \bU(\Lambda_m) = (\bT+1)^{m+1} K(\bT), 
\end{equation}
where $K(\bT)$ is of the form
\[ \binom m0 \bT^m 
+\binom {m-1}1 \bT^{m-1}
+\binom {m-2}2 \bT^{m-2} 
+\binom {m-3}3 \bT^{m-3}
+\cdots 
\]
where $\binom ji$ is taken to be equal to $0$ if $i>j$.
\end{prop}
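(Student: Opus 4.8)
The plan is to read off the coefficient of $s^m$ directly from the algebraic generating function of Theorem~\ref{lemonth},
\[
\sum_{m\ge 0} \bU(\Lambda_m)\, s^m=
\frac{\bT+1}{1-\bT(\bT+1)s-\bT(\bT+1)^2 s^2}.
\]
The key move is the normalizing substitution $u=(\bT+1)s$: since $\bT(\bT+1)s=\bT u$ and $\bT(\bT+1)^2 s^2=\bT u^2$, the right-hand side becomes $(\bT+1)/(1-\bT u-\bT u^2)=(\bT+1)/(1-\bT u(1+u))$. Thus, writing provisionally $\bU(\Lambda_m)=(\bT+1)^{m+1}K_m(\bT)$, the identity above is equivalent to
\[
\sum_{m\ge 0} K_m(\bT)\, u^m=\frac{1}{1-\bT u(1+u)} ,
\]
and the whole statement reduces to identifying the coefficient $K_m(\bT)$ of $u^m$ on the right.

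The second step is a routine expansion: $\dfrac{1}{1-\bT u(1+u)}=\sum_{k\ge 0}\bT^k u^k(1+u)^k=\sum_{k\ge 0}\sum_{j=0}^{k}\binom{k}{j}\bT^k u^{k+j}$, and collecting the terms with $k+j=m$ forces $j=m-k$ with $m/2\le k\le m$. Reindexing by $i=m-k$ yields $K_m(\bT)=\sum_{i\ge 0}\binom{m-i}{i}\bT^{m-i}$, where the convention $\binom{j}{i}=0$ for $i>j$ truncates the sum at $i=\lfloor m/2\rfloor$; this is exactly the claimed form of $K(\bT)$. A small bonus of this argument is that it simultaneously shows each $K_m$ is a genuine polynomial in $\bT$, hence that $\bU(\Lambda_m)$ is divisible by $(\bT+1)^{m+1}$, which is the first assertion of \eqref{FibLambdam}.

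An equivalent route, if one prefers to bypass generating functions, is to substitute $\bU(\Lambda_j)=(\bT+1)^{j+1}K_j(\bT)$ into the recursion $\bU(\Lambda_{m+1})=\bT(\bT+1)\bU(\Lambda_m)+\bT(\bT+1)^2\bU(\Lambda_{m-1})$ established inside the proof of Theorem~\ref{lemonth}; the powers of $(\bT+1)$ cancel uniformly and one is left with $K_{m+1}=\bT\,(K_m+K_{m-1})$, with seeds $K_0=1$, $K_1=\bT$ coming from $\bU(\Lambda_0)=\bT+1$ and $\bU(\Lambda_1)=\bT(\bT+1)^2$. One then checks by induction that the proposed coefficients obey the same recursion, using Pascal's rule $\binom{m+1-i}{i}=\binom{m-i}{i}+\binom{m-i}{i-1}$ together with the observation that the coefficient of $\bT^{m-i}$ in $K_{m-1}$ is $\binom{m-i}{i-1}$. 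I do not anticipate a real obstacle here: the only points requiring a little care are choosing the substitution $u=(\bT+1)s$ (without it the coefficients do not assemble into binomials) and keeping the index translation between $\binom{k}{j}$ and $\binom{m-i}{i}$ consistent. I would present the generating-function computation as the main proof, since it matches the format of Theorem~\ref{lemonth}, and close with the remark that setting $\bT=1$ turns $K_m(\bT)$ into the diagonal Pascal sum $\sum_i\binom{m-i}{i}$, i.e.\ a Fibonacci number, which explains the name of the proposition.
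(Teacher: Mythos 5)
Your proof is correct and follows essentially the same route as the paper: both read off the coefficients of the generating function of Theorem~\ref{lemonth} after a substitution that reduces the denominator to a generalized-Fibonacci form (the paper sets $t=\bT(\bT+1)s$, $x=1/\bT$ in $1/(1-t-xt^2)$, while you set $u=(\bT+1)s$ and expand $1/(1-\bT u(1+u))$ directly — a marginally cleaner normalization that avoids the $\bT^{-i}$ terms). Your secondary recursion argument via $K_{m+1}=\bT(K_m+K_{m-1})$ and Pascal's rule is also sound, though not needed.
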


\proof Consider the recurrence relation
\[
a_m=a_{m-1}+x\, a_{m-2}\quad,\quad m\ge 2
\]
with $a_0=a_1=1$. This is a simple generalization of the
Fibonacci sequence, which one recovers for $x=1$.
Letting $A(t):= \sum_{m\ge 0} a_m t^m$, the recurrence gives
\[
A(t)-1-t=t(A(t)-1)+x\, t^2 A(t),
\]
hence
\[
A(t)=\frac 1{1-t-x\, t^2} .
\]
This yields an explicit expression for $a_m$: since
\[
A(t)=\sum_{k\ge 0} (1+x t)^k t^k = \sum_{k\ge 0} \sum_{i=0}^k
\binom ki t^{i+k}=\sum_{m\ge 0}\,\, \sum_{i\ge 0,i\le m-i} \binom{m-i}i
x^i t^m ,
\]
we get the expression
\[
a_m=\sum_{i= 0}^m \binom {m-i}i x^i ,
\]
adopting the convention that $\binom ji=0$ if $i>j$.
For the classes $\bU(\Lambda_m)$ of the lemon graphs we have from
Theorem \ref{lemonth} the generating function
\[
\frac{\bT+1}{1-\bT(\bT+1) s-\bT(\bT+1)^2 s^2}
=(\bT+1) \frac 1{1-(\bT(\bT+1) s)-\frac 1\bT (\bT(\bT+1) s)^2} .
\]
Thus, upon setting $t=\bT(\bT+1)s$ and $x= 1/\bT$, the previous
considerations give
\[
\bU(\Lambda_m) =
(\bT+1) \left(\sum_{i\ge 0} \binom {m-i}i \frac 1{\bT^i} \right)
\bT^m (\bT+1)^m \] \[
= (\bT+1)^{m+1} \sum_{i= 0}^m \binom {m-i}i \bT^{m-i},
\]
which gives \eqref{FibLambdam}.
\endproof

As we have seen in the proof of Proposition \ref{FiboLemon} above,
the classes $\bU(\Lambda_m)$ are closely related to a Fibonacci-like
recursion. In fact, they satsify the following property, which is
the analog of the well known property of Fibonacci numbers.

\begin{cor}\label{LemonDivis}
The sequence $a_m=\bU(\Lambda_{m-1})$ is a divisibility sequence.
\end{cor}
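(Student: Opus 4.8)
The plan is to identify the sequence $(a_m)$, up to the overall factor $\bT+1$, with a Lucas sequence of the first kind, and then to appeal to the classical divisibility property of such sequences.

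First I would rewrite the recursion of Theorem~\ref{lemonth} for the reindexed sequence $a_m=\bU(\Lambda_{m-1})$. From $\bU(\Lambda_{m+1})=\bT(\bT+1)\bU(\Lambda_m)+\bT(\bT+1)^2\bU(\Lambda_{m-1})$, valid for $m\ge 1$, together with the seeds $\bU(\Lambda_0)=\bT+1$ and $\bU(\Lambda_1)=\bT(\bT+1)^2$, one gets $a_1=\bT+1$, $a_2=\bT(\bT+1)^2$ and
\[
a_{m+1}=\bT(\bT+1)\,a_m+\bT(\bT+1)^2\,a_{m-1}\qquad(m\ge 2),
\]
a relation which moreover continues to hold for $m=1$ under the convention $a_0:=0$ (it then returns $a_2=\bT(\bT+1)^2$). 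By Proposition~\ref{FiboLemon} each $a_m=\bU(\Lambda_{m-1})$ is divisible by $\bT+1$, so I would set $a_m=(\bT+1)\,U_m$, which gives $U_0=0$, $U_1=1$ and
\[
U_{m+1}=\bT(\bT+1)\,U_m+\bT(\bT+1)^2\,U_{m-1}\qquad(m\ge 1).
\]
Thus $(U_m)_{m\ge 0}$ is precisely the Lucas sequence of the first kind, in the polynomial ring $\Z[\bT]\subseteq K_0(\cV)$, attached to the characteristic polynomial $x^2-\bT(\bT+1)\,x-\bT(\bT+1)^2$.

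Next I would prove that $U_d\mid U_m$ in $\Z[\bT]$ whenever $d\mid m$. The quickest route is the Binet formula: over a quadratic extension of $\Q(\bT)$ write $x^2-\bT(\bT+1)\,x-\bT(\bT+1)^2=(x-\alpha)(x-\beta)$, so that $\alpha+\beta=\bT(\bT+1)$ and $\alpha\beta=-\bT(\bT+1)^2$ lie in $\Z[\bT]$, and $U_m=(\alpha^m-\beta^m)/(\alpha-\beta)$. For $d\mid m$ one has
\[
\frac{U_m}{U_d}=\frac{(\alpha^d)^{m/d}-(\beta^d)^{m/d}}{\alpha^d-\beta^d}
=\sum_{j=0}^{m/d-1}(\alpha^d)^j(\beta^d)^{m/d-1-j},
\]
which is symmetric in $\alpha^d$ and $\beta^d$, hence an integral polynomial in $\alpha^d+\beta^d$ and $\alpha^d\beta^d=(-1)^d\bT^d(\bT+1)^{2d}$; since $\alpha^d+\beta^d$ is in turn an integral polynomial in $\alpha+\beta$ and $\alpha\beta$, it too lies in $\Z[\bT]$, and therefore $U_m/U_d\in\Z[\bT]$. (Alternatively one may induct on $m/d$ using the addition formula $U_{m+n}=U_{m+1}U_n+\bT(\bT+1)^2\,U_m U_{n-1}$, which gives $U_d\mid U_{kd}$ while staying inside $\Z[\bT]$ throughout.) Multiplying back by $\bT+1$, $d\mid m$ implies that $a_d=(\bT+1)U_d$ divides $a_m=(\bT+1)U_m$ in $\Z[\bT]$; that is, $(a_m)$ is a divisibility sequence.

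The step that deserves the most care is keeping track of the ambient ring: the classes $\bU(\Lambda_m)$ are honest polynomials in $\bT=\bL-1$, so they lie in $\Z[\bT]$, which is a UFD; this is what legitimizes introducing $\alpha,\beta$ in a quadratic extension of $\Q(\bT)$, carrying out the symmetric-function computation there, and descending the resulting divisibility back to $\Z[\bT]$ — hence also recording it inside $K_0(\cV)$. One should also verify the behavior at the first indices, namely that the recursion for $a_m$ does hold from $m=1$ onward once $a_0$ is set to $0$, and that dividing by $\bT+1$ genuinely recovers the standard initial conditions $U_0=0$, $U_1=1$. Beyond that the argument is purely formal.
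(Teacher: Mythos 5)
Your proof is correct and follows essentially the same route as the paper's: both identify the sequence (after stripping the factor $\bT+1$, resp.\ after the substitution $t=\bT(\bT+1)s$, $x=1/\bT$) with a Lucas-type sequence and deduce divisibility from the Binet formula $U_m=(\alpha^m-\beta^m)/(\alpha-\beta)$, the quotient $U_{kd}/U_d$ being a symmetric function of $\alpha^d,\beta^d$. The only differences are cosmetic: you work directly in $\Z[\bT]$ rather than with the normalized polynomials $b_m(x)$, and you spell out the symmetric-function justification that the paper compresses into ``is clearly a polynomial.''
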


\proof A sequence $a_m$ is a divisibility sequence if $a_m | a_n$
whenever $m|n$. We show that the expression for $\bU(\Lambda_{m-1})$ 
divides the expression for $\bU(\Lambda_{n-1})$ if $m$ divides~$n$.
Using the recursion relation, this follows by showing that if 
\[
\frac t{1-t-xt^2} = \sum_{n\ge 0} b_n(x) t^n
\]
then, if $m$ divides $n$, then the polynomial $b_m(x)$ divides 
the polynomial $b_n(x)$. The $t$ in the numerator produces the
shift of one in the indices. 

The polynomials
$$ b_n(x) = a_{n-1}(x)=\sum_{i= 0}^{n-1} \binom{n-1-i}i x^i $$
can also be written in the form
\begin{equation}\label{bmlambda}
 b_m(x)=\frac{\lambda_1^m
-\lambda_2^m}{\lambda_1-\lambda_2} 
\end{equation}
where 
\[
\lambda_1=\frac{1+\sqrt{1+4x}}{2}\ \ \ \text{ and } \ \ \ 
\lambda_2=\frac{1-\sqrt{1+4x}}{2} .
\]
Then
\[
\frac{b_{km}(x)}{b_m(x)}=\frac{\lambda_1^{km}-\lambda_2^{km}}
{\lambda_1^m-\lambda_2^m}
=\lambda_1^{(k-1)m} +\cdots +\lambda_2^{(k-1)m}  \]
is clearly a polynomial. One can explicitly provide a 
recurrence relation satisfied by the function of~$k$ given by $b^{(m)}_k
=b_{km}(x)/b_m(x)$. First note that $\lambda_1^m$, $\lambda_2^m$
are roots of a quadratic polynomial
\[
(y-\lambda_1^m)(y-\lambda_2^m)=y^2-T Y +N=0 ,
\]
where $T=\lambda_1^m+\lambda_2^m$ and $N=(\lambda_1\lambda_2)^m=(-x)^m$. 
Notice then that 
\[
\sum_{m\ge 0} (\lambda_1^m+\lambda_2^m) t^m = 
\frac 1{1-\lambda_1 t}+\frac 1{1-\lambda_2 t} \]\[ = 
\frac{2-t}{1-t+t^2}=\sum_{m\ge 0}(2b_{m+1}(x)-b_m(x)) t^m .
\]
This shows that $T=2b_{m+1}(x)-b_m(x)$. Therefore
\[
(y-\lambda_1^m)(y-\lambda_2^m)=y^2-(2b_{m+1}(x)-b_m(x)) y +(-x)^m\quad.
\]
It follows that $b^{(m)}_k(x):=\frac{b_{km}(x)}{b_m(x)}
=\frac{\lambda_1^{km}-\lambda_2^{km}}{\lambda_1^m-\lambda_2^m}$
are solutions of the recurrence relation
\[
b^{(m)}_{k+1}=(2b_{m+1}(x)-b_m(x)) b^{(m)}_k - (-x)^m b^{(m)}_{k-1}\quad,
\]
with seeds $b^{(m)}_0=0$, $b^{(m)}_1=1$.
\endproof

In terms of understanding explicitly the motivic nature of the graph
hypersurfaces for certain infinite families of graphs, the result of
Theorem~\ref{lemonth}, together with Lemma~\ref{chainclasslem}, has
the following direct consequence.

\begin{cor}\label{MTMchains}
All graphs $\Gamma$ that are polygon chains have graph hypersurfaces
$X_\Gamma$ whose classes $[X_\Gamma]$ in the Grothendieck ring are
contained in the Tate subring $\Z[\bL]\subset K_0(\cV)$.
\end{cor}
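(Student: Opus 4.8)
The plan is to read off the result from the explicit computations already carried out for lemon graphs. First, by Lemma~\ref{chainclasslem}, a chain $\Gamma$ of $m$ polygons with $r_1,\dots,r_m$ sides ($r_i\ge 3$) satisfies
\[
\bU(\Gamma)=(\bT+1)^{r_1+\cdots+r_m-3m}\,\bU(\Lambda_m),\qquad \bT=\bL-1,
\]
so it suffices to see that $\bU(\Lambda_m)\in\Z[\bL]$. That is immediate from Proposition~\ref{FiboLemon} (which rests on Theorem~\ref{lemonth}): it gives $\bU(\Lambda_m)=(\bT+1)^{m+1}\sum_{i=0}^{m}\binom{m-i}{i}\bT^{m-i}$, a polynomial in $\bT$ with non-negative integer coefficients, hence an element of $\Z[\bT]=\Z[\bL]$. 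Combining the two, $\bU(\Gamma)=(\bT+1)^{r_1+\cdots+r_m-2m+1}\sum_{i=0}^{m}\binom{m-i}{i}\bT^{m-i}\in\Z[\bL]$, and in particular $[\widehat X_\Gamma]=\bL^{n}-\bU(\Gamma)\in\Z[\bL]$ for $n=\#E(\Gamma)$.

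Next I would transfer this from the affine hypersurface complement to the projective class $[X_\Gamma]$. A polygon chain with $m\ge 1$ polygons is connected with $b_1(\Gamma)=m>0$, hence is not a forest, so $\widehat X_\Gamma$ is the cone over the nonempty hypersurface $X_\Gamma$ and contains the origin; by Lemma~2.6 of \cite{AluMa2} (used already in the proof of Corollary~\ref{eulercharcor}) one then has $\bU(\Gamma)=\bT\cdot[\P^{n-1}\smallsetminus X_\Gamma]$. For $m\ge 1$ the lowest power of $\bT$ occurring in $\sum_{i=0}^{m}\binom{m-i}{i}\bT^{m-i}$ is $\bT^{\lceil m/2\rceil}$ (the potential constant term $i=m$ vanishes since $\binom{0}{m}=0$), so $\bU(\Gamma)/\bT$ is again an honest polynomial in $\bT$, and therefore
\[
[X_\Gamma]=[\P^{n-1}]-[\P^{n-1}\smallsetminus X_\Gamma]=(1+\bL+\cdots+\bL^{n-1})-\frac{\bU(\Gamma)}{\bT}\in\Z[\bL],
\]
which is exactly the assertion.

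All the ingredients being in place, I do not expect a genuine obstacle; the one step worth a moment's care is the division by $\bL-1$ in the passage from $\bU(\Gamma)$ to $[X_\Gamma]$. Concretely, one should verify that the explicit polynomial expression for $\bU(\Gamma)$ derived above is actually divisible by $\bT$ in $\Z[\bT]$ (as it is, by the computation of its lowest-order term from the lemon formula), so that the quotient is an unambiguous element of $\Z[\bL]$ which, by the cone relation, must coincide with $[\P^{n-1}\smallsetminus X_\Gamma]$. The base case $m=1$ — a single triangle, with $\bU(\Lambda_1)=\bT(\bT+1)^2$ — is covered directly, as is the vacuous degenerate case where the chain reduces to a forest.
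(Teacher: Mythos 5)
Your proposal is correct and follows exactly the route the paper intends: the corollary is presented there as a direct consequence of Lemma~\ref{chainclasslem} together with Theorem~\ref{lemonth} (via the explicit polynomial form of $\bU(\Lambda_m)$ in Proposition~\ref{FiboLemon}). Your extra care in checking divisibility by $\bT$ when passing from $\bU(\Gamma)$ to the projective class $[X_\Gamma]$ is a worthwhile detail the paper leaves implicit, and your computation of the lowest-order term $\bT^{\lceil m/2\rceil}$ confirms it.
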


\subsection{Graph lemonade}\label{lemonadeSec}

As a variation on the same theme explored here, one can compute
the class of the graph obtained from {\em any\/} graph $\Gamma$
by `building a lemon' on a given edge $e$, as in Figure
\ref{lemonadeFig}. 

\begin{figure}
\includegraphics[scale=0.45]{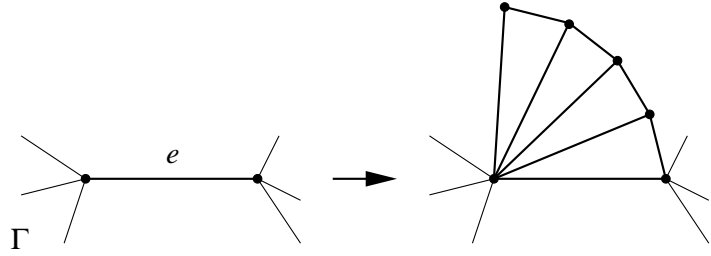}
\caption{Adding a lemon to a graph.\label{lemonadeFig}}
\end{figure}

The question makes no sense if $e$ is a looping edge, and is 
covered by multiplicativity if $e$ is a bridge, so we can 
assume that $e$ is not either. One obtains then a formula 
expressing the class of the ``lemonade'' of the graph $\Gamma$
at the edge $e$ in terms of $\bU(\Gamma)$, $\bU(\Gamma\smallsetminus e)$, 
$\bU(\Gamma/e)$.

\begin{prop}\label{lemonth2}
Let $e$ be an edge of a graph $\Gamma$, and assume that $e$ is
neither a bridge nor a looping edge. 
Let $\Gamma^\Lambda_m$ be the ``lemonade graph'' obtained 
by building an $m$-lemon fanning out from $e$. Then
\begin{multline*}
\sum_{m\ge 0}\bU(\Gamma^\Lambda_m) s^m
=\frac 1{1-\bT(\bT+1)s-\bT(\bT+1)^2 s^2} \\
\cdot \left((1-(\bT +1)s)\, \bU(\Gamma)
+(\bT+1)\bT s\, \bU(\Gamma\smallsetminus e)
+(\bT+1)^2s\, \bU(\Gamma/e) \right) .
\end{multline*}
\end{prop}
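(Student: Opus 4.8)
The plan is to set up a recursion for $\bU(\Gamma^\Lambda_m)$ in $m$, exactly as in the proof of Theorem~\ref{lemonth}, but now keeping track of the dependence on $\Gamma$ rather than collapsing everything into a single polynomial in $\bT$. The key structural observation is that the lemonade graph $\Gamma^\Lambda_{m+1}$ is obtained from $\Gamma^\Lambda_m$ by doubling the ``outermost'' edge and then splitting the newly created edge, precisely the move used in Figure~\ref{lemonnewFig}. Splitting an edge multiplies the motivic class by $(\bT+1)$ (a cone, as in \S5 of \cite{AluMa1}), so the heart of the argument is Proposition~\ref{doubling} applied to that outermost edge $e_m$, which for $m\ge 1$ is neither a bridge nor a looping edge. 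This gives $\bU(\Gamma^\Lambda_{m+1}) = (\bT+1)\bigl((\bT-1)\bU(\Gamma^\Lambda_m) + \bT\,\bU(\Gamma^\Lambda_m\smallsetminus e_m) + (\bT+1)\,\bU(\Gamma^\Lambda_m/e_m)\bigr)$.

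Next I would identify the two auxiliary terms. Deleting $e_m$ from $\Gamma^\Lambda_m$ removes one triangle and leaves (a split version of) $\Gamma^\Lambda_{m-1}$, so $\bU(\Gamma^\Lambda_m\smallsetminus e_m) = (\bT+1)\,\bU(\Gamma^\Lambda_{m-1})$; contracting $e_m$ produces a looping edge situation that collapses the last section, so $\bU(\Gamma^\Lambda_m/e_m) = \bU(\Gamma^\Lambda_m)/(\bT+1)$ — the same identities used in Theorem~\ref{lemonth}, and I would justify them by the cone/tail arguments of \cite{AluMa2}, \S2.2. Substituting, the $(\bT+1)$ and $(\bT-1)$ terms combine just as before and one lands on the clean two-term recursion
\[
\bU(\Gamma^\Lambda_{m+1}) = \bT(\bT+1)\,\bU(\Gamma^\Lambda_m) + \bT(\bT+1)^2\,\bU(\Gamma^\Lambda_{m-1}),
\]
valid for $m\ge 1$, i.e. the very same homogeneous recursion governing $\bU(\Lambda_m)$.

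The remaining work is to fix the initial conditions. Here $\Gamma^\Lambda_0 = \Gamma$, so $\bU(\Gamma^\Lambda_0) = \bU(\Gamma)$; and $\Gamma^\Lambda_1$ is $\Gamma$ with a single triangle built on $e$, whose class I would compute directly from the $m=1$ case of Proposition~\ref{doubling} (doubling $e$, then splitting) as $\bU(\Gamma^\Lambda_1) = (\bT+1)\bigl((\bT-1)\bU(\Gamma) + \bT\,\bU(\Gamma\smallsetminus e) + (\bT+1)\,\bU(\Gamma/e)\bigr)$. Forming $P(s) := \sum_{m\ge 0}\bU(\Gamma^\Lambda_m)\,s^m$ and feeding in the recursion for $m\ge 1$ gives the functional equation $P(s) - \bU(\Gamma^\Lambda_1)s - \bU(\Gamma) = \bT(\bT+1)s\,(P(s) - \bU(\Gamma)) + \bT(\bT+1)^2 s^2\,P(s)$; solving for $P(s)$ and substituting the value of $\bU(\Gamma^\Lambda_1)$ yields, after collecting the coefficients of $\bU(\Gamma)$, $\bU(\Gamma\smallsetminus e)$ and $\bU(\Gamma/e)$ in the numerator, exactly the stated formula. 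I expect the main obstacle to be purely bookkeeping: making sure the ``split edge'' factors of $(\bT+1)$ are accounted for consistently on both sides (so that the contraction identity $\bU(\Gamma^\Lambda_m/e_m) = \bU(\Gamma^\Lambda_m)/(\bT+1)$ genuinely holds, with the quotient making sense), and checking that the $m=1$ seed produced by Proposition~\ref{doubling} agrees with the general recursion when pushed one step further — a small compatibility check that pins down the numerator unambiguously. No genuinely hard geometry enters beyond what is already in Proposition~\ref{doubling}, since the troublesome intersection term $\widehat X_{\Gamma\smallsetminus e}\cap\widehat X_{\Gamma/e}$ has already been eliminated there.
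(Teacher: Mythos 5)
Your proposal is correct and follows essentially the same route as the paper: the same two-term recursion $\bU(\Gamma^\Lambda_{m+1})=\bT(\bT+1)\bU(\Gamma^\Lambda_m)+\bT(\bT+1)^2\bU(\Gamma^\Lambda_{m-1})$ from Theorem~\ref{lemonth}, seeded by the doubling-plus-splitting computation $f_1=\bT^2-1$, $g_1=\bT(\bT+1)$, $h_1=(\bT+1)^2$, which matches the paper exactly. The only cosmetic difference is that you solve one functional equation for $P(s)=\sum_m\bU(\Gamma^\Lambda_m)s^m$ directly, whereas the paper splits the same recursion into the three coefficient generating functions $F(s)$, $G(s)$, $H(s)$; the two bookkeeping schemes are equivalent by linearity.
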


\proof
Let $f_m$, $g_m$, $h_m$ be functions of $\bT$ such that
\[
\bU(\Gamma^\Lambda_m)
=f_m(\bT) \bU(\Gamma)
+g_m(\bT) \bU(\Gamma\smallsetminus e)
+h_m(\bT) \bU(\Gamma/e) .
\]
The basic recursion is precisely the one worked out in the proof of
Theorem~\ref{lemonth}. Namely,
\[
\bU(\Lambda_{m+1})
=\bT(\bT+1) \bU(\Lambda_m) + \bT (\bT+1)^2 \bU(\Lambda_{m-1}).
\]
This makes sense for $m\ge 1$. The individual functions $f_m$, $g_m$,
$h_m$ satisfy this same recursion, but with different seeds:
\[
\begin{cases}
f_0(\bT)=1\quad, & f_1(\bT)=\bT^2-1 \\
g_0(\bT)=0\quad, & g_1(\bT)=\bT(\bT+1) \\
h_0(\bT)=0\quad, & h_1(\bT)=(\bT+1)^2 .
\end{cases} 
\]
The values in the second column implement the doubling formula of
Lemma~\ref{doubling}, and then split the new edge by introducing a
factor of $\bT+1$. Letting $F(s)$, $G(s)$, $H(s)$ be the three
corresponding generating functions $F(s)=\sum_{m\ge 0} f_m s^m$,
etc., the recursions imply
\begin{align*}
F(s)-(\bT^2-1)s-1 &= \bT(\bT+1) s(F(s)-1) + \bT(\bT+1)^2 s^2 F(s) \\
G(s)-\bT(\bT+1)s &= \bT(\bT+1) s G(s) + \bT(\bT+1)^2 s^2 G(s) \\
H(s)-(\bT+1)^2s &= \bT(\bT+1) s H(s) + \bT(\bT+1)^2 s^2 H(s)
\end{align*}
from which
\begin{align*}
F(s) &= \frac{1-(\bT +1)s}{1-\bT(\bT+1)s-\bT(\bT+1)^2 s^2} , \\
G(s) &= \frac{(\bT+1)\bT s}{1-\bT(\bT+1)s-\bT(\bT+1)^2 s^2} , \\
H(s) &= \frac{(\bT+1)^2 s}{1-\bT(\bT+1)s-\bT(\bT+1)^2 s^2} ,
\end{align*}
as stated.
\endproof

The three functions $F(s)$, $G(s)$, $H(s)$ are all easily recoverable
from the lemon formula of Theorem~\ref{lemonth}.

\smallskip

Notice, moreover, that the result of Proposition~\ref{lemonth2} yields
immediately the following generating series for the Euler characteristic
of the complement of~$X_{\Gamma^\Lambda_m}$, valid under the same hypotheses
of the proposition. If $\Gamma\smallsetminus e$
is not a forest, then
\[
\sum_{m\ge 0} \chi_{\Gamma^\Lambda_m} s^m
=(1- s)\, \chi_\Gamma +\chi_{\Gamma/e} .
\]
That is, $\chi_{\Gamma^\Lambda_1}=\chi_{\Gamma/e}-\chi_{\Gamma}$
and $\chi_{\Gamma^\Lambda_m}=0$ for $m>1$.

\section{Universal recursion relation}\label{UnivSec}

The very structure of the problems analyzed in the previous section is
recursive, and this fact alone is responsible for some of the features
of the solutions found in \S\ref{OpSec}.
We emphasize these general features in this section, and apply them
to formulate a precise conjecture for the effect of the operation of
multiplying edges on the polynomial invariant $C_\Gamma(T)$
of graphs obtained in \cite{AluMa2} in terms of CSM classes; recall that
we have shown in \S \ref{CSMsec} that this is not a specialization of the 
Tutte polynomial. 

\subsection{Recursions from multiplying edges}\label{genrecu}
Let $\Gamma$ be a graph with two (possibly coincident) marked vertices $v$, $w$
in the same connected component. Typically, the vertices will be the
boundary of an edge $e$ of $\Gamma$. We consider the 
operation $\Gamma \leadsto \Gamma^{(m)}$ which has the effect
of inserting $m$ parallel edges joining $v$ and $w$.
Note that obviously $\Gamma^{(m+n)}=(\Gamma^{(m)})^{(n)}$.
A feature of invariants
$U$ such as the Tutte polynomial and the motivic Feynman rule $\bU$ is
that if $e$ is an edge joining $v$ and $w$ in~$\Gamma$, so that
$\Gamma^{(m)}=\Gamma_{(m+1)e}$ with the notation of \S
\ref{multTutteSec} and \S \ref{multMotSec}, then the effect 
of this operation on $U$ can be expressed
{\em consistently\/} as
\[
U(\Gamma^{(m)})=f_{m+1} U(\Gamma) + g_{m+1} U(\Gamma\smallsetminus e)
+ h_{m+1} U(\Gamma/e)\quad.
\]
The consistency requirement may be formulated as follows. Let $\cR$
be the target of the invariant $U$, and consider the evaluation map
$\cR^{\oplus 3} \to \cR$ given by
\[
\begin{pmatrix}
g \\ f \\ h 
\end{pmatrix}
\mapsto g\, U(\Gamma\smallsetminus e) + f\, U(\Gamma) + h \, U(\Gamma/e)
\quad;
\]
then the main requirement is that the effect of $\Gamma \leadsto \Gamma^{(m)}$
can be lifted to a representation of the additive monoid $\bZ^{\ge 0}$ on
$\cR^{\oplus 3}$. In other words, these operations may be 
represented by $3\times 3$ matrices $A_m$, such that
\[
A_m \cdot
\begin{pmatrix} 
0 \\ 1 \\ 0
\end{pmatrix}
= \begin{pmatrix} 
g_{m+1} \\ f_{m+1} \\ h_{m+1}
\end{pmatrix}
\mapsto
 f_{m+1} U(\Gamma) + g_{m+1} U(\Gamma\smallsetminus e)
+ h_{m+1} U(\Gamma/e)
=U(\Gamma^{(m)})
\]
and $A_0=1$, $A_{m+n}=A_m\cdot A_n$. It is also natural to assume
that 
\[
A_1 \cdot
\begin{pmatrix} 
0 \\ 1 \\ 0
\end{pmatrix} = 
\begin{pmatrix} 
1 \\ 0 \\ 0
\end{pmatrix}
\quad,
\]
reflecting the fact that $\Gamma=(\Gamma\smallsetminus e)^{(1)}$ (assuming
$\Gamma\smallsetminus e$ is marked by $\partial e$), and 
\[
A_1 \cdot
\begin{pmatrix} 
0 \\ 0 \\ 1
\end{pmatrix} = 
\begin{pmatrix} 
0 \\ 0 \\ Z
\end{pmatrix}
\quad,
\]
where $Z$ is the value of $U$ on the graph consisting of a single
looping edge. This last requirement is  
motivated by the fact that the endpoints $v$, $w$ of~$e$ coincide
in the contraction $\Gamma/e$, therefore $(\Gamma/e)^{(1)}$
consists of $\Gamma/e$ with a looping edge attached to $v=w$;
since $U$ is a Feynman rule, the effect must amount to simple
multiplication by $Z$.

The datum of the representation is captured by the generating
function
\[
A(s):=\sum_{m\ge 0} A_m \frac{s^m}{m!} = e^{A_1 s} \quad.
\]
Our task is to determine this function, or equivalently the generating
functions
\[
f(s)=\sum_{m\ge 0} f_m \frac{s^m}{m!} \quad, \quad
g(s)=\sum_{m\ge 0} g_m \frac{s^m}{m!} \quad, \quad
h(s)=\sum_{m\ge 0} h_m \frac{s^m}{m!} \quad.
\]
Here, it is natural to set
\begin{equation}
\label{initconds}
\begin{cases}
f_0 = 0 \\
f_1 = 1 
\end{cases}
\quad,\quad
\begin{cases}
g_0 = 1 \\
g_1 = 0 
\end{cases}
\quad,\quad
\begin{cases}
h_0 = 0 \\
h_1 = 0 
\end{cases}
\quad.
\end{equation}

\begin{lem}
Let $U$ be a Feynman rule, and assume that $Z$ is the value of $U$ on 
the graph consisting of a single looping edge. Then for $m\ge 0$
\[
A_m=\begin{pmatrix}
g_m & g_{m+1} & 0 \\
f_m & f_{m+1} & 0 \\
h_m & h_{m+1} & Z^m
\end{pmatrix}
\]
and the coefficients $f_m$, $g_m$, $h_m$ satisfy the following recursion
\begin{equation}
\label{recursionfghA}
\begin{cases}
f_{m+2}=f_2 f_{m+1}+g_2 f_m \\
g_{m+1}=g_2 f_m \\
h_{m+1}=h_2 f_m+Z h_m
\end{cases}
\end{equation}
for $m\ge 0$.
\end{lem}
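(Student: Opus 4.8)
The plan is to derive both claims --- the explicit shape of $A_m$ and the three recursions --- formally, taking as given the consistency hypothesis that the operation $\Gamma\leadsto\Gamma^{(m)}$ lifts to a representation of $\bZ^{\ge 0}$ on $\cR^{\oplus 3}$ with $A_0=1$ and $A_{m+n}=A_mA_n$, together with the property that $U$, being a Feynman rule, multiplies the value by $Z$ whenever a looping edge is attached to a graph.

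First I would read off the three columns of $A_m$ one at a time. Under the evaluation map the standard basis vectors of $\cR^{\oplus 3}$ correspond to $U(\Gamma\smallsetminus e)$, $U(\Gamma)$, $U(\Gamma/e)$, and the $j$-th column of $A_m$ is the ``state vector'' of the graph obtained by applying the operation $(m)$ to the $j$-th of the graphs $\Gamma\smallsetminus e$, $\Gamma$, $\Gamma/e$. Now $(\Gamma\smallsetminus e)^{(m)}=\Gamma_{me}=\Gamma^{(m-1)}$, whose value is $g_mU(\Gamma\smallsetminus e)+f_mU(\Gamma)+h_mU(\Gamma/e)$ by the very definition of the coefficients, so the first column is $(g_m,f_m,h_m)^{T}$; likewise $\Gamma^{(m)}=\Gamma_{(m+1)e}$ yields the second column $(g_{m+1},f_{m+1},h_{m+1})^{T}$; and $(\Gamma/e)^{(m)}$ is $\Gamma/e$ with $m$ looping edges attached at the vertex into which $\partial e$ collapses, so its value is $Z^mU(\Gamma/e)$ and the third column is $(0,0,Z^m)^{T}$. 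Assembling these gives the asserted form of $A_m$; plugging in the initial values \eqref{initconds} (together with $Z^0=1$) returns $A_0=1$, consistently with $A_{m+n}=A_mA_n$.

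For the recursions I would use the instance $A_{m+1}=A_1A_m$ of $A_{m+n}=A_mA_n$, where $A_1$ is the $m=1$ case of the formula just established, i.e.\ the matrix with columns $(0,1,0)^{T}$, $(g_2,f_2,h_2)^{T}$, $(0,0,Z)^{T}$. Expanding the product $A_1A_m$ and matching entries against the $(m+1)$-st instance of the formula for $A_m$, the $(1,1)$-, $(3,1)$- and $(2,2)$-entries give respectively
\[
g_{m+1}=g_2f_m,\qquad h_{m+1}=h_2f_m+Z h_m,\qquad f_{m+2}=g_{m+1}+f_2f_{m+1},
\]
and substituting the first identity into the last turns it into $f_{m+2}=f_2f_{m+1}+g_2f_m$. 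These are exactly the relations \eqref{recursionfghA}, valid for all $m\ge 0$: for $m=0$ they collapse, via \eqref{initconds}, to the trivialities $g_1=0$, $h_1=0$, $f_2=f_2$.

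None of this is computationally hard; the one step that carries the actual content, and where I would be careful, is the identification of the third column of $A_m$. The point is that contracting one among several parallel edges converts the remaining ones into looping edges, so that iterating the insertion operation on the contraction $\Gamma/e$ produces exactly $m$ loops --- whence the clean factor $Z^m$, and downstream the decoupled shape $h_{m+1}=h_2f_m+Z h_m$ of the last recursion, in contrast to the genuine three-term recursion obeyed by the $f_m$. A secondary bookkeeping point, to keep straight throughout, is the index shift $\Gamma^{(m)}=\Gamma_{(m+1)e}$, which is precisely what makes the two coefficient vectors $(g_m,f_m,h_m)$ and $(g_{m+1},f_{m+1},h_{m+1})$ appear as neighbouring columns of the single matrix $A_m$.
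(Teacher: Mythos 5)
Your proof is correct and follows essentially the same route as the paper's: the shape of $A_m$ is pinned down column by column using the monoid property $A_{m+n}=A_mA_n$ (the paper phrases this as an induction via $A_m=A_{m-1}A_1$, you phrase it via the graph-theoretic reading of each column, but these are the same computation), and the recursion is then extracted from the entries of $A_{m+1}=A_1A_m$ exactly as in the paper, including the substitution $g_{m+1}=g_2f_m$ into $f_{m+2}=g_{m+1}+f_2f_{m+1}$.
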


\begin{proof}
By assumption, 
\[
A_m\cdot \begin{pmatrix}
0 \\ 1 \\ 0
\end{pmatrix}
=\begin{pmatrix}
g_{m+1} \\ f_{m+1} \\ h_{m+1}
\end{pmatrix}
\quad\text{and}\quad
A_1=\begin{pmatrix}
0 & g_2 & 0\\
1 & f_2 & 0\\
0 & h_2 & Z
\end{pmatrix}
\quad.
\]
Assuming inductively that
$
A_{m-1}=\begin{pmatrix}
g_{m-1} & g_m & 0 \\
f_{m-1} & f_m & 0 \\
h_{m-1} & h_m & Z^m
\end{pmatrix}
$,
the fact that $A_m=A_{m-1} A_1$ shows that $A_m$ has the stated
shape. The recursion is
forced by the fact that $A_{m+1}=A_1A_m$, which gives
\begin{align*}
\begin{pmatrix}
g_{m+1} & g_{m+2} & 0\\
f_{m+1} & f_{m+2} & 0 \\
h_{m+1} & h_{m+2} & Z^{m+1}
\end{pmatrix}
&=\begin{pmatrix}
0 & g_2 & 0\\
1 & f_2 & 0\\
0 & h_2 & Z
\end{pmatrix}
\begin{pmatrix}
g_m & g_{m+1} & 0\\
f_m & f_{m+1} & 0\\
h_m & h_{m+1} & Z^m
\end{pmatrix}\\
&=
\begin{pmatrix}
g_2 f_m & g_2 f_{m+1} & 0\\
g_m+f_2 f_m & g_{m+1}+ f_2 f_{m+1} & 0\\
h_2 f_m + Z h_m & h_2 f_{m+1}+ Z h_{m+1} & Z^{m+1}
\end{pmatrix}
\end{align*}
and shows that
\[
\begin{cases}
f_{m+2} =f_2 f_{m+1} +g_{m+1}\\
g_{m+1} =g_2 f_m \\
h_{m+1} = h_2 f_m + Z h_m 
\end{cases} .
\]
Then (\ref{recursionfghA}) follows immediately.
\end{proof}

In specific cases, the recursion can often be solved by computing explicitly
$e^{A_1 s}$, which is straightforward if $A_1$ is diagonalizable. This
can be carried out easily for the the motivic Feynman rule, for which
\[
A_1=
\begin{pmatrix}
0 & \bT-1 & 0 \\
1 & \bT & 0 \\
0 & \bT+1 & \bT
\end{pmatrix}
\quad,
\]
recovering formula \eqref{medge3} in Theorem~\ref{paraledges}. It can 
also be worked out for the Tutte polynomial, for which we can choose
\begin{equation}
\label{Tutch}
A_1=
\begin{pmatrix}
0 & 1 & 0 \\
1 & 0 & 0 \\
0 & 1+y & y
\end{pmatrix} .
\end{equation}
Since the Tutte polynomial satisfies the relation $T_\Gamma=
T_{\Gamma\smallsetminus e}+T_{\Gamma/e}$, there are in fact many possible choices
for the corresponding representation. The one chosen in \eqref{Tutch} gives
\[
A(s)=\sum_{m\ge 0} A_m \frac {s^m}{m!}
=\begin{pmatrix}
0 & \frac{1-y}{1+y} & 1 \\
0 & \frac{1-y}{1+y} & -1 \\
1 & 1 & 1
\end{pmatrix}
\begin{pmatrix}
e^{ys} & 0 & 0 \\
0 & e^s & 0\\
0 & 0 & e^{-s}
\end{pmatrix}
\begin{pmatrix}
\frac 1{y-1} & \frac y{y-1} & 1 \\
\frac 12\frac{1+y}{1-y} & \frac 12\frac{1+y}{1-y} & 0 \\
\frac 12 & -\frac 12 & 0
\end{pmatrix}
\]
and correspondingly
\[ \begin{array}{rl}
f(s)= & \displaystyle{\frac{e^s-e^{-s}}2}=\sinh s \\[3mm]
g(s)= & \displaystyle{\frac{e^s+e^{-s}}2}=\cosh s \\[3mm]
h(s)= & \displaystyle{\frac{e^{ys}-e^s}{y-1}-\sinh s} . \end{array}
\]
Since the deletion--contraction relation \eqref{Tutte2} holds, 
this is equivalent to the result of Proposition~\ref{mTutteprop}.

The recursion (\ref{recursionfghA}) can be solved directly in general
by the same method for the specific cases analyzed in \S\ref{OpSec}.
The conditions translate into differential equations satisfied by the
functions $f$, $g$, $h$, and specifically
\begin{equation}
\label{diffeq}
\left\{
\begin{aligned}
f''(s)&=f_2\, f'(s)+g_2 f(s) \\
g'(s)&=g_s\, f(s) \\
h'(s)&=Z\, h(s)+h_2\, f(s)
\end{aligned}
\right.
\quad.
\end{equation}
With the initial conditions specified in (\ref{initconds}), and assuming
$f_2^2+4g_2\ne 0$, the first equation
has the solution
\[
f(s)=\frac{e^{\lambda_+ s}-e^{\lambda_- s}}{\lambda_+-\lambda_-}\quad,
\quad
\text{where $\lambda_\pm=\frac{f_2\pm \sqrt{f_2^2+4 g_2}}2$}\quad;
\]
equivalently,
\[
f_m=\frac{\lambda_+^m-\lambda_-^m}{\lambda_+-\lambda_-}
=\lambda_+^{m-1}+\lambda_+^{m-2}\lambda_-+\cdots +\lambda_-^{m-1}
\quad.
\]
The second and third equations then determine $g$ and $h$:
\[
g(s)=\frac{\lambda_+ e^{\lambda_- s}- \lambda_- e^{\lambda_+ s}}
{\lambda_+-\lambda_-}
\]
and
\[
h(s)=\frac {h_2}{\lambda_+-\lambda_-}
\left(\frac{e^{\lambda_+s}-e^{Zs}}{\lambda_+-Z}-
\frac{e^{\lambda_-s}-e^{Zs}}{\lambda_--Z}\right)
\]
if $\lambda_+\ne Z$ and $\lambda_-\ne Z$, while
\[
h(s)=\frac {h_2}{Z-\lambda}
\left(s\,e^{Zs}-\frac{e^{Z s}-e^{\lambda s}}{Z-\lambda}\right)
\]
if $\{\lambda_+,\lambda_-\}=\{Z,\lambda\}$. This last eventuality occurs
for the motivic Feynman rule.

The equations (\ref{diffeq}) highlight interesting features of the
coefficients of any solution to the multiplying edge problem, independent
of the specific context. From the general solution, we also see that
\[
\frac{f(s)}{g(s)}=\frac{e^{\lambda_+ s}-e^{\lambda_- s}}
{\lambda_+ e^{\lambda_- s}-\lambda_- e^{\lambda_+ s}}
\quad,
\]
generalizing Remark~\ref{Hirzrem}. For the Tutte polynomial (with the choice
of~\eqref{Tutch}) this function is the hyperbolic tangent.

As a last general remark, we note that the coefficients $f_m$ form
a divisibility sequence. This is clear from the expression for $f_m$
given above: $f^{(m)}_r:=\frac{f_{rm}(s)}{f_m(s)}=\frac{\lambda_+^{rm}
-\lambda_-^{rm}}{\lambda_+^m-\lambda_-^m}$.
Alternatively, it can be proved as in Corollary~\ref{LemonDivis}:
one finds that the quotients $f^{(m)}_r$ satisfy the recursion
\[
f^{(m)}_{r+2}=(f_2 f_m + 2 g_2 f_{m-1}) f^{(m)}_{r+1} - (-g_2)^m f^{(m)}_r
\quad,
\]
for all $r\ge 0$, and in particular it follows that $f_{m}(s)$ divides $f_{rm}(s)$
for all~$r\ge 0$.

\subsection{Conjectural behavior of $C_\Gamma$ under multiplication 
of edges}
A deletion/contraction rule for the invariant $C_\Gamma$ is not yet
available; however, if such a rule exists then a doubling-edge formula
for this invariant should exist, of the type considered above: if $e$ 
is an edge of $\Gamma$ joining the marked vertices, then one would
expect
\[
C_{\Gamma^{(m)}}\overset ?=
f_{m+1}\cdot C_{\Gamma}+ g_{m+1}\cdot C_{\Gamma\smallsetminus e}+h_{m+1}
\cdot C_{\Gamma/e}
\]
for suitable coefficients satisfying the stringent requirements examined
in~\S\ref{genrecu}. The fact that $C_\Gamma$ is known for banana
graphs (Example~3.8 in \cite{AluMa2}) provides then a testing ground for
this phenomenon, as well as a precise indication for what the 
needed representation should be in this case.

\begin{guess}\label{CTme}
The polynomial Feynman rule $C_\Gamma$ obeys the general
recursion formulas obtained in \S\ref{genrecu} with respect to the operation
of multiplying edges. The corresponding representation is determined by
\[
f_2 = 2T -1, \ \ \ \ g_2 = -T(T-1), \ \ \ \ h_2 =1\quad.
\]
The generating functions for the operation are
$$ \begin{array}{rl}
f(s) = & e^{Ts}- e^{(T-1)s} \\[3mm]
g(s) = & T e^{(T-1)s} - (T-1) e^{Ts} \\[3mm]
h(s) = & e^{(T-1)s} + (s-1) e^{Ts}.
\end{array} $$
\end{guess}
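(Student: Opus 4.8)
The plan is to follow exactly the template established in \S\ref{genrecu} and specialized already for the motivic Feynman rule and the Tutte polynomial: once the initial data $f_2$, $g_2$, $h_2$ (equivalently the matrix $A_1$) are fixed, the generating functions are completely determined by the recursion \eqref{recursionfghA} and the initial conditions \eqref{initconds}, so the content of the conjecture is really the assertion that these particular values of $f_2$, $g_2$, $h_2$ are the correct ones for $C_\Gamma$, together with the consistency of the resulting recursion with known values of $C_\Gamma$. Accordingly, the first step is to pin down $f_2$, $g_2$, $h_2$ from the structure forced in \S\ref{genrecu}: the requirement $A_1\cdot(0,0,1)^{\mathsf T}=(0,0,Z)^{\mathsf T}$ with $Z$ the value of $C$ on a single looping edge gives $Z=T$ (since $C_\Gamma(T)=T\,C_{\Gamma/e}(T)$ for a looping edge, by \eqref{CGammaloop}), and the requirement $A_1\cdot(0,1,0)^{\mathsf T}=(1,0,0)^{\mathsf T}$ together with $A_1\cdot(0,0,1)^{\mathsf T}=(0,0,T)^{\mathsf T}$ fixes the shape of $A_1$ up to the three unknowns $f_2$, $g_2$, $h_2$ sitting in its middle column.

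Next I would determine those three unknowns by evaluating the putative doubling formula $C_{\Gamma_{2e}}=f_2\,C_\Gamma+g_2\,C_{\Gamma\smallsetminus e}+h_2\,C_{\Gamma/e}$ on the banana graphs, for which $C_\Gamma$ is computed explicitly in Example~3.8 of \cite{AluMa2}. Concretely, take $\Gamma$ to be the $m$-banana graph for small $m$ (say $m=2,3$), so $\Gamma\smallsetminus e$, $\Gamma/e$, and $\Gamma_{2e}$ are again bananas (or trivial graphs) with known $C$-values; matching coefficients in $T$ across two or three such instances produces a linear system whose unique solution is $f_2=2T-1$, $g_2=-T(T-1)$, $h_2=1$. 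I would then substitute these into the closed-form solutions of the differential equations \eqref{diffeq}: here $\lambda_\pm=\frac{f_2\pm\sqrt{f_2^2+4g_2}}2=\frac{(2T-1)\pm 1}2$, so $\lambda_+=T$, $\lambda_-=T-1$, and since $Z=T=\lambda_+$ we are in the degenerate case $\{\lambda_+,\lambda_-\}=\{Z,\lambda\}$ with $\lambda=T-1$. Plugging $\lambda_+=T$, $\lambda_-=T-1$, $Z=T$ into the general formulas for $f(s)$, $g(s)$, $h(s)$ in \S\ref{genrecu} yields precisely $f(s)=e^{Ts}-e^{(T-1)s}$, $g(s)=T e^{(T-1)s}-(T-1)e^{Ts}$, and $h(s)=e^{(T-1)s}+(s-1)e^{Ts}$, which is the asserted statement; this last substitution step is a routine verification that I would not spell out in full.

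The main obstacle is epistemic rather than computational: since no deletion--contraction rule for $C_\Gamma$ is known, one cannot actually \emph{derive} the doubling formula the way one derives \eqref{double3} for $\bU$ from \eqref{motdelcon}. So the honest status of the statement is conjectural, and a proof proposal can at best (i) show that \emph{if} $C_\Gamma$ obeys a three-term doubling relation of the form dictated by \S\ref{genrecu} then the coefficients must be the stated ones, and (ii) verify compatibility against every independently computable family --- the bananas (Example~3.8 of \cite{AluMa2}), the bridge and looping-edge cases via \eqref{CGammabridge} and \eqref{CGammaloop}, and the numerical data of \cite{Stryker} --- and note that the resulting recursion is internally consistent (i.e. the representation axioms $A_0=1$, $A_{m+n}=A_mA_n$ hold, which is automatic from $A(s)=e^{A_1 s}$). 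I would therefore frame the write-up as: reduce to identifying $A_1$; solve the consistency constraints to get the middle column up to $(f_2,g_2,h_2)$; fit these three scalars to the banana-graph values; then read off $f(s),g(s),h(s)$ from the general solution of \eqref{diffeq} in the degenerate case $\lambda_+=Z$. The one genuinely load-bearing assumption --- that such a three-term relation exists at all --- is exactly what keeps this a conjecture rather than a theorem, and the write-up should say so plainly.
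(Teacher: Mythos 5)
Your proposal matches the paper's treatment: the statement is a conjecture precisely because no deletion--contraction rule for $C_\Gamma$ is known, and the paper likewise obtains $f_2=2T-1$, $g_2=-T(T-1)$, $h_2=1$ by fitting the universal recursion of \S\ref{genrecu} to the banana-graph values of Example~3.8 of \cite{AluMa2}, then reads off the generating functions from the general solution of \eqref{diffeq} (your computation $f_2^2+4g_2=1$, hence $\lambda_+=T=Z$, $\lambda_-=T-1$, landing in the degenerate case, is exactly right) and cites the Euler-characteristic comparison with Corollary~\ref{eulercharcor} and Stryker's data as evidence. You correctly identify the one load-bearing unproved assumption --- the existence of a three-term doubling relation for $C_\Gamma$ at all --- which is the same caveat the paper makes.
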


Since the Euler characteristic $\chi_\Gamma$ of the complement of 
$X_\Gamma$ in its ambient projective space can be recovered from
$C_\Gamma$ (cf.~Proposition~3.1 in \cite{AluMa2}), these formulae
imply generating functions for $\chi_{\Gamma^{(m)}}$. These coincide
with the formulae obtained in Corollary~\ref{eulercharcor}, providing some
evidence for Conjecture~\ref{CTme}.

Conjecture~\ref{CTme} is verified for all cases known to us: the family
of banana graphs, as well as several examples for small graphs computed
by J.~Stryker~(\cite{Stryker}). In fact, the smallest graph for which the
invariant $C_\Gamma(T)$ is not known is the triangle with doubled edges;
according to Conjecture~\ref{CTme}, the polynomial invariant for this
graph is
\[
T^6+2T^5+8T^4+2T^3+T^2-T\quad,
\]
and it follows that the CSM class of the corresponding
graph hypersurface should be
\[
4[\bP^4]+7[\bP^3]+18[\bP^2]+14[\bP]+7[\bP^0]\quad.
\]

\section{Categorification}\label{CategSec}

Various examples of categorifications of graph and link invariants
have been recently developed. These are categorical constructions with associated
(co)homology theories, from which the (polynomial) invariant is
recontructed as Euler characteristic. The most famous examples of
categorification are Khovanov homology \cite{Khovanov}, which is a 
categorification of the Jones polynomial, and graph homology, which
gives a categorification of the chromatic polynomial \cite{HeGuRo}. 
More recently, a categorification of the Tutte polynomial was also
introduced in \cite{JaHeRo}.
In the known categorifications of invariants obtained from
specializations of the Tutte polynomial, the deletion--contraction
relations manifest themselves in the form of long exact (co)homology
sequences.
Another way in which the notion of categorification found applications
to algebraic structures associated to graphs is in the context of
Hall algebras. In this context, one looks for a categorification of
a Hopf algebra, that is, an abelian category such that the given
Hopf algebra is the assoctaed Hall--Ringel algebra. In the case
of the Connes--Kreimer Hopf algebra of Feynman graphs, a suitable
categorification, which realizes it (or rather its dual Hopf algebra)
as a Ringel--Hall algebra, was recently obtained in \cite{KreSz}.
 
In view of all these results on categorification, it seems natural to
try to interpret the deletion--contraction relation described in
this paper for the motivic Feynman rules in terms of a suitable
categorification. As remarked in \S 8 of \cite{Blo}, one can think
of the motive associated to the graph $\Gamma$ and the maps induced
by edge contractions as a motivic version of graph cohomology. 
We see a similar setting here in terms of the deletion--contraction
relations we obtained in \S \ref{HypersurfSec} and \S
\ref{MilnorSec}. 

We denote by $\fm(X)$ the motive of a variety $X$, seen 
as an object in the triangulated category $\cD\cM_\Q$ 
of mixed motives of \cite{Voe}. A closed 
embedding $Y\subset X$ determines a distinguished
triangle in this category
\begin{equation}\label{distTri}
\fm(Y) \to \fm(X) \to \fm(X\smallsetminus Y) \to \fm(Y)[1].
\end{equation}
Since one thinks of motives as a universal cohomology theory for
algebraic varieties, and of classes in the Grothendieck ring as a
universal Euler characteristic, it is natural to view the motive
$\fm(X_\Gamma)\in \cD\cM_\Q$ as the ``categorification'' of the ``Euler
characteristic'' $[X_\Gamma]\in K_0(\cV_\Q)$.

The analog of the fact that the categorification of
deletion--contraction relations takes the form of long exact
cohomology sequences is then expressed in this context in the
following way.

\begin{prop}\label{distTdelcon}
For a graph $\Gamma$ with $n$ edges, 
let $\fm_\Gamma := \fm(\P^{n-1}\smallsetminus X_\Gamma)$ as an object
in $\cD\cM_\Q$.
The deletion contraction relation of Theorem~\ref{delcon} 
determines a distinguished triangle in $\cD\cM_\Q$ of the form
\begin{equation}\label{delconDM}
\fm_{\Gamma\smallsetminus e} \to 
\fm_\Gamma \to \fm(\P^{n-1}\smallsetminus 
(\overline{X_{\Gamma\smallsetminus e}} \cap \overline{X_{\Gamma/e}})) \to \fm_{\Gamma\smallsetminus e}[1], 
\end{equation}
where, as above $\overline{X}$ denotes the cone on $X$.
\end{prop}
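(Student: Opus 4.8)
The plan is to exhibit \eqref{delconDM} as the localization (Gysin) distinguished triangle \eqref{distTri} attached to a suitable closed immersion, one of whose three terms is then identified by means of the isomorphism of Theorem~\ref{isoXYthm}. I keep the notation of \S\ref{delconMotSec}: $F=\Psi_{\Gamma\smallsetminus e}$ and $G=\Psi_{\Gamma/e}$ are the polynomials for the deletion and the contraction, so that $\Psi_\Gamma=t_nF+G$; write $X:=X_\Gamma\subset\P^{n-1}$, let $\overline Y:=\overline{X_{\Gamma\smallsetminus e}}$ and $\overline Z:=\overline{X_{\Gamma/e}}$ be the cones in $\P^{n-1}$ over $X_{\Gamma\smallsetminus e},X_{\Gamma/e}\subset\P^{n-2}$, and set $V:=\overline Y\cap\overline Z$.

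First I would record, from the ideal computation \eqref{XcapYvsYcapZ}, that $V=X\cap\overline Y$ \emph{scheme-theoretically}; in particular $V$ is a closed subscheme of $\P^{n-1}$ contained in $X_\Gamma$, so that $X\smallsetminus V$ is closed in $\P^{n-1}\smallsetminus V$ and has open complement exactly $\P^{n-1}\smallsetminus X_\Gamma$ there (using $V\subseteq X$). Next comes the identification of the closed stratum: Theorem~\ref{isoXYthm} is precisely the statement that projection from $p=(0:\dots:0:1)$ induces an isomorphism $X\smallsetminus(X\cap\overline Y)\overset{\sim}{\longrightarrow}\P^{n-2}\smallsetminus X_{\Gamma\smallsetminus e}$, that is, $X\smallsetminus V\overset{\sim}{\longrightarrow}\P^{n-2}\smallsetminus X_{\Gamma\smallsetminus e}$; since $\Gamma\smallsetminus e$ has $n-1$ edges this reads $\fm(X\smallsetminus V)=\fm_{\Gamma\smallsetminus e}$. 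I then apply the localization triangle \eqref{distTri} to the closed immersion $X\smallsetminus V\hookrightarrow\P^{n-1}\smallsetminus V$, whose open complement is $\P^{n-1}\smallsetminus X_\Gamma$, obtaining the distinguished triangle
\[
\fm_{\Gamma\smallsetminus e}\longrightarrow\fm\bigl(\P^{n-1}\smallsetminus(\overline{X_{\Gamma\smallsetminus e}}\cap\overline{X_{\Gamma/e}})\bigr)\longrightarrow\fm_\Gamma\longrightarrow\fm_{\Gamma\smallsetminus e}[1],
\]
which is the distinguished triangle \eqref{delconDM}; equivalently, it is the octahedron relating the three localization triangles attached to the chain of closed immersions $V\subseteq X_\Gamma\subseteq\P^{n-1}$. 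As a consistency check, passing to Euler characteristics or to classes in the Grothendieck group of $\cD\cM_\Q$ returns the identity \eqref{XYGro1} of Corollary~\ref{projvers}, and combining this with the $\A^1$-bundle structure of the projection $\P^{n-1}\smallsetminus V\to\P^{n-2}\smallsetminus(X_{\Gamma\smallsetminus e}\cap X_{\Gamma/e})$ recovers the full relation \eqref{delconP}; the hyperplane special case handles the one-loop graphs, so that, unlike in \eqref{delconP}, no hypothesis on the number of loops is required here.

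The step that needs care is the invocation of \eqref{distTri}: the graph hypersurfaces $X_\Gamma$ are typically highly singular and all the varieties in sight are non-proper, so the localization triangle must be taken for the appropriate variance, namely that of motives with compact support (Borel--Moore motives) in Voevodsky's $\cD\cM_\Q$, for which the triangle attached to an \emph{arbitrary} closed immersion of finite-type schemes over a field of characteristic zero exists and has exactly the shape \eqref{distTri}. Once this is granted, the rest is formal: the scheme-theoretic identity $V=X\cap\overline Y$ is \eqref{XcapYvsYcapZ}, the identification $\fm(X\smallsetminus V)\cong\fm_{\Gamma\smallsetminus e}$ is Theorem~\ref{isoXYthm}, and the orientation of the triangle is fixed by the usual rotation of distinguished triangles. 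One bookkeeping remark, already implicit in Corollary~\ref{projvers}: the Lefschetz twist visible in the numerical relation \eqref{delconP} does not appear in \eqref{delconDM}, being hidden inside $\fm(\P^{n-1}\smallsetminus V)$, which is an $\A^1$-bundle over $\P^{n-2}\smallsetminus(X_{\Gamma\smallsetminus e}\cap X_{\Gamma/e})$ and therefore, by homotopy invariance, carries the extra factor of $\bL$.
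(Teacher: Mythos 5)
Your proof is correct and is essentially the paper's own argument: both apply the localization triangle \eqref{distTri} to the closed immersion $X_\Gamma\smallsetminus V\hookrightarrow \P^{n-1}\smallsetminus V$, where $V=X_\Gamma\cap\overline{X_{\Gamma\smallsetminus e}}=\overline{X_{\Gamma\smallsetminus e}}\cap\overline{X_{\Gamma/e}}$ by \eqref{XcapYvsYcapZ}, and identify the closed stratum with $\P^{n-2}\smallsetminus X_{\Gamma\smallsetminus e}$ via Theorem~\ref{isoXYthm}; you are in fact more careful than the paper in naming the correct ambient scheme for the immersion (the paper's phrase ``the inclusion of $X_\Gamma\smallsetminus(X_\Gamma\cap\overline{X_{\Gamma\smallsetminus e}})$ in $\P^{n-1}\smallsetminus X_\Gamma$'' is a slip, as these are disjoint) and in flagging that \eqref{distTri} is the compact-support localization triangle. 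The one caveat is that the triangle you derive has $\fm(\P^{n-1}\smallsetminus V)$ as its middle term and $\fm_\Gamma$ as its third, which is not literally \eqref{delconDM} --- swapping two adjacent terms is not achieved by rotating a distinguished triangle --- but the paper's own application of \eqref{distTri} yields the same ordering and is then silently rearranged, so this is a defect of the statement's orientation rather than of your argument.
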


\proof This follows from the proof of Theorem \ref{delcon}. In fact,
for the inclusion $X_\Gamma \smallsetminus (X_\Gamma \cap
\overline{X_{\Gamma\smallsetminus e}})$ in $\P^{n-1}\smallsetminus X_\Gamma$, we
have a distringuished triangle of the form
$$ \fm(X_\Gamma \smallsetminus (X_\Gamma \cap
\overline{X_{\Gamma\smallsetminus e}})) \to \fm(\P^{n-1} \smallsetminus X_\Gamma) $$
$$ \to \fm(\P^{n-1} \smallsetminus (X_\Gamma \cap
\overline{X_{\Gamma\smallsetminus e}})) \to \fm(X_\Gamma \smallsetminus
(X_\Gamma \cap \overline{X_{\Gamma\smallsetminus e}}))[1]. $$
We then use the isomorphisms $$ X_\Gamma \smallsetminus (X_\Gamma \cap
\overline{X_{\Gamma\smallsetminus e}}) \simeq \P^{n-2}\smallsetminus X_{\Gamma\smallsetminus e} $$
and
$$ X_\Gamma \cap \overline{X_{\Gamma\smallsetminus e}} \simeq
\overline{X_{\Gamma/e}} \cap \overline{X_{\Gamma\smallsetminus e}} $$
proved in Theorem \ref{delcon} to get the triangle \eqref{delconDM}.
\endproof

This means that we can upgrade at the level of the category
$\cD\cM_\Q$ of mixed motives some of the arguments that we
formulated in the previous sections at the level of classes 
in the Grothendieck ring of varieties. 

\begin{cor}\label{DMmixedTate}
Let $\Gamma_{me}$ denote the graph obtained from a given graph
$\Gamma$ by replacing an edge $e$ by $m$ parallel edges, as in 
\S \ref{multMotSec}. If $\fm_\Gamma$, $\fm_{\Gamma\smallsetminus e}$ and
$\fm_{\Gamma/e}$ belong to the sub-triangulated category $\cD\cM\cT_\Q
\subset \cD\cM_\Q$ of mixed Tate motives, then the motive
$\fm_{\Gamma_{me}}$ also belongs to $\cD\cM\cT_\Q$. 
\end{cor}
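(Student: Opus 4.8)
The plan is to prove the statement by induction on $m$, using the distinguished triangle of Proposition~\ref{distTdelcon} to reduce, in a single step, the membership of $\fm_{\Gamma_{me}}$ in $\cD\cM\cT_\Q$ to the membership of the three ``seed'' motives $\fm_\Gamma$, $\fm_{\Gamma\smallsetminus e}$, $\fm_{\Gamma/e}$, which are mixed Tate by hypothesis. This is the categorified counterpart of the phenomenon underlying Theorem~\ref{paraledges}: that $\bU(\Gamma_{me})$ depends only on $\bU(\Gamma)$, $\bU(\Gamma\smallsetminus e)$, $\bU(\Gamma/e)$ because the intersection term in the deletion--contraction relation cancels. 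Throughout, all that is used about $\cD\cM\cT_\Q$ is that it is a thick triangulated subcategory of $\cD\cM_\Q$ closed under Tate twists: hence it satisfies the two-out-of-three property along distinguished triangles, it contains the motives of all linear spaces, and --- by $\bA^1$-homotopy invariance --- the motive of the total space of a Zariski-locally trivial affine-space bundle over a mixed Tate base is again mixed Tate. The cases in which $e$ is a bridge or a looping edge are disposed of at once: the geometric description recalled in \S5 of \cite{AluMa1} and \S2.2 of \cite{AluMa2} exhibits $X_{\Gamma_{me}}$ as a union of coordinate hyperplanes together with a cylinder over $X_{\Gamma\smallsetminus e}$ (resp.\ over the hypersurface of the $m$-th banana graph, which is mixed Tate by \cite{AluMa1}), so iterating the localization triangle over this union gives $\fm_{\Gamma_{me}}\in\cD\cM\cT_\Q$. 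One may therefore assume that $e$ is neither a bridge nor a looping edge.

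For the induction, $m=0$ yields $\fm_{\Gamma\smallsetminus e}$ and $m=1$ yields $\fm_\Gamma$, both mixed Tate by hypothesis. Fix $m\ge 1$ and choose one of the $m+1$ parallel copies $e'$ of $e$ in $\Gamma_{(m+1)e}$; since $e'$ is parallel to at least one other edge and has distinct endpoints, it is neither a bridge nor a looping edge, and $\Gamma_{(m+1)e}$ has at least two loops (as $\Gamma$, having a non-bridge edge, is not a forest). Moreover $\Gamma_{(m+1)e}\smallsetminus e'=\Gamma_{me}$, while $\Gamma_{(m+1)e}/e'$ is $\Gamma/e$ with $m$ looping edges attached, whose hypersurface complement motive lies in $\cD\cM\cT_\Q$ if $\fm_{\Gamma/e}$ does (the looping edges introduce only a coordinate hyperplane and affine-space factors, handled as above). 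Proposition~\ref{distTdelcon} applied to $(\Gamma_{(m+1)e},e')$ then gives the distinguished triangle
\[ \fm_{\Gamma_{me}}\to\fm_{\Gamma_{(m+1)e}}\to\fm\bigl(\P^{N}\smallsetminus(\overline{X_{\Gamma_{me}}}\cap\overline{X_{\Gamma_{(m+1)e}/e'}})\bigr)\to\fm_{\Gamma_{me}}[1], \]
where $N+1$ is the number of edges of $\Gamma_{(m+1)e}$. By the inductive hypothesis $\fm_{\Gamma_{me}}\in\cD\cM\cT_\Q$, so it suffices to show that the third term lies in $\cD\cM\cT_\Q$.

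To control that term one repeats, at the level of distinguished triangles, the ideal-theoretic computation from the proof of Proposition~\ref{doubling}. Writing $e_1,\dots,e_m$ for the copies of $e$ in $\Gamma_{me}$, the ideal of $\overline{X_{\Gamma_{me}}}\cap\overline{X_{\Gamma_{(m+1)e}/e'}}$ is $(\Psi_{\Gamma_{me}},\,t_{e_1}\cdots t_{e_m}\,\Psi_{\Gamma/e})$, and restricting $\Psi_{\Gamma_{me}}$ to each hyperplane $\{t_{e_i}=0\}$ and to $\{\Psi_{\Gamma/e}=0\}$ --- generalizing the computation that yields \eqref{Xo} --- exhibits this intersection as a finite union of closed subvarieties, each of which is a linear subspace, a hyperplane section of a cylinder over $X_{\Gamma/e}$, or the cylinder over $X_{\Gamma\smallsetminus e}\cap X_{\Gamma/e}$ (the last being independent of $m$, since $\Psi_{\Gamma\smallsetminus e}$ and $\Psi_{\Gamma/e}$ involve none of the variables $t_{e_i}$). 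Iterating the localization (Mayer--Vietoris) triangle over this union, and discarding cylinder factors by homotopy invariance (the linear projection forgetting the $t_{e_i}$ realizes each of these complements as an affine-space bundle over a complement in a smaller projective space), one sees that the third term of the displayed triangle is built, by cones, shifts and Tate twists, from motives of linear-space complements (mixed Tate), from $\fm_{\Gamma/e}$ and its Tate twists (mixed Tate by hypothesis), and from the single motive $\fm\bigl(\P^{n-2}\smallsetminus(X_{\Gamma\smallsetminus e}\cap X_{\Gamma/e})\bigr)$, with $n=\# E(\Gamma)$. If $\Gamma$ has a single loop, then $\Gamma\smallsetminus e$ is a forest, $X_{\Gamma\smallsetminus e}=\emptyset$, and this motive is $\fm(\P^{n-2})$, trivially mixed Tate; otherwise Proposition~\ref{distTdelcon} applied to $(\Gamma,e)$ gives, using \eqref{XcapYvsYcapZ}, the triangle $\fm_{\Gamma\smallsetminus e}\to\fm_\Gamma\to\fm\bigl(\P^{n-1}\smallsetminus(\overline{X_{\Gamma\smallsetminus e}}\cap\overline{X_{\Gamma/e}})\bigr)\to\fm_{\Gamma\smallsetminus e}[1]$, whose first two terms are mixed Tate by hypothesis and whose third term is an affine-line bundle over $\P^{n-2}\smallsetminus(X_{\Gamma\smallsetminus e}\cap X_{\Gamma/e})$, hence has the same (mixed Tate) motive. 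In either case $\fm\bigl(\P^{n-2}\smallsetminus(X_{\Gamma\smallsetminus e}\cap X_{\Gamma/e})\bigr)\in\cD\cM\cT_\Q$, so $\fm_{\Gamma_{(m+1)e}}\in\cD\cM\cT_\Q$, and the induction closes.

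The only genuinely non-formal ingredient is the distinguished triangle of Proposition~\ref{distTdelcon}, which is already established. The main effort --- and the step requiring the most care --- lies in upgrading the Grothendieck-ring inclusion--exclusion identities of the proof of Proposition~\ref{doubling}, which concern unions of genuinely non-disjoint closed subschemes, to iterated localization triangles in $\cD\cM_\Q$, and in verifying that the elementary class identities used there (of the shape $[\widehat X']=\bL\,[\widehat X]$, $[H\cap\widehat X']=[\widehat X]$, and $[\A^{N}\smallsetminus\widehat S]=(\bL-1)\,[\P^{N-1}\smallsetminus S]$) are shadows of motivic homotopy invariance, so that membership in $\cD\cM\cT_\Q$ propagates through each of them. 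The delicate point to watch is that every intermediate object produced in the decomposition of $\overline{X_{\Gamma_{me}}}\cap\overline{X_{\Gamma_{(m+1)e}/e'}}$ stays within the controllable class --- reducing ultimately only to $\fm_\Gamma$, $\fm_{\Gamma\smallsetminus e}$, $\fm_{\Gamma/e}$ and motives of linear spaces --- and it is precisely the cancellation established in Proposition~\ref{doubling} that guarantees this.
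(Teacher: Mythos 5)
Your proposal is correct and follows essentially the same route as the paper's proof: it uses the distinguished triangle of Proposition~\ref{distTdelcon} together with the ideal-theoretic decomposition \eqref{Xo} from Proposition~\ref{doubling}, reduces the general case to iterated edge-doubling by induction on $m$, and controls the one problematic term $\fm\bigl(\P^{n-2}\smallsetminus(X_{\Gamma\smallsetminus e}\cap X_{\Gamma/e})\bigr)$ by the two-out-of-three property applied to the deletion--contraction triangle for $(\Gamma,e)$ itself. You simply spell out in more detail the steps the paper compresses into ``the other cases can be handled similarly'' and ``this can be done again as in Proposition~\ref{doubling}.''
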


\proof It suffices to show that the result holds for
$\Gamma_{2e}$. We look at the case where $e$ is neither 
a bridge nor a looping edge. The other cases can be handled
similarly. One follows the same argument of
Proposition~\ref{doubling}, written in terms of the 
distinguished triangle \eqref{delconDM}, which is here of the form 
$$ 
\fm_\Gamma \to \fm_{\Gamma_{2e}}\to \fm(\P^{n-1}\smallsetminus (
\overline{X_\Gamma} \cap \overline{X_{\Gamma_o}})) \to \fm_\Gamma [1],
$$
where $\Gamma_o$ is the graph obtained by attaching a looping edge at
the vertex $e$ is contracted to in the graph $\Gamma/e$. Since
$\cD\cM\cT_\Q$ is a sub-triangulated category of $\cD\cM_\Q$, to know
that $\fm_{\Gamma_{2e}}$ is (isomorphic to) an object in
$\cD\cM\cT_\Q$ it suffices to know that the remaining two terms of the
distinguished triangle belong to $\cD\cM\cT_\Q$. This requires
expressing $\fm(\P^{n-1}\smallsetminus (
\overline{X_\Gamma} \cap \overline{X_{\Gamma_o}}))$ in terms of
mixed Tate motives. This can be done again as in
Proposition~\ref{doubling}, using again \eqref{Xo} to control
the term $\fm(\P^{n-1}\smallsetminus (
\overline{X_\Gamma} \cap \overline{X_{\Gamma_o}}))$ in terms 
of another distinguished triangle involving $\fm_{\Gamma\smallsetminus e}$ and
$\fm_{\Gamma/e}$. 
\endproof

Similarly, we can lift at this motivic level the statement of
Corollary \ref{MTMchains} and the construction of the ``lemonade
graphs'' of \S \ref{lemonadeSec}. 

\begin{cor}\label{MTMchainsDM}
The motives $\fm_\Gamma$ of graphs $\Gamma$ that are polygon chains 
belong to the subcategory $\cD\cM\cT_\Q$ of mixed Tate motives.
Moreover, if $\Gamma$ is a graph such that $\fm_\Gamma$,
$\fm_{\Gamma\smallsetminus e}$, and $\fm_{\Gamma/e}$ are onjects in the subcategory
$\cD\cM\cT_\Q$ of mixed Tate motives, all the graphs of the form
$\Gamma^\Lambda_m$, obtained as in Proposition \ref{lemonth2}
by attaching a lemon graph to the edge $e$ also have
$\fm_{\Gamma^\Lambda_m}$ in $\cD\cM\cT_\Q$.
\end{cor}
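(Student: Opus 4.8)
The plan is to lift to the triangulated category $\cD\cM_\Q$ the two Grothendieck-ring inputs that were used in \S\ref{lemonSec}--\S\ref{lemonadeSec}: the deletion--contraction relation, now available as the distinguished triangle \eqref{delconDM} of Proposition~\ref{distTdelcon}, and the behaviour under doubling an edge, which has already been lifted in Corollary~\ref{DMmixedTate}. The only additional ingredient is the observation that the elementary moves which multiply $\bU(\Gamma)$ by $\bT+1$ — attaching a tail, subdividing an edge (see \S5 of \cite{AluMa1}) — act on $\fm_\Gamma$ by the Tate twist $(1)[2]$: if $X_{\Gamma'}\subset\P^N$ is the projective cone over $X_\Gamma\subset\P^{N-1}$, then projection from the cone point exhibits $\P^N\smallsetminus X_{\Gamma'}$ as a line bundle over $\P^{N-1}\smallsetminus X_\Gamma$, whence $\fm_{\Gamma'}\cong\fm_\Gamma(1)[2]$, and in particular $\fm_{\Gamma'}\in\cD\cM\cT_\Q$ if and only if $\fm_\Gamma\in\cD\cM\cT_\Q$.

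First I would prove that $\fm_{\Lambda_m}\in\cD\cM\cT_\Q$ for every lemon graph $\Lambda_m$, by induction on~$m$. The cases $m=0,1$ are immediate: $X_{\Lambda_0}\subset\P^0$ is empty and $X_{\Lambda_1}\subset\P^2$ is a hyperplane, so $\fm_{\Lambda_0}$ and $\fm_{\Lambda_1}$ are motives of affine spaces. For the inductive step I would use, as in the proof of Theorem~\ref{lemonth}, that $\Lambda_{m+1}$ is obtained from $\Lambda_m$ by doubling a non-bridge edge~$e$ and then subdividing the newly created edge; by the twist remark it suffices to show $\fm_{(\Lambda_m)_{2e}}\in\cD\cM\cT_\Q$, and this follows from Corollary~\ref{DMmixedTate} applied to $\Gamma=\Lambda_m$, provided $\fm_{\Lambda_m}$, $\fm_{\Lambda_m\smallsetminus e}$ and $\fm_{\Lambda_m/e}$ all lie in $\cD\cM\cT_\Q$. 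The first is the inductive hypothesis; for the other two, $X_{\Lambda_m\smallsetminus e}$ is the cone over $X_{\Lambda_{m-1}}$ and $X_{\Lambda_m}$ is the cone over $X_{\Lambda_m/e}$ (this is the geometry behind the identities $\bU(\Lambda_m\smallsetminus e)=(\bT+1)\bU(\Lambda_{m-1})$ and $\bU(\Lambda_m/e)=\bU(\Lambda_m)/(\bT+1)$ used in Theorem~\ref{lemonth}), so $\fm_{\Lambda_m\smallsetminus e}\cong\fm_{\Lambda_{m-1}}(1)[2]$ and $\fm_{\Lambda_m}\cong\fm_{\Lambda_m/e}(1)[2]$, and both are mixed Tate by the inductive hypothesis. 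The statement for polygon chains then follows at once: by Lemma~\ref{chainclasslem} — or rather by the geometric reduction preceding it — the graph hypersurface of a chain of polygons is an iterated cone over $X_{\Lambda_m}$, so $\fm_\Gamma$ is a Tate twist of $\fm_{\Lambda_m}$, hence in $\cD\cM\cT_\Q$.

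For the lemonade graphs $\Gamma^\Lambda_m$ the argument is parallel, now with $\Gamma$ fixed and $\fm_\Gamma$, $\fm_{\Gamma\smallsetminus e}$, $\fm_{\Gamma/e}$ assumed to lie in $\cD\cM\cT_\Q$; I would induct on~$m$. The case $m=0$ is the hypothesis, and $\fm_{\Gamma^\Lambda_1}$ is obtained from $\fm_\Gamma$ by doubling~$e$ — for which Corollary~\ref{DMmixedTate} applies directly, its three hypotheses being exactly our assumptions — followed by an edge subdivision. For $m\ge 1$ one uses the recursion from the proof of Proposition~\ref{lemonth2}, by which $\Gamma^\Lambda_{m+1}$ arises from $\Gamma^\Lambda_m$ by doubling the relevant non-bridge edge~$e'$ of the lemon part and subdividing; as before it suffices to know that $\fm_{\Gamma^\Lambda_m}$, $\fm_{\Gamma^\Lambda_m\smallsetminus e'}$, $\fm_{\Gamma^\Lambda_m/e'}\in\cD\cM\cT_\Q$, where the last two are $\fm_{\Gamma^\Lambda_{m-1}}(1)[2]$ and satisfy $\fm_{\Gamma^\Lambda_m}\cong\fm_{\Gamma^\Lambda_m/e'}(1)[2]$ (the geometric content of the deletion/contraction formulae used in Proposition~\ref{lemonth2}), hence are mixed Tate by the inductive hypothesis. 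Applying Corollary~\ref{DMmixedTate} and then the subdivision twist gives $\fm_{\Gamma^\Lambda_{m+1}}\in\cD\cM\cT_\Q$, closing the induction.

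I expect the only delicate point to be the twist statement of the first paragraph: one must verify that, with the normalization of $\fm$ implicit in the triangle \eqref{distTri} and Proposition~\ref{distTdelcon}, the coning move really corresponds to $(1)[2]$ in $\cD\cM_\Q$ (this is essentially the localization, or projective-bundle, formula), and one must keep track that each auxiliary graph $\Lambda_m\smallsetminus e$, $\Lambda_m/e$, $\Gamma^\Lambda_m\smallsetminus e'$, $\Gamma^\Lambda_m/e'$ appearing in the induction is manifestly a cone over a graph already handled, so that the induction closes without circularity. This is bookkeeping rather than a genuine obstacle: the hard content, the control of the intersection term $\overline{X_\Gamma}\cap\overline{X_{\Gamma_o}}$, has already been isolated in Corollary~\ref{DMmixedTate} via \eqref{Xo}.
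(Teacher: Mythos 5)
Your proof is correct and follows exactly the route the paper intends: the paper offers no argument for this corollary beyond the remark that one ``lifts'' Corollary~\ref{MTMchains} and \S\ref{lemonadeSec} to $\cD\cM_\Q$, and your write-up is a faithful expansion of that, combining Corollary~\ref{DMmixedTate} with the observation that the coning moves (tails, edge subdivisions) act by a Tate twist and hence preserve membership in $\cD\cM\cT_\Q$. The one point you flag --- whether the twist is $(1)[2]$ or trivial --- depends only on whether $\fm$ is read as the compactly supported motive (for which the localization triangle \eqref{distTri} holds) or the homotopy-invariant one, and as you note it is immaterial for mixed-Tate-ness.
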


Another possible way of formulation the deletion--contraction
relations of Theorem \ref{delcon} at the level of the triangulated
category of mixed motives, in the form of distinguished triangles,
would be to use the geometric description of the deletion--contraction
relations given in \S \ref{MilnorSec} in terms of the blowup diagram
\eqref{blowupdiag} and used distinguished triangles in $\cD\cM_\Q$
associated to blowups. 

\smallskip

A related question is then to provide a categorification for the
polynomial invariant $C_\Gamma(T)$. This ties up with the question of
what type of deletion--contraction relation this invariant satisfies
by reformulating the question in terms of a possible long exact 
cohomology sequence.


\bigskip

{\bf Acknowledgments.} The results in this article were catalyzed by a
question posed to the first author by Michael Falk during the Jaca
`Lib60ber' conference, in June 2009. We thank him and the organizers
of the conference, and renew our best wishes to Anatoly Libgober.
We thank Friedrich Hirzebruch for pointing out Remark~\ref{Hirzrem}
to us, thereby triggering the thoughts leading to much of the material
in \S\ref{UnivSec}. We also thank Don Zagier for a useful conversation
on divisibility sequences and generating functions.
The second author is partially supported by NSF grants DMS-0651925 and
DMS-0901221. This work was carried out during a stay of the authors at the
Max Planck Institute in July~2009. 

\newpage

\end{document}